\documentclass[12pt,onecolumn]{IEEEtran}

\usepackage[T1]{fontenc}
\usepackage{amsmath,amssymb,amsthm,mathtools}
\usepackage{graphicx}
\usepackage{float}
\usepackage{cite}
\usepackage{microtype}
\usepackage{xcolor}
\usepackage[hidelinks]{hyperref}

\hypersetup{
  pdftitle={Second-Order Expansion of Privacy Amplification Under f-Divergence Criteria},
  pdfauthor={Mario Berta, Hao-Chung Cheng, and Marco Tomamichel},
  colorlinks=true,
  linkcolor=black,
  citecolor=black,
  urlcolor=blue!55!black
}

\newtheorem{theorem}{Theorem}
\newtheorem{proposition}[theorem]{Proposition}
\newtheorem{lemma}[theorem]{Lemma}
\newtheorem{corollary}[theorem]{Corollary}
\newtheorem{fact}[theorem]{Fact}
\theoremstyle{definition}
\newtheorem{definition}[theorem]{Definition}
\theoremstyle{remark}

\DeclareMathOperator{\Var}{Var}
\DeclareMathOperator{\supp}{supp}
\newcommand{\E}{\mathbb{E}}
\newcommand{\Prb}{\mathbb{P}}
\newcommand{\ind}{\mathbf{1}}
\newcommand{\cX}{\mathcal{X}}
\newcommand{\cY}{\mathcal{Y}}
\newcommand{\cZ}{\mathcal{Z}}
\newcommand{\cS}{\mathcal{S}}
\newcommand{\Fprof}{F}
\newcommand{\Gprof}{G}

\begin{document}

\title{Second-Order Expansion of Privacy Amplification
Under $f$-Divergence Criteria}

\author{Mario Berta, Hao-Chung Cheng, and Marco Tomamichel%
\thanks{M. Berta is with the Institute for Quantum Information, RWTH Aachen
University, Germany and the Department of Computing, Imperial College London, United Kingdom (e-mail: berta@physik.rwth-aachen.de).}%
\thanks{H.-C. Cheng is with the Department of Electrical Engineering and the
Graduate Institute of Communication Engineering, National Taiwan University,
Taipei 106, Taiwan (e-mail: haochung@ntu.edu.tw).}%
\thanks{M. Tomamichel is with the Department of Electrical and Computer
Engineering and the Centre for Quantum Technologies, National University of
Singapore, Singapore 117583 (e-mail: marco.tomamichel@nus.edu.sg).}}

\maketitle

\begin{abstract}
We derive the second-order asymptotics of randomness extraction from memoryless sources with side information under security criteria based on a broad class of Csisz\'ar $f$-divergences, treating both a fixed reference side-information marginal and optimization over that marginal. The conditional varentropy decomposes into fluctuations of the conditional entropy across different values of the side information and the average variance of the conditional surprisal for each value. Without marginal optimization, these contributions yield a Gaussian-mixture second-order profile. With marginal optimization, they combine into the total conditional varentropy, yielding a single Gaussian profile. As corollaries, we obtain second-order expansions for Rényi-entropy criteria of all orders $\alpha\in(0,1)$ and recover the known expansion for total variation distance.
\end{abstract}

\begin{IEEEkeywords}
Privacy amplification, second-order asymptotics, $f$-divergence, side
information, R\'enyi divergence, total variation distance, two-universal
hashing, intrinsic randomness.
\end{IEEEkeywords}

\begingroup
\par\addvspace{0.5\baselineskip}\centering\small
\textbf{Use of Artificial Intelligence}\par\addvspace{0.5\baselineskip}
\endgroup
\begin{quotation}\small
OpenAI Codex with ChatGPT 5.6 Sol was instrumental throughout this project
and performed most of the work on the literature search, exploration and
checking of proof strategies, initial manuscript preparation, and development
and audit of the Lean formalization.  The authors provided significant
guidance during the revision of the manuscript.  They have reviewed the
resulting manuscript and formalization and take responsibility for their
contents and any remaining errors.
\par
\end{quotation}

\begingroup
\par\addvspace{0.5\baselineskip}\centering\small
\textbf{Formal Verification}\par\addvspace{0.5\baselineskip}
\endgroup
\begin{quotation}\small
Every numbered result and definition has a corresponding declaration in an
accompanying Lean~4.33.1 development using Mathlib~4.33.1
\cite{de_moura_ullrich_2021,mathlib_2020}.  The development is checked by the
Lean kernel and introduces no project-specific axioms.
Theorem~\ref{thm:main-f} and its corollaries are formalized conditional only
on the finite-valued Berry--Esseen statement in Fact~\ref{fact:BE}.  The
source code, dependency graphs, and statement-by-statement correspondence are
available in the accompanying GitHub repository and archived on Zenodo
\cite{berta_cheng_tomamichel_lean_2026}.
\par
\end{quotation}
\vspace{0.67ex}

\section{Introduction}

\subsection{Motivation and problem setting}

Privacy amplification is the cryptographic task of turning partially secret
data into a shorter, nearly uniform secret key.  Let Alice observe $X$ while
an adversary observes correlated side information $Y$.  A public random seed
$S$, independent of $(X,Y)$, selects a hash $\varphi_s:\cX\to\cZ$, where
$M:=|\cZ|$, and Alice outputs $Z=\varphi_S(X)$.  Because the seed is public,
the adversary's complete view is $(Y,S)$.  Security requires $Z$ to be uniform
and independent of this view, so the ideal law is
\begin{equation}
 U_Z\times P_Y\times P_S,
 \label{eq:ideal-law}
\end{equation}
where $U_Z$ denotes the uniform distribution on $\cZ$.
We call the deviation of the actual law $P_{ZYS}$ from this ideal law,
measured by the chosen security criterion, the leakage.

This formulation originates in the cryptographic literature.  Bennett,
Brassard, and Robert introduced privacy amplification by public discussion
for parties sharing randomness about which a computationally unbounded
eavesdropper has partial information
\cite{bennett_brassard_robert_1988}.  Their constructions used universal
hashing, introduced by Carter and Wegman \cite{carter_wegman_1979}.  In
complexity theory, Impagliazzo, Levin, and Luby established the result now
known as the leftover-hash lemma, showing that universal hashing gives strong
seeded extraction in statistical, or total-variation, distance
\cite{impagliazzo_levin_luby_1989}.  Bennett, Brassard, Cr\'epeau, and Maurer
subsequently developed a general privacy-amplification framework based on
conditional collision entropy and universal hashing
\cite{bennett_brassard_crepeau_maurer_1995}.  In parallel, the
information-theoretic problem usually called intrinsic randomness or
random-number generation was studied without side information by Vembu and
Verd\'u \cite{vembu_verdu_1995}.  For arbitrary sources, not assumed
stationary or ergodic, they identified the optimal fixed-length rate with the
inf-entropy rate.  Further developments for source models without side
information include
\cite{hayashi_2008,nomura_han_2013,uyematsu_matsuta_2017,hayashi_watanabe_2016}.
For approximation criteria closer to those considered here, Yu and Tan
studied random-variable simulation under R\'enyi divergences of all orders,
with unconditioned intrinsic randomness as the specialization in which the
target law is uniform \cite{yu_tan_2019}.  Nomura subsequently established
first-, second-, and optimistic-order information-spectrum formulas for
unconditioned intrinsic randomness under a subclass of $f$-divergences
\cite{nomura_2020}; see also \cite{nomura_yagi_2024}.  When side information
is absent, optimizing its reference marginal is vacuous, and the i.i.d.
specialization of our result recovers Nomura's second-order formula.

Throughout, whenever we condition on a finite-valued random variable, its
alphabet is identified with its support.  Thus, in particular,
$P_Y(y)>0$ for every $y\in\cY$; zero-mass values are removed without changing
any operational quantity.  For a memoryless source with side information,
define the conditional surprisal and conditional entropy by
\begin{equation}
 \imath_{X|Y}(x|y):=-\log P_{X|Y}(x|y),
 \qquad
 H:=\E_{XY}[\imath_{X|Y}(X|Y)]=H(X|Y)_P.
 \label{eq:conditional-surprisal-entropy}
\end{equation}
All logarithms are base two.  The fundamental first-order statement is that
the amount of asymptotically perfect randomness extractable from $n$
independent copies is $nH+o(n)$.  At this level the answer is governed by the
conditional entropy and is independent of the particular standard security
criterion used to measure leakage and express convergence to the ideal law.

Refinements that quantify the speed of this convergence can depend on the
security criterion.  The relevant fluctuations of the conditional surprisal are
\begin{align}
 V&:=\Var_{XY}[\imath_{X|Y}(X|Y)],
 \label{eq:varentropy-total}\\
 V_1&:=\Var_Y\!\left[\E_{X|Y}[\imath_{X|Y}(X|Y)]\right],
 \label{eq:varentropy-between}\\
 V_2&:=\E_Y\!\left[\Var_{X|Y}[\imath_{X|Y}(X|Y)]\right].
 \label{eq:varentropy-within}
\end{align}
The law of total variance gives $V=V_1+V_2$.  For total variation distance, the
exact second-order expansion is well known: if
$\ell_{\rm TV}^\uparrow(\delta;P_{XY}^{\times n})$ denotes the largest extractable output
length with leakage at most $\delta\in(0,1)$, then
\begin{equation}
 \ell_{\rm TV}^\uparrow(\delta;P_{XY}^{\times n})
 =nH+\sqrt{nV}\,\Phi^{-1}(\delta)+o(\sqrt n),
 \label{eq:tv-second-order-intro}
\end{equation}
where $\Phi^{-1}$ is the standard Gaussian quantile.  Without side
information, Hayashi obtained the corresponding second-order
intrinsic-randomness expansion \cite{hayashi_2008}.  With side
information and total variation, Watanabe and Hayashi obtained the sharper
$O(\log n)$ remainder in the nondegenerate case
\cite[Theorem~3]{watanabe_hayashi_2013}.  If $V=0$, the conditional
surprisal is constant on its support, and two-universal hashing together with
the support-size converse gives
$\ell_{\rm TV}^\uparrow(\delta;P_{XY}^{\times n})=nH+O(1)$; hence
\eqref{eq:tv-second-order-intro} also holds in this case.

Total variation gives statistical indistinguishability between the real and
ideal experiments.  More generally, every $f$-divergence is nonincreasing
under subsequent classical processing by the data-processing inequality, so
$f$-divergences form a natural broader class of security criteria.  Together
with simple monotone transformations, this framework covers both total
variation distance and R\'enyi divergences of orders $\alpha\in(0,1)$.  The
resulting extraction criteria have the same first-order asymptotics but need
not share their refined asymptotics.

Let $f:[0,\infty)\to\mathbb R$ be continuous and convex, and suppose that\footnote{More generally, if $f$ satisfies
$\lim_{t\to\infty} f(t)/t=c<\infty$, one may replace $f$ by the equivalent generator
$\tilde f(t):=f(t)-c(t-1)$. 
This affine transformation leaves the corresponding $f$-divergence unchanged, i.e.,~$D_{\tilde{f}}(P\|Q)=D_f(P\|Q)$, while $\lim_{t\to\infty}\tilde{f}(t)/t=0$~\cite{nomura_yagi_2024}.}
\begin{equation}
 f(1)=0,
 \qquad
 \lim_{t\to\infty}\frac{f(t)}{t}=0.
 \label{eq:f-assumptions}
\end{equation}
These assumptions force $f$ to be nonincreasing, though not necessarily
strictly decreasing.  We call a generator satisfying all these conditions
\emph{admissible}.  For probability distributions on a common finite
alphabet, define the $f$-divergence \cite{csiszar_1967}
\begin{equation}
 D_f(P\|Q):=\sum_{\omega:Q(\omega)>0}
 Q(\omega)f\!\left(\frac{P(\omega)}{Q(\omega)}\right).
 \label{eq:f-divergence}
\end{equation}
The second condition in \eqref{eq:f-assumptions} makes the contribution of
points with $Q(\omega)=0$ vanish, so \eqref{eq:f-divergence} is the complete
formula.  Moreover, $0\le f(q)\le f(0)$ for $q\in[0,1]$.  The generator need
not be pointwise nonnegative for $t>1$, while $D_f(P\|Q)\ge0$ as usual.

For a probability law $P_{ZYS}$, define the fixed- and optimized-marginal
criteria by
\begin{align}
 D_f^\uparrow(P_{ZYS})
 &:=D_f(P_{ZYS}\|U_Z\times P_{YS}),
 \label{eq:security-criteria-up}\\
 D_f^\downarrow(P_{ZYS})
 &:=\inf_{Q_{YS}}D_f(P_{ZYS}\|U_Z\times Q_{YS}).
 \label{eq:security-criteria-down}
\end{align}
where the infimum is over all probability distributions on the adversary's
view $(Y,S)$.  For the joint law induced by the extraction procedure,
$P_{YS}=P_Y\times P_S$, so the fixed-marginal criterion $D_f^\uparrow$
compares the real experiment directly with the ideal law
\eqref{eq:ideal-law}.  The optimized criterion $D_f^\downarrow$ instead
measures divergence to the entire set of laws in which $Z$ is uniform and
independent of the adversary's view.  Proximity to any law in this set
witnesses small leakage, and the closest such law need not have the actual
marginal $P_{YS}$.  Thus $D_f^\downarrow$ is also a natural security
criterion, and it is never larger than $D_f^\uparrow$.  Restricting its
infimum to laws of the form $Q_{YS}=Q_Y\times P_S$ does not change its optimal
second-order limit: achievability uses such a product reference, whereas the
converse allows an arbitrary joint reference law $Q_{YS}$.

\subsection{Main findings and discussion}

Let $P_{XY}$ again denote the joint law of one source symbol $X$ and the
adversary's side information $Y$.  For $n$ independent copies, the source law
is $P_{XY}^{\times n}$.  Write $M_n=|\cZ_n|$ and consider the central-limit
scale
\begin{equation}
 \log M_n=nH+\sqrt n\,L+o(\sqrt n),
 \qquad L\in\mathbb R.
 \label{eq:second-order-rate}
\end{equation}
For an $M_n$-point output, $d_f^\uparrow(M_n;P_{XY}^{\times n})$ and
$d_f^\downarrow(M_n;P_{XY}^{\times n})$ denote the smallest achievable values
of $D_f^\uparrow$ and $D_f^\downarrow$, respectively, where in both cases the
optimization is over seeded hash functions.  We are interested in the exact
$n\to\infty$ limits of these quantities under
\eqref{eq:second-order-rate}; in general, they need not vanish.

The difference between the $D_f^\uparrow$ and $D_f^\downarrow$ criteria is
already visible in the form of the answer.  To present the generic case
without endpoint qualifications, suppose for now that $V_1,V_2>0$; the
formal result below also covers the endpoints.  Write $\Phi$ for the standard
normal distribution function, let $G\sim\mathcal N(0,V_1)$, and, for
$g\in\mathbb R$, define
\begin{equation}
 Q_L(g):=\Phi\!\left(\frac{g-L}{\sqrt{V_2}}\right),
 \qquad\text{so that}\qquad
 \E[Q_L(G)]=\Phi\!\left(-\frac{L}{\sqrt V}\right).
 \label{eq:informal-gaussian-profile}
\end{equation}
Our two main limits are
\begin{align}
 d_f^\uparrow(M_n;P_{XY}^{\times n})
 &\longrightarrow \E[f(Q_L(G))],
 \label{eq:informal-main-limit-up}\\
 d_f^\downarrow(M_n;P_{XY}^{\times n})
 &\longrightarrow f(\E[Q_L(G)])
 =f\!\left(\Phi\!\left(-\frac{L}{\sqrt V}\right)\right).
 \label{eq:informal-main-limit-down}
\end{align}
Two independent Gaussian limits describe the two levels of fluctuation.  After
centering at $nH$ and dividing by $\sqrt n$, the conditional mean of the
block surprisal $\sum_{i=1}^n\imath_{X|Y}(X_i|Y_i)$ converges to $G$.  Within
a fixed fibre $Y^n=y^n$, the remaining fluctuation of the surprisal about
that conditional mean converges, after division by $\sqrt n$, to an
independent $G'\sim\mathcal N(0,V_2)$.  Thus
$Q_L(g)=\Prb\{g+G'\ge L\}$ is the conditional tail probability when the
across-fibre fluctuation has value $g$.  Averaging over $G$ removes this
conditioning and performs a Gaussian convolution.  Since $G+G'$ is normal
with variance $V_1+V_2=V$, this gives the expectation identity in
\eqref{eq:informal-gaussian-profile}.

The two limits differ only in the order of the nonlinear map $f$ and this
average over fibres.  For $D_f^\uparrow$, $f$ is applied to each conditional
tail probability before the fibres are averaged, producing
$\E[f(Q_L(G))]$.  Optimizing the reference marginal in $D_f^\downarrow$
reweights the fibres so that their tail probabilities are aggregated before
$f$ is applied, producing $f(\E[Q_L(G)])$.  Thus the $D_f^\downarrow$ limit
depends only on the total varentropy $V$, whereas the $D_f^\uparrow$ limit
generally remembers its split into $V_1$ and $V_2$.  Jensen's inequality gives
\begin{equation}
 f\!\left(\Phi\!\left(-\frac{L}{\sqrt V}\right)\right)
 \leq \E[f(Q_L(G))].
 \label{eq:informal-jensen-comparison}
\end{equation}
Appendix~\ref{app:marginal-comparison} gives a technical explanation of this
distinction.  If $f$ is affine on the relevant interval, Jensen's inequality
in \eqref{eq:informal-jensen-comparison} is an equality and the two profiles
agree.  Total variation is the principal example, so optimizing its reference
marginal does not alter the second-order limit.  The resulting corollary
recovers the known side-information formula cited above, and we
make no novelty claim for that specialization.


For the power generators corresponding to R\'enyi divergences of order
$\alpha\in(0,1)$, a monotone logarithmic transformation turns
$D_f^\uparrow$ and $D_f^\downarrow$ into the two R\'enyi criteria studied by
Hayashi and Tan \cite{hayashi_tan_2017}; see also
\cite{hayashi_tan_correction_2024}.  They call the latter the
``Gallager form'' because the associated optimized
conditional R\'enyi entropy is a rescaled classical Gallager function
\cite[Eqs.~(14)--(17)]{hayashi_tan_2017}; see also \cite{gallager_1968}.
In the second-order regime,
their Eqs.~(73) and (74) give nonmatching bounds for the two criteria rather
than exact fixed-$L$ limits.  Our R\'enyi corollary evaluates both limits
exactly and closes these gaps.

\subsection{Related work}

For quantum side information, Tomamichel and Hayashi first obtained a tight
second-order expansion under purified distance after optimizing the reference
side-information state \cite{tomamichel_hayashi_2013}.  Up to the usual
monotone transformation between purified distance, fidelity, and
$D_{1/2}$, this is the optimized order-$1/2$ R\'enyi-divergence criterion.
Their result did not settle the corresponding fixed-marginal problem.  Our
results help explain this difficulty: even for commuting side
information, the fixed-marginal order-$1/2$ criterion has a Gaussian-mixture
profile rather than the single Gaussian profile of the optimized criterion.
For trace distance, Shen, Gao, and Cheng proved a strong converse and
moderate-deviation bounds \cite{shen_gao_cheng_2022} and subsequently obtained
the tight second-order expansion for the standard fixed-marginal
privacy-amplification criterion \cite{shen_gao_cheng_2024}.  At the one-shot
level, Anshu, Berta, Jain, and Tomamichel derived privacy-amplification bounds
in terms of the partially smoothed conditional min-entropy
\cite{anshu_berta_jain_tomamichel_2020}, while Regula and Tomamichel recently
obtained tighter, approximately matching trace-distance bounds
\cite{regula_tomamichel_2026}.  Renes gave hypothesis-testing
characterizations for the trace-distance privacy-amplification criterion in
the present setting \cite{renes_2018}; Abdelhadi and Renes later showed that the
second-order term of this partially smoothed conditional min-entropy is not
uniform across quantum states and, for pure states, differs from the usual
globally smoothed Gaussian form \cite{abdelhadi_renes_2020}.

Recent work also clarifies the information quantities governing other
asymptotic regimes.  Li, Li, and Yu systematically studied a two-parameter
conditional R\'enyi entropy that, after a change of variables, coincides with
the conditional quantity of Hayashi and Tan and arises as the classical
specialization of the three-parameter quantum conditional entropies introduced
by Rubboli, Goodarzi, and Tomamichel
\cite{li_li_yu_2025,hayashi_tan_2017,rubboli_goodarzi_tomamichel_2024}.  They introduced the
associated two-parameter mutual information and used these quantities to
characterize strong-converse exponents for privacy amplification and soft
covering under R\'enyi criteria \cite{li_li_yu_2025}.  Berta and Yao
independently determined the strong-converse exponent for classical privacy
amplification under purified distance; their result is closely related to the
order-$1/2$ instance of the R\'enyi-divergence exponent
\cite{berta_yao_2025}.  Li and Yao gave an
operational interpretation of sandwiched R\'enyi divergences of orders
between $1/2$ and one as strong-converse exponents under fidelity or purified
distance \cite{li_yao_2024}.  These strong-converse results, expressed through
a variety of R\'enyi-type entropies, are compatible with the refined
second-order limits obtained here: they describe rates fixed beyond the
first-order boundary, whereas our profiles resolve the
$nH+L\sqrt n+o(\sqrt n)$ transition window around that boundary.
Appendix~\ref{app:strong-converse-compatibility} makes this compatibility
explicit for the fixed-marginal exponent of Li, Li, and Yu.
At a complementary axiomatic level, Rubboli,
Haapasalo, and Tomamichel completely characterized classical conditional
entropies satisfying additivity, relabeling invariance, and monotonicity
under conditional mixing \cite{rubboli_haapasalo_tomamichel_2026}.  Their
characterization identifies exponential averages of conditional R\'enyi
entropies as the general family, providing structural context for the
conditional R\'enyi quantities considered here.

\section{Formal Statement of Main Results}
\label{sec:main-result}

This section states the exact second-order limits, explains the resulting
Gaussian profiles, and derives the R\'enyi and total-variation corollaries.

\subsection{Optimal leakage and the Gaussian \texorpdfstring{$f$}{f}-profile}

For $M\in\mathbb N$ and $\star\in\{\downarrow,\uparrow\}$, define the
optimal leakage by
\begin{equation}
 d_f^\star(M;P_{XY}):=\inf_{(P_S,\{\varphi_s\})}
 D_f^\star(P_{\varphi_S(X)YS})
 \label{eq:optimal-leakage}
\end{equation}
where $S$ is finite, $S\perp(X,Y)$, and every $\varphi_s$ maps $\cX$ to the
$M$-point alphabet $\cZ$.  The optimization in $d_f^\downarrow$ includes the
reference law on $(Y,S)$.  Since the actual
marginal $P_Y\times P_S$ is always feasible, at every blocklength
\begin{equation}
 d_f^\downarrow(M;P_{XY})\le d_f^\uparrow(M;P_{XY}).
 \label{eq:criterion-order}
\end{equation}

For $0\le r<1$ and $g,x\in\mathbb R$, set
\begin{equation}
 Q_r(g;x):=\Phi\!\left(\frac{x+\sqrt r\,g}{\sqrt{1-r}}\right),
 \label{eq:Qr}
\end{equation}
and introduce the Gaussian $f$-profile
\begin{equation}
 \Fprof_{f,r}(x):=\E[f(Q_r(G;x))],
 \qquad G\sim N(0,1).
 \label{eq:F-profile}
\end{equation}
At $r=0$, the argument of $f$ is the constant $\Phi(x)$.  To obtain the
other endpoint, fix $g\ne-x$.  As $r\uparrow1$, the denominator in
\eqref{eq:Qr} tends to zero, while the numerator tends to $x+g$.  Hence
$Q_r(g;x)\to\ind\{g>-x\}$.  Since $f$ is bounded on $[0,1]$, dominated
convergence gives
$f(0)\Prb\{G<-x\}+f(1)\Prb\{G>-x\}=f(0)\Phi(-x)$, where we used
$f(1)=0$.  Thus, defining the value at $r=1$
by continuous extension, the two endpoint formulas are
\begin{equation}
 \Fprof_{f,0}(x)=f(\Phi(x)),
 \qquad
 \Fprof_{f,1}(x)=f(0)\Phi(-x).
 \label{eq:F-endpoints}
\end{equation}
Lemma~\ref{lem:profile-regularity} below shows that the profile is continuous
and nonincreasing from $f(0)$ to zero.  When $f$ is strictly decreasing on
$(0,1)$, so is the profile.

\begin{theorem}[Exact second-order $f$-leakage: fixed and optimized marginals]
\label{thm:main-f}
Let $P_{XY}$ have finite support, let $V>0$, and fix an admissible generator
$f$ and $L\in\mathbb R$.  For every integer sequence
$M_n$ satisfying \eqref{eq:second-order-rate},
\begin{align}
 \lim_{n\to\infty}d_f^\uparrow(M_n;P_{XY}^{\times n})
 &=\Fprof_{f,V_1/V}\!\left(-\frac{L}{\sqrt V}\right),
 \label{eq:main-f-theorem}\\
 \lim_{n\to\infty}d_f^\downarrow(M_n;P_{XY}^{\times n})
 &=f\!\left(\Phi\!\left(-\frac{L}{\sqrt V}\right)\right)
 =\Fprof_{f,0}\!\left(-\frac{L}{\sqrt V}\right).
 \label{eq:main-f-down-theorem}
\end{align}
Two-universal hashing achieves both limits, and both converses hold for
arbitrary seeded hash families.  For the optimized criterion, the converse
even permits the reference law to be optimized jointly on $(Y^n,S_n)$.
\end{theorem}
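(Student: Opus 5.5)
We prove each of the two limits in \eqref{eq:main-f-theorem} and \eqref{eq:main-f-down-theorem} by matching an achievability bound (two-universal hashing) with a converse valid for arbitrary seeded hash families. The common structure is a per-fibre reduction. For a fixed side-information string $y^n$ and seed $s$, the conditional law of $Z$ is $P_{Z|Y^n=y^n,S=s}$. Since $U_Z$ places mass $1/M$ on each point, the fixed-marginal divergence decomposes as
\begin{equation*}
D_f^\uparrow=\E_{Y^nS}\Big[\tfrac1M\sum_z f\big(M P_{Z|Y^nS}(z|Y^n,S)\big)\Big].
\end{equation*}
The key one-fibre quantity is the tail mass
\begin{equation*}
q_n(y^n):=P_{X^n|Y^n}\big\{\imath_{X^n|Y^n}(X^n|y^n)<\log M_n\big\}.
\end{equation*}
This is the conditional probability of atoms heavier than $1/M_n$, i.e.\ the ``unhashable'' part of the mass. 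The heuristic is that the optimal per-fibre leakage is close to $f(q_n(y^n))$: a fraction $1-q$ of the output can be made nearly uniform, while mass $q$ concentrates on few points. Since $f$ is nonincreasing with $f(0)\ge f\ge0$ on $[0,1]$, this produces the behaviour of $f(\text{uniform fraction})$. More precisely, we aim for the per-fibre statement
\begin{equation*}
\tfrac1M\sum_z f(Mp_z)\approx f\big(\text{mass that can be spread below level }1/M\big).
\end{equation*}
By the CLT/Berry--Esseen statement of Fact~\ref{fact:BE} applied conditionally on $Y^n$ (a sum of independent non-identical terms with conditional means and variances), $q_n(y^n)\approx\Phi\big((\sum_i H(X|Y=y_i)-nH-\sqrt nL)/\sqrt{nV_2}\big)$. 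The centered conditional mean over $\sqrt n$ is asymptotically $\mathcal N(0,V_1)$, again by Fact~\ref{fact:BE}. Normalizing $G$ to unit variance, the uniform fraction is $1-q_n\to\Phi((g\sqrt{V_1}-L)/\sqrt{V_2})$ in law, which is exactly $Q_r(G;-L/\sqrt V)$ with $r=V_1/V$. Degenerate cases $V_1=0$ or $V_2=0$ are handled by the endpoint formulas \eqref{eq:F-endpoints} and continuity of $\Fprof$ in $r$ (Lemma~\ref{lem:profile-regularity}).

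\emph{Achievability for $\uparrow$.} We split $P_{X^n|y^n}$ into a light part (surprisal at least $\log M_n+\gamma_n$, with $\gamma_n=o(\sqrt n)$, $\gamma_n\to\infty$) and a heavy part. For the light part, the standard two-universal collision bound shows that, averaged over the seed, the hashed light measure is within $O(2^{-\gamma_n/2})$ in $\ell_1$ (relative to $U_Z$) of its mass times uniform. For the heavy part we only use $p_z\ge$ light contribution together with monotonicity of $f$. Convexity of $f$, Jensen over $z$, and the bound $f(t)/t\to0$ (which controls the part of $f$ above $1$) then give a per-fibre leakage of at most $f((1-q_n(y^n))-o(1))+o(1)$. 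Averaging over $y^n$ and using bounded convergence with continuity of $f$ yields the upper limit $\Fprof_{f,V_1/V}(-L/\sqrt V)$.

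\emph{Achievability for $\downarrow$.} Here we choose the reference $Q_{Y^n}\times P_S$ with $Q$ tilted toward fibres of large uniform fraction. Using the joint convexity/perspective form, $\inf_Q\sum_y Q(y)\,\phi(P(y)/Q(y))$ aggregates the fibres; with the achievability bound per fibre this gives $f(\E[1-q_n])\to f(\Phi(-L/\sqrt V))$. The concrete choice is to put $Q$ proportional to $P_Y$ restricted so that $f$ is applied to the aggregated ratio. I expect to verify this via the identity that for $f$ nonincreasing with $f(t)/t\to0$, sending $Q$-mass to fibres costs nothing in the $t\to\infty$ regime.

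\emph{Converses.} For arbitrary hashing, the data-processing inequality for $f$-divergences applied to the binary test ``$P_{Z|y,s}(z)\le 2^{\gamma_n}/M$'' reduces the per-fibre $D_f$ to a two-point divergence. Each $z$ can absorb at most $2^{\gamma_n}/M$ of light mass, so the uniform-looking mass is at most $1-q_n(y^n)+o(1)$ (with heavy atoms at threshold $\log M_n-\gamma_n$). The binary divergence with output probability at most $1-q$ under $P$ against about $1$ under $U_Z$ is at least $f(1-q)-o(1)$, using $f(t)/t\to0$ for the complementary term. For $\downarrow$ we instead apply data processing to the whole of $(Z,Y^n,S)$ against $U_Z\times Q_{Y^nS}$ with one global test, which gives directly $f(\E[1-q_n])$ for any joint $Q$.

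The main obstacle is making the per-fibre ``$f$ of uniform fraction'' estimates uniform in $y^n$: the conditional Berry--Esseen bound must hold for typical $y^n$ with non-identical summands, and the slack $\gamma_n$ must be absorbed via continuity of $\Fprof$ in $x$. I would first restrict to $y^n$ whose empirical type is close to $P_Y$, where conditional variances are near $V_2$, and discard the rest by boundedness of $f$ on $[0,1]$.
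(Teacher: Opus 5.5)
Your achievability arguments are the paper's: hash only the light part with a two-universal family, discard the heavy part by monotonicity of $f$, control the remainder by the seed-averaged $\ell_1$ bound and uniform continuity, and tilt the reference for $\downarrow$. Your converses, however, take a genuinely different route, and as far as I can check it works.

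\textbf{How the converses differ.} The paper pulls $U_Z$ back through each binning to a subprobability vector capped by $1/M$ and solves the capped perspective problem by water filling (Lemma~\ref{lem:waterfilling}). It then has to show $g\to f(q)$ under anti-concentration and support growth (Lemma~\ref{lem:f-tail}), uniformly in the scale (Lemma~\ref{lem:uniform-conditional-scaled}). Finally it aggregates fibres via Lemma~\ref{lem:perspective-aggregation} and treats $V_2=0$ separately.

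Your threshold test avoids all of this. Set $A_{y^n,s}=\{z:P_{Z|Y^nS}(z|y^n,s)\le 2^{\gamma_n}/M_n\}$.
\begin{itemize}
\item No bin in $A_{y^n,s}$ contains an atom heavier than $2^{\gamma_n}/M_n$. So its $P$-mass is at most $\lambda_n(y^n):=\Prb\{\imath_n\ge\log M_n-\gamma_n\mid y^n\}$.
\item Fewer than $M_n2^{-\gamma_n}$ bins fail. So the failing set has reference mass $r<2^{-\gamma_n}$, under $U_Z$ and equally under $U_Z\times Q_{Y^nS_n}$ for any joint $Q$.
\item The passing set therefore contributes at least $f$ of its $P$-mass bound up to $o(1)$. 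The failing set contributes at least $rf(1/r)\ge-\sup_{u>2^{\gamma_n}}|f(u)|/u$, which is exactly where $f(t)/t\to0$ enters.
\end{itemize}
The global test gives the optimized converse $f(\Prb\{\imath_n\ge\log M_n-\gamma_n\})-o(1)\to f(\Phi(-L/\sqrt V))$ from the unconditional CLT alone. This holds for arbitrary joint references, with no case split on $V_2$. The fibrewise test reduces the fixed-marginal converse to $\E f(\lambda_n(Y^n))$. That is the same conditional-tail average the achievability needs, with the threshold shifted down instead of up.

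\textbf{What each route buys.} The paper's route gives a slack-free one-shot bound $\Gamma^\star_{f,P}(\log M)$ with an $f$-independent optimizer. That is the natural starting point for the third-order refinement mentioned in the Discussion. Yours incurs $f$-dependent errors in $2^{-\gamma_n}$ and $\sup_{u>2^{\gamma_n}}|f(u)|/u$, which are harmless at second order.

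\textbf{Repairs needed.}
\begin{enumerate}
\item The case $V_2=0$ cannot be obtained from continuity of $\Fprof_{f,r}$ in $r$. The ratio $V_1/V$ is fixed by the source, and nothing justifies exchanging limits. The case is direct in your framework: each fibre is uniform, so the light masses are indicators $\ind\{\mu_n(y^n)\ge\log M_n\pm\gamma_n\}$. The CLT for $\mu_n$ then gives $f(0)\Phi(L/\sqrt V)=\Fprof_{f,1}(-L/\sqrt V)$, as in Lemma~\ref{lem:uniform-endpoint}. For the same reason, work with $\lambda_n(y^n)$ itself rather than $1-q_n(y^n)+o(1)$: when $V_2=0$ the strip term is small only on average over $Y^n$, not fibrewise.
\item In the $\downarrow$ achievability, joint convexity yields only the lower bound. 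You need the explicit multiplicative tilt \eqref{eq:tail-tilted-reference} by the light mass. Under it the hashed light subdistribution is a constant multiple of the reference, so the global $\ell_1$ bound applies without blow-up at large scales, and fibres with zero light mass cost nothing. Restricting $P_Y$ to a set suffices only for $V_2=0$. For $V_2>0$ it leaves $f$ applied to non-constant fibrewise arguments and generally stays strictly above $f(\Phi(-L/\sqrt V))$.
\item Your displayed Gaussian approximation is that of $1-q_n$, not $q_n$.
\item The $\uparrow$ achievability bound comes from monotonicity and the modulus of continuity, not from Jensen, which points the other way.
\end{enumerate}
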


The profile always lies between two bounds determined by the total Gaussian
tail.  The identity $\E Q_r(G;x)=\Phi(x)$ holds for every $r\in[0,1]$, with
$r=1$ interpreted through the limiting indicator above.  Jensen's inequality
gives the lower bound below, while convexity and $f(1)=0$ give
$f(q)\le(1-q)f(0)$ for $q\in[0,1]$, and hence the upper bound:
\begin{equation}
 f(\Phi(x))\le \Fprof_{f,r}(x)
 \le f(0)\Phi(-x).
 \label{eq:f-profile-envelope}
\end{equation}
The optimized-marginal limit in \eqref{eq:main-f-down-theorem} equals the
lower bound, consistently with \eqref{eq:criterion-order}.  For the
fixed-marginal limit, the lower bound is attained when $V_1=0$, while the
upper bound is attained when $V_2=0$.  If $f$ is affine on $[0,1]$, the two
bounds coincide and only the total variance remains.

\subsection{Fixed-leakage expansions}

We first note a monotonicity property that will also be used in the proof.
For either arrow, the sequence
$\ell\mapsto d_f^\star(2^\ell;P_{XY})$ is nondecreasing.  Indeed, identify a
$2M$-point output alphabet with an $M$-point alphabet times one bit and
discard the latter.  The balanced projection sends $U_{2M}$ to $U_M$, so
data processing proves the claim for the fixed reference.  For the optimized
reference, apply the same projection to both the distribution induced by the
hash and every
candidate $U_{2M}\times Q_{YS}$; its image is $U_M\times Q_{YS}$, after which
one takes the infimum over $Q_{YS}$ and over hash families.

For a leakage budget $0<\delta<f(0)$ and
$\star\in\{\downarrow,\uparrow\}$, define
\begin{equation}
 \ell_f^\star(\delta;P_{XY})
 :=\sup\{\ell\in\mathbb N_0:d_f^\star(2^\ell;P_{XY})\le\delta\}.
 \label{eq:fixed-f-leakage-def}
\end{equation}
Here the supremum is taken in $\mathbb N_0\cup\{+\infty\}$.  The admissible
set is nonempty because a one-point output has zero leakage.  The proof below
shows that, for the i.i.d. laws in the corollary, it is bounded for every
sufficiently large $n$; its supremum is then attained and is the largest
extractable integer output length.

\begin{corollary}[Fixed-leakage $f$-expansions: fixed and optimized marginals]
\label{cor:fixed-f-leakage}
Under the assumptions of Theorem~\ref{thm:main-f}, suppose in addition that
$f$ is strictly decreasing on $(0,1)$.  For every $\delta\in(0,f(0))$,
\begin{align}
 \ell_f^\uparrow(\delta;P_{XY}^{\times n})
 &=nH-\sqrt{nV}\,
 \Fprof_{f,V_1/V}^{-1}(\delta)+o(\sqrt n).
 \label{eq:f-quantile-expansion}\\
 \ell_f^\downarrow(\delta;P_{XY}^{\times n})
 &=nH-\sqrt{nV}\,\Phi^{-1}\!\left(f^{-1}(\delta)\right)
 +o(\sqrt n).
 \label{eq:f-quantile-down-expansion}
\end{align}
\end{corollary}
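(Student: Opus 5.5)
We derive the corollary from Theorem~\ref{thm:main-f} by inverting the limiting profiles. The argument is a standard sandwich: apply the theorem along sequences whose second-order rate sits slightly above or slightly below the target. Fix $\star\in\{\downarrow,\uparrow\}$. Write $\Gprof^\star(x)$ for the relevant limiting profile as a function of $x=-L/\sqrt V$. Thus $\Gprof^\uparrow=\Fprof_{f,V_1/V}$ and $\Gprof^\downarrow=\Fprof_{f,0}=f\circ\Phi$. By Lemma~\ref{lem:profile-regularity}, and because $f$ is strictly decreasing on $(0,1)$, each $\Gprof^\star$ is continuous and strictly decreasing from $f(0)$ (as $x\to+\infty$) to $0$ (as $x\to-\infty$). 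Hence for $\delta\in(0,f(0))$ there is a unique $x_\delta^\star$ with $\Gprof^\star(x_\delta^\star)=\delta$. For $\downarrow$ this is $x_\delta=\Phi^{-1}(f^{-1}(\delta))$, since $f$ restricted to $[0,1]$ is a continuous strictly decreasing bijection onto $[0,f(0)]$. The target $L^\star=-\sqrt V\,x_\delta^\star$ then gives exactly \eqref{eq:f-quantile-expansion} and \eqref{eq:f-quantile-down-expansion}. It therefore suffices to show $\ell_n:=\ell_f^\star(\delta;P_{XY}^{\times n})=nH+\sqrt n\,L^\star+o(\sqrt n)$.

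\emph{Lower bound.} Fix $\epsilon>0$ and set $\ell_n^-:=\lfloor nH+\sqrt n(L^\star-\epsilon)\rfloor$, so that $M_n=2^{\ell_n^-}$ satisfies \eqref{eq:second-order-rate} with $L=L^\star-\epsilon$ (the floor costs $O(1)=o(\sqrt n)$). The theorem gives $d_f^\star(2^{\ell_n^-})\to\Gprof^\star(x_\delta^\star+\epsilon/\sqrt V)$. Strict monotonicity makes this limit strictly smaller than $\delta$, so for large $n$ we get $d_f^\star(2^{\ell_n^-})\le\delta$ and hence $\ell_n\ge\ell_n^-$.

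\emph{Upper bound.} Set $\ell_n^+:=\lceil nH+\sqrt n(L^\star+\epsilon)\rceil$. By the same reasoning, $d_f^\star(2^{\ell_n^+})\to\Gprof^\star(x_\delta^\star-\epsilon/\sqrt V)>\delta$, so $d_f^\star(2^{\ell_n^+})>\delta$ for large $n$. Here the monotonicity of $\ell\mapsto d_f^\star(2^\ell;P_{XY})$ noted before the corollary is essential. It gives $d_f^\star(2^\ell)>\delta$ for every $\ell\ge\ell_n^+$, so the admissible set in \eqref{eq:fixed-f-leakage-def} is bounded by $\ell_n^+-1$. This also shows the supremum is finite and attained. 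Combining the two bounds gives $|\ell_n-nH-\sqrt n L^\star|\le\epsilon\sqrt n+1$ for all large $n$. Letting $\epsilon\downarrow0$ yields the $o(\sqrt n)$ remainder.

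The main point needing care is the strict monotonicity and continuity of the profile, which is what makes the perturbed limits strictly on the correct side of $\delta$. For $\uparrow$ this requires strict decrease of $\Fprof_{f,r}$ when $f$ is strictly decreasing on $(0,1)$, including the endpoint cases $r=0$ ($V_1=0$) and $r=1$ ($V_2=0$). In the latter case the profile is $f(0)\Phi(-x)$, which is strictly decreasing since $f(0)>\delta>0$. We take this from Lemma~\ref{lem:profile-regularity}. If its statement did not cover an endpoint, one would check it directly: for $r<1$, $Q_r(g;x)$ is strictly increasing in $x$ for every $g$ and $f$ is strictly decreasing on $(0,1)$, so the expectation is strictly decreasing. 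A smaller technical point is that the theorem must apply to arbitrary integer sequences $M_n$ obeying \eqref{eq:second-order-rate}; powers of two with floor or ceiling rounding qualify.
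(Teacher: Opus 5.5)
Your proposal is correct and follows essentially the same route as the paper. It inverts the continuous, strictly decreasing profiles, applies Theorem~\ref{thm:main-f} along rounded power-of-two lengths at offsets $\pm\epsilon$, and uses output-length monotonicity to bound the admissible set.

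There are two small slips. First, the profile tends to $f(0)$ as $x\to-\infty$ and to $0$ as $x\to+\infty$, not the reverse; this is harmless, since you only use strict decrease and the range $(0,f(0))$. Second, you should note, as the paper does, that $V>0$ forces $H>0$. That is what guarantees $\ell_n^\pm\ge0$ for large $n$, so that $2^{\ell_n^\pm}$ is a legitimate integer output size to which the theorem applies.
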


\subsection{R\'enyi divergences as a corollary}

For $\alpha\in(0,1)$, let
$f_\alpha(t):=(1-t^\alpha)/(1-\alpha)$.  This is an admissible generator,
with $f_\alpha(0)=(1-\alpha)^{-1}$.  Define the R\'enyi divergence by
\begin{equation}
 D_\alpha(P\|Q):=T_\alpha\!\left(D_{f_\alpha}(P\|Q)\right),
 \label{eq:renyi-transform}
\end{equation}
where, with the value at the right endpoint understood in the extended sense,
\begin{equation}
 T_\alpha(u):=-\frac{1}{1-\alpha}\log\bigl(1-(1-\alpha)u\bigr).
 \label{eq:renyi-transform-function}
\end{equation}
The transformation in \eqref{eq:renyi-transform-function} is continuous and
strictly increasing on $[0,(1-\alpha)^{-1})$, with the extended value
$+\infty$ at the right endpoint.  Introduce the Gaussian
R\'enyi overlap profile
\begin{equation}
 \Gprof_{\alpha,r}(x):=\E[Q_r(G;x)^\alpha]\ (0<r<1),\qquad
 \Gprof_{\alpha,0}(x):=\Phi(x)^\alpha,\quad
 \Gprof_{\alpha,1}(x):=\Phi(x).
 \label{eq:G-profile}
\end{equation}
For every $r\in[0,1]$, this profile is continuous and strictly increasing from
zero to one.

For $\star\in\{\downarrow,\uparrow\}$, let $D_\alpha^\star$ denote the
criterion obtained from $D_f^\star$ by replacing $D_f$ with $D_\alpha$, and
let $d_\alpha^\star$ denote the corresponding optimal leakage from
\eqref{eq:optimal-leakage}.
Because $T_\alpha$ is continuous and increasing, it commutes with the two
infima, first over the reference law and then over seeded hash families.
Consequently, at every blocklength,
\begin{equation}
 d_\alpha^\star(M;P_{XY})
 =T_\alpha\!\left(d_{f_\alpha}^\star(M;P_{XY})\right).
 \label{eq:renyi-operational-transform}
\end{equation}

\begin{corollary}[R\'enyi specialization: fixed and optimized marginals]
\label{cor:renyi}
Let $P_{XY}$ have finite support, let $V>0$, and fix
$\alpha\in(0,1)$ and $L\in\mathbb R$.  Let
$d_\alpha^\uparrow(M;P_{XY})$ and $d_\alpha^\downarrow(M;P_{XY})$ be the two
optimal leakages defined above.  If $M_n$ satisfies
\eqref{eq:second-order-rate}, then
\begin{align}
 \lim_{n\to\infty}d_\alpha^\uparrow(M_n;P_{XY}^{\times n})
 &=-\frac{1}{1-\alpha}\log
 \Gprof_{\alpha,V_1/V}\!\left(-\frac{L}{\sqrt V}\right),
 \label{eq:renyi-main}\\
 \lim_{n\to\infty}d_\alpha^\downarrow(M_n;P_{XY}^{\times n})
 &=-\frac{\alpha}{1-\alpha}\log
 \Phi\!\left(-\frac{L}{\sqrt V}\right).
 \label{eq:renyi-down-main}
\end{align}
For $\delta>0$, let $\ell_\alpha^\uparrow(\delta;P_{XY})$ and
$\ell_\alpha^\downarrow(\delta;P_{XY})$ denote the largest extractable output
lengths under the fixed- and optimized-marginal criteria, respectively.  Then
\begin{align}
 \ell_\alpha^\uparrow(\delta;P_{XY}^{\times n})
 &=nH-\sqrt{nV}\,
 \Gprof^{-1}_{\alpha,V_1/V}
 \!\left(2^{-(1-\alpha)\delta}\right)+o(\sqrt n),
 \label{eq:renyi-fixed-leakage}\\
 \ell_\alpha^\downarrow(\delta;P_{XY}^{\times n})
 &=nH-\sqrt{nV}\,
 \Phi^{-1}\!\left(2^{-(1-\alpha)\delta/\alpha}\right)
 +o(\sqrt n).
 \label{eq:renyi-down-fixed-leakage}
\end{align}
In both cases two-universal hashing achieves the stated limits, and the
converses permit arbitrary seeded hash families.
\end{corollary}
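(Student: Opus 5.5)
The plan is to reduce Corollary~\ref{cor:renyi} entirely to Theorem~\ref{thm:main-f} and Corollary~\ref{cor:fixed-f-leakage} applied to the power generator $f_\alpha$. This works through the identity \eqref{eq:renyi-operational-transform} and the continuity and monotonicity of $T_\alpha$.

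First, I check that $f_\alpha$ is admissible and strictly decreasing on $(0,1)$. It is continuous and convex because $-t^\alpha$ is convex for $\alpha\in(0,1)$. We have $f_\alpha(1)=0$, and $f_\alpha(t)/t\to0$ because $t^{\alpha-1}\to0$. Its derivative $-\alpha t^{\alpha-1}/(1-\alpha)$ is negative, and $f_\alpha(0)=(1-\alpha)^{-1}$.

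Next, I compute the two profiles. For $0<r<1$,
\[
\Fprof_{f_\alpha,r}(x)=\frac{1-\Gprof_{\alpha,r}(x)}{1-\alpha}.
\]
At the endpoints, \eqref{eq:F-endpoints} gives $f_\alpha(\Phi(x))=(1-\Phi(x)^\alpha)/(1-\alpha)$ and $f_\alpha(0)\Phi(-x)=(1-\Phi(x))/(1-\alpha)$. These match the definitions $\Gprof_{\alpha,0}$ and $\Gprof_{\alpha,1}$, so the identity holds for all $r\in[0,1]$. Applying $T_\alpha$ gives
\[
T_\alpha\bigl((1-\Gprof)/(1-\alpha)\bigr)=-\frac{1}{1-\alpha}\log\Gprof .
\]
Theorem~\ref{thm:main-f} yields limits of $d_{f_\alpha}^\star$ lying in $[0,f_\alpha(0))$, since $\Gprof_{\alpha,r}(x)\in(0,1)$ for finite $x$. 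Because $T_\alpha$ is continuous there, \eqref{eq:renyi-operational-transform} passes to the limit and gives \eqref{eq:renyi-main}. For the optimized marginal the relevant $\Gprof$ is $\Phi^\alpha$, which produces the factor $\alpha$ in \eqref{eq:renyi-down-main}. The statements about two-universal hashing and arbitrary hash families carry over directly, because $T_\alpha$ is applied pointwise to each code's leakage.

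For the fixed-leakage expansions, I translate the budget. The condition $d_\alpha^\star\le\delta$ is equivalent to $d_{f_\alpha}^\star\le\delta'$, where
\[
\delta':=T_\alpha^{-1}(\delta)=\frac{1-2^{-(1-\alpha)\delta}}{1-\alpha}\in\bigl(0,f_\alpha(0)\bigr)
\]
for every $\delta>0$. This equivalence uses strict monotonicity of $T_\alpha$ and the extended value $+\infty$ at $f_\alpha(0)$. Hence $\ell_\alpha^\star(\delta)=\ell_{f_\alpha}^\star(\delta')$, and Corollary~\ref{cor:fixed-f-leakage} applies. It remains to invert the profile:
\[
\Fprof_{f_\alpha,r}(x)=\delta' \iff \Gprof_{\alpha,r}(x)=1-(1-\alpha)\delta'=2^{-(1-\alpha)\delta}.
\]
This gives $\Fprof^{-1}_{f_\alpha,V_1/V}(\delta')=\Gprof^{-1}_{\alpha,V_1/V}(2^{-(1-\alpha)\delta})$, and for the down case
\[
\Phi^{-1}\bigl(f_\alpha^{-1}(\delta')\bigr)=\Phi^{-1}\bigl(2^{-(1-\alpha)\delta/\alpha}\bigr).
\]
Inverting requires $\Gprof_{\alpha,r}$ to be a continuous strictly increasing bijection onto $(0,1)$, which the paper asserts after \eqref{eq:G-profile}.

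The only step needing care is the endpoint bookkeeping. At $r\in\{0,1\}$, the identification of $\Fprof_{f_\alpha,r}$ with $\Gprof_{\alpha,r}$ must use \eqref{eq:F-endpoints} rather than the expectation formula. I also need to confirm that $T_\alpha$ commuting with the infima is legitimate when some candidate values equal $f_\alpha(0)$, which is why the extended value $+\infty$ is used there. Beyond that, the argument is a direct change of variables.
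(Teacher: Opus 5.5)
Your proposal is correct and follows the paper's proof: like the paper, you transfer Theorem~\ref{thm:main-f} and Corollary~\ref{cor:fixed-f-leakage} for the power generator $f_\alpha$ through $T_\alpha$ via \eqref{eq:renyi-operational-transform}, convert the budget to $T_\alpha^{-1}(\delta)$, and invert the profiles using $1-(1-\alpha)f_\alpha(q)=q^\alpha$. The paper also writes out the H\"older-optimal reference \eqref{eq:renyi-tilted-reference} and the closed form \eqref{eq:renyi-down-exact}, but that only identifies $D_\alpha^\downarrow$ with the Gallager-form criterion and is not needed for the limits you derive.
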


\begin{proof}
We first make the reference-marginal optimization explicit.  Define
\begin{equation}
 A_\alpha(y,s):=M^{\alpha-1}\sum_zP_{Z|YS}(z|y,s)^\alpha.
 \label{eq:renyi-conditional-overlap}
\end{equation}
Minimizing $D_\alpha(P_{ZYS}\|U_Z\times Q_{YS})$ is equivalent to maximizing
$\sum_{y,s}P_{YS}(y,s)^\alpha Q_{YS}(y,s)^{1-\alpha}A_\alpha(y,s)$.
H\"older's inequality, or the Lagrange multiplier equations, gives
\begin{align}
 Q_{YS}^*(y,s)
 &=\frac{P_{YS}(y,s)A_\alpha(y,s)^{1/\alpha}}
 {\sum_{y',s'}P_{YS}(y',s')A_\alpha(y',s')^{1/\alpha}},
 \label{eq:renyi-tilted-reference}\\
 D_\alpha^\downarrow(P_{ZYS})
 &=-\frac{\alpha}{1-\alpha}\log
 \sum_{y,s}P_{YS}(y,s)A_\alpha(y,s)^{1/\alpha}.
 \label{eq:renyi-down-exact}
\end{align}
The fixed-reference criterion follows immediately from
\eqref{eq:renyi-transform}, and optimization over seeded hash families gives
\eqref{eq:renyi-operational-transform}.

The two leakage limits follow from the corresponding statements of
Theorem~\ref{thm:main-f} through
\eqref{eq:renyi-operational-transform}: substituting the power generator and
applying $T_\alpha$ gives the profile \eqref{eq:G-profile}.  For a R\'enyi leakage budget
$\delta>0$, the equivalent power-divergence budget is
\begin{equation}
 \delta_f=T_\alpha^{-1}(\delta)
 =\frac{1-2^{-(1-\alpha)\delta}}{1-\alpha}.
 \label{eq:renyi-budget-transform}
\end{equation}
Applying Corollary~\ref{cor:fixed-f-leakage} with this budget and using
$1-(1-\alpha)f_\alpha(q)=q^\alpha$ gives both fixed-leakage expansions.
\end{proof}

For comparison, the fixed-marginal part of Hayashi and Tan's result
\cite[Eq.~(73)]{hayashi_tan_2017} gives the envelope
\begin{align}
 -\frac{\alpha}{1-\alpha}\log\Phi\!\left(-\frac{L}{\sqrt V}\right)
 &\le \liminf_{n\to\infty}d_\alpha^\uparrow(M_n;P_{XY}^{\times n}),
 \label{eq:HT-lower}\\
 \limsup_{n\to\infty}d_\alpha^\uparrow(M_n;P_{XY}^{\times n})
 &\le-\frac{1}{1-\alpha}\log
 \Phi\!\left(-\frac{L}{\sqrt V}\right).
 \label{eq:HT-upper}
\end{align}
Since $q\le q^\alpha$ and $q\mapsto q^\alpha$ is concave on $[0,1]$, one has
$\Phi(x)\le\Gprof_{\alpha,r}(x)\le\Phi(x)^\alpha$, so
\eqref{eq:renyi-main} lies exactly inside
\eqref{eq:HT-lower}--\eqref{eq:HT-upper}.

Their Gallager-form criterion coincides with $D_\alpha^\downarrow$, up to
notation.  In Case~(C), their Eq.~(74) again provides nonmatching
second-order bounds, whereas \eqref{eq:renyi-down-main} supplies the exact
fixed-$L$ limit \cite{hayashi_tan_2017}; see also
\cite{hayashi_tan_correction_2024}.

Fig.~\ref{fig:profiles} displays the R\'enyi profile and its inverse at
$\alpha=1/2$.  It also makes visible that a single total variance does not
determine the second-order penalty for a nonlinear generator.
\begin{figure}[H]
 \centering
 \includegraphics[width=0.94\textwidth]{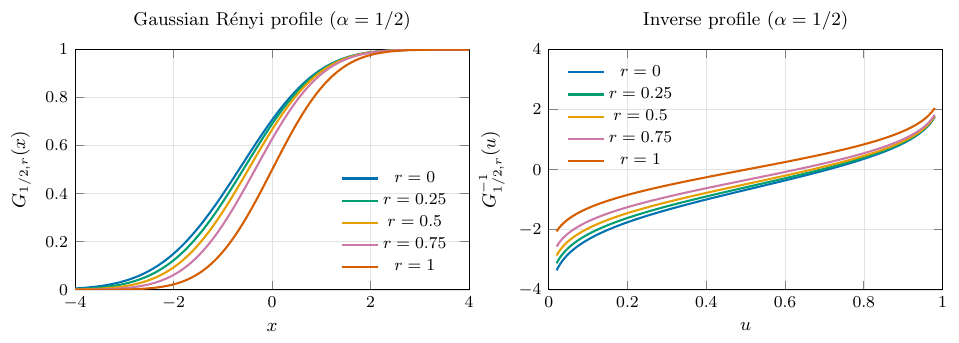}
 \caption{The Gaussian R\'enyi overlap profile and its inverse for
 $\alpha=1/2$.  The variance ratio is $r=V_1/(V_1+V_2)$.  In the
 fixed-leakage expansion, the inverse is evaluated at
 $u=2^{-\delta/2}$.  The endpoints are
 $\Gprof_{1/2,0}(x)=\sqrt{\Phi(x)}$ and
 $\Gprof_{1/2,1}(x)=\Phi(x)$.}
 \label{fig:profiles}
\end{figure}

\paragraph{Total variation as a recovered corollary.}
The usual generator of total variation is $|t-1|/2$.  For our convention it
is convenient to subtract the affine term $(t-1)/2$ and use the equivalent
generator
\begin{equation}
 f_{\rm TV}(t):=\frac{|t-1|}{2}-\frac{t-1}{2}=(1-t)_+.
 \label{eq:tv-generator}
\end{equation}
Indeed, directly from \eqref{eq:f-divergence},
\begin{equation}
 D_{f_{\rm TV}}(P\|Q)
 =\frac12\sum_\omega|P(\omega)-Q(\omega)|
 =:\|P-Q\|_{\rm TV}.
 \label{eq:tv-divergence}
\end{equation}
Thus $f_{\rm TV}$ is admissible, with $f_{\rm TV}(0)=1$.

Let $d_{\rm TV}^\uparrow(M;P_{XY})$ and
$d_{\rm TV}^\downarrow(M;P_{XY})$ denote \eqref{eq:optimal-leakage} with
total variation in place of $D_f$.  Since $f_{\rm TV}$ is affine on
$[0,1]$, Gaussian convolution
gives, for every $r\in[0,1]$,
\begin{align}
 \Fprof_{f_{\rm TV},r}(x)
 &=\E[1-Q_r(G;x)]=\Phi(-x),
 \label{eq:tv-fixed-profile}\\
 f_{\rm TV}(\Phi(x))&=\Phi(-x).
 \label{eq:tv-down-profile}
\end{align}
The fixed and optimized reference marginals therefore have the same exact
second-order leakage, even though their finite-blocklength values need not
coincide.  We recover the result in
\cite{watanabe_hayashi_2013,hayashi_2016,anshu_berta_jain_tomamichel_2020}.

\begin{corollary}[Total variation]
\label{cor:total-variation}
Let $P_{XY}$ have finite support and $V>0$.  If $M_n$ satisfies
\eqref{eq:second-order-rate}, then
\begin{equation}
 \lim_{n\to\infty}d_{\rm TV}^\uparrow(M_n;P_{XY}^{\times n})
 =\lim_{n\to\infty}d_{\rm TV}^\downarrow(M_n;P_{XY}^{\times n})
 =\Phi\!\left(\frac{L}{\sqrt V}\right).
 \label{eq:tv-main}
\end{equation}
For every $\delta\in(0,1)$, the fixed-marginal and optimized-marginal output
lengths both obey
\begin{align}
 \ell_{\rm TV}^\uparrow(\delta;P_{XY}^{\times n})
 &=nH+\sqrt{nV}\,\Phi^{-1}(\delta)+o(\sqrt n),
 \label{eq:tv-fixed-length}\\
 \ell_{\rm TV}^\downarrow(\delta;P_{XY}^{\times n})
 &=nH+\sqrt{nV}\,\Phi^{-1}(\delta)+o(\sqrt n).
 \label{eq:tv-down-length}
\end{align}
Two-universal hashing achieves these limits, and the converses permit
arbitrary seeded hash families.
\end{corollary}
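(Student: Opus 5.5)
We derive Corollary~\ref{cor:total-variation} as the specialization of Theorem~\ref{thm:main-f} and Corollary~\ref{cor:fixed-f-leakage} to the generator $f_{\rm TV}(t)=(1-t)_+$.

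\emph{Admissibility.} First we check that $f_{\rm TV}$ is admissible. It is continuous and convex, and $f_{\rm TV}(1)=0$. Since $f_{\rm TV}(t)=0$ for $t\ge1$, we also have $f_{\rm TV}(t)/t\to0$. By \eqref{eq:tv-divergence}, $D_{f_{\rm TV}}$ equals total variation distance. Hence $d_{\rm TV}^\star=d_{f_{\rm TV}}^\star$ for both arrows, at every blocklength.

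\emph{Leakage limits.} Applying Theorem~\ref{thm:main-f} gives the two limits $\Fprof_{f_{\rm TV},V_1/V}(-L/\sqrt V)$ and $f_{\rm TV}(\Phi(-L/\sqrt V))$. Because $Q_r\in[0,1]$, we have $f_{\rm TV}(Q_r)=1-Q_r$ pointwise. The identity $\E Q_r(G;x)=\Phi(x)$ then gives \eqref{eq:tv-fixed-profile} for $0\le r<1$; the case $r=1$ follows from the endpoint formula \eqref{eq:F-endpoints} with $f_{\rm TV}(0)=1$. The identity $\E Q_r(G;x)=\Phi(x)$ is itself the Gaussian convolution: with $G'$ an independent standard normal, $\Prb\{\sqrt{1-r}\,G'\le x+\sqrt r\,G\}=\Prb\{N(0,1)\le x\}$. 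At $x=-L/\sqrt V$, both limits equal $\Phi(L/\sqrt V)$, which is \eqref{eq:tv-main}.

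\emph{Fixed-leakage expansions.} Corollary~\ref{cor:fixed-f-leakage} requires $f$ to be strictly decreasing on $(0,1)$, and $f_{\rm TV}$ is. Its budget range is $(0,f_{\rm TV}(0))=(0,1)$, matching the statement.
\begin{itemize}
\item[(i)] For $\ell^\uparrow$, the profile is $\Fprof_{f_{\rm TV},r}(x)=\Phi(-x)$, so its inverse is $\Fprof^{-1}(\delta)=-\Phi^{-1}(\delta)$. Substituting into \eqref{eq:f-quantile-expansion} gives $nH+\sqrt{nV}\,\Phi^{-1}(\delta)$.
\item[(ii)] For $\ell^\downarrow$, we have $f_{\rm TV}^{-1}(\delta)=1-\delta$, and $\Phi^{-1}(1-\delta)=-\Phi^{-1}(\delta)$. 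Substituting into \eqref{eq:f-quantile-down-expansion} gives the same expression.
\end{itemize}
The claims that two-universal hashing achieves these limits and that the converses hold for arbitrary families are inherited directly from the theorem.

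\emph{Main obstacle.} No step is a genuine difficulty. The only point needing care is the endpoint cases $V_1=0$ or $V_2=0$, that is $r\in\{0,1\}$. These are handled by the endpoint formulas \eqref{eq:F-endpoints}, which give $\Phi(-x)$ in both cases. This shows the profile is independent of $r$ throughout.
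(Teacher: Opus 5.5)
Your proposal is correct and follows essentially the same route as the paper. You check that $f_{\rm TV}(t)=(1-t)_+$ is admissible and strictly decreasing on $(0,1)$, and use that it is affine on $[0,1]$, so $\Fprof_{f_{\rm TV},r}(x)=\E[1-Q_r(G;x)]=\Phi(-x)=f_{\rm TV}(\Phi(x))$ for every $r\in[0,1]$; the two leakage limits and the two fixed-leakage expansions (via $\Fprof^{-1}(\delta)=-\Phi^{-1}(\delta)$ and $\Phi^{-1}(1-\delta)=-\Phi^{-1}(\delta)$) then follow from Theorem~\ref{thm:main-f} and Corollary~\ref{cor:fixed-f-leakage}.
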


Affine generators therefore guarantee this coincidence.  For nonlinear
$f$, Jensen's inequality is generally strict: fixing the
reference retains the conditional variance split, whereas optimizing it
depends only on the unconditional surprisal tail.

\section{One-Shot Bounds}
\label{sec:oneshot}

This section establishes the finite-blocklength converse and achievability
bounds used in the asymptotic proof.

For a probability vector $p=(p_x)$ on its support and $c>0$, define
\begin{align}
 \mathcal Q_c&:=\left\{q:0\le q_x\le c,\ \sum_xq_x\le1\right\},
 \label{eq:Qc}\\
 d_f(p,q)&:=\sum_xq_xf\!\left(\frac{p_x}{q_x}\right)
 +\left(1-\sum_xq_x\right)f(0),
 \label{eq:df-perspective}\\
 g_{f,c}(p)&:=\min_{q\in\mathcal Q_c}d_f(p,q).
 \label{eq:gdef}
\end{align}
The vector $q$ is a subprobability approximation to $p$ whose individual
coordinates are capped by $c$.  In the application below, $c=2^{-h}$, and
$h=\log M$ gives the cap $1/M$ associated with a uniform law on $M$ output
symbols.  More concretely, pulling that uniform law back through a hash
assigns total reference mass $1/M$ to each nonempty output bin and therefore
at most $1/M$ to any individual input point.

The quantity $d_f(p,q)$ is the resulting perspective cost.  Since $q$ may
have total mass smaller than one, its final term completes the reference law
by placing the unused mass on zero-probability dummy symbols.  Thus
$g_{f,c}(p)$ is the smallest $f$-divergence cost compatible with the cap $c$.
The perspective term at $q_x=0$ is defined by continuity as
$p_x\lim_{t\to\infty}f(t)/t=0$.  For a joint distribution $P_{XY}$ and
threshold $h$, set
\begin{equation}
 \Gamma_{f,P}^\uparrow(h):=\E_Y[g_{f,2^{-h}}(P_{X|Y})].
 \label{eq:Gamma}
\end{equation}
This averages the optimal capped cost separately in every conditional fibre,
using the actual marginal $P_Y$, and is therefore the one-shot quantity for
the fixed-reference criterion.

For $a\ge0$, we also need the scaled capped value
\begin{equation}
 g_{f,c}^{(a)}(p):=\min_{q\in\mathcal Q_c}
 \left\{\sum_xq_xf\!\left(\frac{ap_x}{q_x}\right)
 +\left(1-\sum_xq_x\right)f(0)\right\}.
 \label{eq:scaled-gdef}
\end{equation}
Thus $g_{f,c}^{(1)}=g_{f,c}$.  For a joint law $P_{XY}$, define
\begin{equation}
 \Gamma_{f,P}^\downarrow(h)
 :=\inf_{Q_Y}\sum_{y:Q_Y(y)>0}Q_Y(y)
 g_{f,2^{-h}}^{(P_Y(y)/Q_Y(y))}(P_{X|Y=y}),
 \label{eq:Gamma-down}
\end{equation}
where terms with $Q_Y(y)=0$ use the same reference-zero convention as
\eqref{eq:f-divergence}.
Here the scale $a=P_Y(y)/Q_Y(y)$ accounts for the change in likelihood ratio
when the reference marginal of a fibre is changed from $P_Y(y)$ to $Q_Y(y)$.
The quantity $\Gamma_{f,P}^\downarrow$ then chooses the reference marginal
$Q_Y$ that gives the smallest weighted sum of these scaled fibre costs.  The
arrows on the two $\Gamma$ quantities thus match the corresponding security
criteria.

The next lemma converts the output-bin masses into a one-shot converse that
applies to every seeded family.

\begin{lemma}
\label{lem:oneshot-converse}
\label{lem:oneshot-converse-down}
Every seeded family with output size $M$ satisfies the fixed-reference bound
$D_f^\uparrow(P_{\varphi_S(X)YS})\ge\Gamma_{f,P}^\uparrow(\log M)$ and the
optimized-reference bound
$D_f^\downarrow(P_{\varphi_S(X)YS})\ge\Gamma_{f,P}^\downarrow(\log M)$.
Consequently,
\begin{equation}
 d_f^\uparrow(M;P_{XY})\ge\Gamma_{f,P}^\uparrow(\log M)
 \qquad\text{and}\qquad
 d_f^\downarrow(M;P_{XY})\ge\Gamma_{f,P}^\downarrow(\log M).
 \label{eq:optimal-one-shot-converses}
\end{equation}
\end{lemma}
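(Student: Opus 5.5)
Fix a seeded family $(P_S,\{\varphi_s\})$ with $M$-point output, and fix the reference law. For the fixed criterion the reference is $U_Z\times P_Y\times P_S$; for the optimized criterion it is an arbitrary $U_Z\times Q_{YS}$. The plan is to decompose $D_f(P_{ZYS}\|U_Z\times Q_{YS})$ fibre by fibre over $(y,s)$ and to lower-bound each fibre's contribution by a capped perspective cost on $\cX$. Writing $P_{ZYS}(z,y,s)=P_S(s)P_Y(y)\sum_{x\in\varphi_s^{-1}(z)}P_{X|Y}(x|y)$, the divergence becomes
\begin{equation*}
 \sum_{y,s}Q_{YS}(y,s)\sum_z\frac1M
 f\!\left(\frac{P_S(s)P_Y(y)\,P_{X|Y}(\varphi_s^{-1}(z)|y)}{Q_{YS}(y,s)/M}\right).
\end{equation*}
Set $a=a(y,s):=P_S(s)P_Y(y)/Q_{YS}(y,s)$; for the fixed criterion $a=1$. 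The key step is the per-fibre inequality
\begin{equation*}
 \sum_z\frac1Mf\!\left(aM\,p(\varphi_s^{-1}(z))\right)\ \ge\ g^{(a)}_{f,1/M}(p),
 \qquad p=P_{X|Y=y}.
\end{equation*}

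To prove this inequality, I will build a feasible $q\in\mathcal Q_{1/M}$. For each nonempty bin $B=\varphi_s^{-1}(z)$, spread the mass $1/M$ over $B$ proportionally to $p$, that is, $q_x=p_x/(M\,p(B))$ for $x\in B$. If $p(B)=0$, the bin still receives at most $1/M$ of reference mass, and that mass can be placed on the dummy term. Then $q_x\le 1/M$, since $p_x\le p(B)$, and $\sum_x q_x\le 1$. On bins with $p(B)>0$ the ratio $ap_x/q_x=aMp(B)$ is constant, so $\sum_{x\in B}q_xf(ap_x/q_x)=\tfrac1Mf(aMp(B))$ holds with equality. Empty bins and bins with $p(B)=0$ contribute $\tfrac1Mf(0)$ on the left-hand side. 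On the right-hand side they correspond to unused mass $1-\sum_xq_x$, which also costs $f(0)$ per unit. The two sides therefore match exactly, and minimizing over $q$ gives the per-fibre inequality. Points $(y,s)$ with $Q_{YS}(y,s)=0$ contribute zero on both sides under the reference-zero convention, which relies on $\lim_{t\to\infty} f(t)/t=0$.

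For the fixed criterion, $a=1$, and averaging over $(y,s)$ under $P_YP_S$ yields $\E_Y[g_{f,1/M}(P_{X|Y})]=\Gamma^\uparrow_{f,P}(\log M)$, since the bound does not depend on $s$.

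The optimized criterion is the main obstacle, because $Q_{YS}$ is joint on $(Y,S)$ while $\Gamma^\downarrow$ only ranges over $Q_Y$. The per-fibre step gives the lower bound $\sum_{y,s}Q_{YS}(y,s)\,g^{(P_S(s)P_Y(y)/Q_{YS}(y,s))}_{f,1/M}(P_{X|Y=y})$. I will reduce this to a marginal $Q_Y$ using joint convexity of the perspective. The function $(Q,a\!\cdot\!Q)\mapsto Q\,g^{(aP/Q)}$ is the perspective of a convex function of the ratio: $g^{(t)}$ is convex in $t$, since it is a minimum over $q$ of a function jointly convex in $(t,q)$, being perspectives of $f$. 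Summing over $s$ with $y$ fixed and setting $Q_Y(y)=\sum_sQ_{YS}(y,s)$, the joint convexity (a Jensen step) gives
\begin{equation*}
 \sum_sQ_{YS}(y,s)\,g^{(P_S(s)P_Y(y)/Q_{YS}(y,s))}\ \ge\ Q_Y(y)\,g^{(P_Y(y)/Q_Y(y))}.
\end{equation*}
Taking the infimum over $Q_Y$ then yields $\Gamma^\downarrow_{f,P}(\log M)$. Finally, taking the infimum over the reference law and over hash families gives \eqref{eq:optimal-one-shot-converses}. The remaining care points are the zero-mass conventions and checking that the joint convexity of $(t,q)\mapsto q f(tp/q)$ holds, which follows from the perspective of $f$ being jointly convex in $(tp,q)$.
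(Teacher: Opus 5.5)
Your route is the paper's. The bin-proportional vector $q_x=p_x/(M\,p(B))$ is exactly the paper's $q_{s,y}$, and it gives equality with the fibre divergence, with empty or null bins matching the dummy mass. The reduction from a joint $Q_{YS}$ to its marginal $Q_Y$ also rests on joint convexity of the perspective, as in the paper. Phrasing this as convexity of $t\mapsto g^{(t)}_{f,1/M}(p)$ followed by Jensen over $s$ is equivalent to the paper's averaging of the capped vectors over $s$.

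Your displayed Jensen inequality skips one case, however. The reference $Q_{YS}$ is arbitrary, so it may vanish at some $(y,s)$ with $P_S(s)>0$. Those terms are absent from the lower bound, so Jensen runs only over $\{s:Q_{YS}(y,s)>0\}$ with weights $Q_{YS}(y,s)/Q_Y(y)$. The weighted average of the scales is then $\theta_yP_Y(y)/Q_Y(y)$, where $\theta_y:=\sum_{s:Q_{YS}(y,s)>0}P_S(s)\le1$, and not $P_Y(y)/Q_Y(y)$. Plain Jensen therefore gives only $Q_Y(y)\,g^{(\theta_yP_Y(y)/Q_Y(y))}_{f,1/M}(P_{X|Y=y})$.

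To reach your display you also need that $a\mapsto g^{(a)}_{f,1/M}(p)$ is nonincreasing. For each fixed feasible $q$ the objective is nonincreasing in $a$ because $f$ is, and a pointwise minimum preserves this; combined with $\theta_y\le1$ this closes the step. This is precisely the final step of the paper's proof.

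Alternatively, you can keep the zero-reference terms inside the convexity step as values of the closed perspective of $g^{(\cdot)}$. Its recession slope $\lim_{t\to\infty}g^{(t)}/t$ equals $0$ because $f(t)\le g^{(t)}\le f(0)$. Subadditivity of this sublinear function, summed over all $s$, then yields your display directly. Either way, this has to be said explicitly: your reference-zero remark covered only the per-fibre step, not this aggregation. With that addition the proof is complete.
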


\begin{proof}
Fix $s$ and $y$ with $P_Y(y)>0$, abbreviate
$p_x=P_{X|Y}(x|y)$, and let
\begin{equation}
 r_z:=\sum_{x:\varphi_s(x)=z}p_x
\end{equation}
be the output-bin masses.  For $p_x>0$, define
\begin{equation}
 q_{s,y}(x):=\frac{p_x}{Mr_{\varphi_s(x)}}.
 \label{eq:q-from-bins}
\end{equation}
The denominator is positive because the bin contains $x$.  Since
$p_x\le r_{\varphi_s(x)}$, one has $q_{s,y}(x)\le1/M$.  Summing bin by bin,
\begin{equation}
 \sum_xq_{s,y}(x)=\frac{|\{z:r_z>0\}|}{M}\le1,
\end{equation}
so $q_{s,y}\in\mathcal Q_{1/M}$.  Moreover,
\begin{align}
 &\sum_xq_{s,y}(x)f\!\left(\frac{p_x}{q_{s,y}(x)}\right)
 +\left(1-\sum_xq_{s,y}(x)\right)f(0)\\
 &\qquad=\frac1M\sum_z f(Mr_z)
 =D_f(P_{\varphi_s(X)|Y=y}\|U_Z).
 \label{eq:bin-identity}
\end{align}
The term involving $f(0)$ accounts exactly for the empty bins.  The left side
of \eqref{eq:bin-identity} is at least $g_{f,1/M}(p)$.  Averaging over
$(S,Y)$ proves the fixed-reference family inequality; taking the infimum
proves the corresponding bound for $d_f^\uparrow$.

For the optimized claim, fix a reference $Q_{YS}$ and let
$a_{ys}=P_Y(y)P_S(s)/Q_{YS}(y,s)$ whenever the denominator is positive.
The bin construction in \eqref{eq:q-from-bins}, with the conditional source
scaled by $a_{ys}$, gives
\begin{align}
 &D_f(P_{\varphi_S(X)YS}\|U_Z\times Q_{YS})\\
 &\quad\ge\sum_{y,s:Q_{YS}(y,s)>0}Q_{YS}(y,s)
 g_{f,1/M}^{(a_{ys})}(P_{X|Y=y}).
 \label{eq:scaled-bin-lower-bound}
\end{align}
For fixed $y$, aggregate the feasible capped vectors over $s$ with weights
proportional to $Q_{YS}(y,s)$.  Their average is still capped by $1/M$.
Put
\begin{equation}
 \theta_y:=\sum_{s:Q_{YS}(y,s)>0}P_S(s)\le1.
 \label{eq:theta-reference-support}
\end{equation}
Joint convexity of the perspective $(u,v)\mapsto vf(u/v)$ first shows that
the corresponding sum in \eqref{eq:scaled-bin-lower-bound} is at least
\begin{equation}
 Q_Y(y)g_{f,1/M}^{(\theta_yP_Y(y)/Q_Y(y))}(P_{X|Y=y}).
 \end{equation}
For every fixed feasible capped vector, its scaled objective is
nonincreasing in the scale because $f$ is nonincreasing.  Taking the minimum
preserves this property, and hence the last display is at least
\begin{equation}
 Q_Y(y)g_{f,1/M}^{(P_Y(y)/Q_Y(y))}(P_{X|Y=y}).
\end{equation}
Here $Q_Y$ is the $Y$-marginal of $Q_{YS}$.  Summing over $y$, then
minimizing first over $Q_{YS}$ and finally over the seeded family, proves the
two claims.
\end{proof}

We use the standard notion of two-universal hashing introduced by Carter and
Wegman \cite{carter_wegman_1979}.

\begin{definition}
\label{def:two-universal}
A seeded family $\{\varphi_s:\cX\to\cZ\}_{s\in\cS}$ is two-universal under
$P_S$ if
\begin{equation}
 \Prb\{\varphi_S(x)=\varphi_S(x')\}\le\frac1M
 \quad\text{for all }x\ne x'.
 \label{eq:two-universal-definition}
\end{equation}
\end{definition}

Such a family exists for every finite input alphabet and every $M$: one may
take the uniform distribution over all maps from the input alphabet to
$\{1,\ldots,M\}$, for which the collision probability is exactly $1/M$.

The following second-moment estimate is the only consequence of
two-universality needed in the direct part.
It says that, when nonnegative input weights are assigned to hash bins, the
expected squared deviation of the bin weights from an equal allocation is
controlled by the sum of the squared input weights.

\begin{lemma}
\label{lem:two-universal-second-moment}
If the family is two-universal, then for any nonnegative numbers $\ell_x$,
\begin{equation}
 \E_S\sum_z\left(
 \sum_{x:\varphi_S(x)=z}\ell_x-\frac1M\sum_x\ell_x
 \right)^2
 \le\left(1-\frac1M\right)\sum_x\ell_x^2.
 \label{eq:universal-second-moment}
\end{equation}
\end{lemma}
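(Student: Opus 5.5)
We prove the inequality by a direct second-moment computation. Collisions between distinct inputs will be controlled by Definition~\ref{def:two-universal}.

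Fix $s$ and write $B_z(s):=\sum_{x:\varphi_s(x)=z}\ell_x$ and $T:=\sum_x\ell_x$. Every input lands in exactly one bin, so $\sum_zB_z(s)=T$. Expanding the square then gives
\begin{equation}
 \sum_z\Bigl(B_z(s)-\frac TM\Bigr)^2
 =\sum_zB_z(s)^2-\frac{2T}{M}\sum_zB_z(s)+\frac{T^2}{M}
 =\sum_zB_z(s)^2-\frac{T^2}{M}.
\end{equation}
Next we write the sum of squared bin weights as a sum over input pairs:
\begin{equation}
 \sum_zB_z(s)^2=\sum_{x,x'}\ell_x\ell_{x'}\ind\{\varphi_s(x)=\varphi_s(x')\}
 =\sum_x\ell_x^2+\sum_{x\ne x'}\ell_x\ell_{x'}\ind\{\varphi_s(x)=\varphi_s(x')\}.
\end{equation}

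We now take the expectation over $S$. The weights are nonnegative, so each off-diagonal term can be bounded using the two-universality bound \eqref{eq:two-universal-definition}:
\begin{equation}
 \E_S\sum_zB_z(S)^2\le\sum_x\ell_x^2+\frac1M\sum_{x\ne x'}\ell_x\ell_{x'}
 =\sum_x\ell_x^2+\frac1M\Bigl(T^2-\sum_x\ell_x^2\Bigr).
\end{equation}
Subtracting $T^2/M$ leaves exactly $(1-1/M)\sum_x\ell_x^2$, which is \eqref{eq:universal-second-moment}.

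The argument has no real obstacle. The only point to watch is that nonnegativity of the $\ell_x$ is what lets the collision-probability upper bound be applied term by term. The finite alphabet makes every sum finite, so exchanging $\E_S$ with the sums is immediate.
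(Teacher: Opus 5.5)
Your proposal is correct and matches the paper's proof: you use $\sum_z B_z(s)=T$ to reduce the left side to $\E_S\sum_z B_z(S)^2-T^2/M$, split the squared bin weights into diagonal and collision terms, and bound the collisions by $M^{-1}\sum_{x\ne x'}\ell_x\ell_{x'}=M^{-1}(T^2-\sum_x\ell_x^2)$. This is exactly the paper's argument, and you also correctly note that nonnegativity of the $\ell_x$ is what allows the collision bound to be applied term by term.
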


\begin{proof}
For the calculation, put $Q:=\sum_x\ell_x$ and
$L_S(z):=\sum_{x:\varphi_S(x)=z}\ell_x$.  For every seed,
$\sum_zL_S(z)=Q$, so expanding the square shows that the left side of
\eqref{eq:universal-second-moment} is
$\E_S\sum_zL_S(z)^2-Q^2/M$.

The square $L_S(z)^2$ contains one diagonal term $\ell_x^2$ for each input
$x$ assigned to $z$, as well as one term $\ell_x\ell_{x'}$ for every ordered
pair of distinct inputs that collide in that bin.  After summing over $z$ and
averaging over the seed, the diagonal contribution is therefore
$\sum_x\ell_x^2$, while two-universality bounds the collision contribution
by $M^{-1}\sum_{x\ne x'}\ell_x\ell_{x'}$.  Since the latter sum equals
$Q^2-\sum_x\ell_x^2$, we obtain
$\E_S\sum_zL_S(z)^2\le
(1-M^{-1})\sum_x\ell_x^2+Q^2/M$.  Subtracting $Q^2/M$ proves the claim.
\end{proof}

For $\tau>0$ and each $y$, retain the light conditional probabilities
\begin{align}
 \ell_y(x)&:=P_{X|Y}(x|y)
 \ind\!\left\{P_{X|Y}(x|y)\le\frac{2^{-\tau}}M\right\},
 \label{eq:light-part}\\
 Q_\tau(y)&:=\sum_x\ell_y(x).
 \label{eq:light-mass}
\end{align}
Define the modulus of continuity of $f$ on $[0,1]$ by
\begin{equation}
 \omega_f(u):=\sup\{|f(a)-f(b)|:a,b\in[0,1],\ |a-b|\le u\}.
 \label{eq:f-modulus}
\end{equation}
Here $Q_\tau(y)$ is the total conditional mass of the light symbols, rather
than a new probability distribution.  The cutoff in \eqref{eq:light-part}
places each retained symbol a factor $2^{-\tau}$ below the target mass $1/M$
of a uniform output bin; this slack is what allows two-universal hashing to
spread the retained mass nearly uniformly.  The modulus $\omega_f(u)$ is the
largest possible change in $f$ between two arguments in $[0,1]$ that are at
most $u$ apart.  Its convergence to zero as $u\downarrow0$ turns the resulting
approximation in $\ell_1$ distance into an approximation of the
$f$-divergence cost.

\begin{lemma}
\label{lem:light-achievability}
If $S$ indexes a two-universal family, then
\begin{align}
 &D_f(P_{\varphi_S(X)YS}\|U_Z\times P_Y\times P_S)\\
 &\quad\le \E_Y[f(Q_\tau(Y))]
 +\omega_f(2^{-\tau/4})+f(0)2^{-\tau/4}.
 \label{eq:light-achievability}
\end{align}
Put $\overline Q_\tau:=\E_YQ_\tau(Y)$.  If $\overline Q_\tau>0$, define
\begin{equation}
 Q_{YS}^\circ(y,s)
 :=\frac{P_Y(y)P_S(s)Q_\tau(y)}{\overline Q_\tau}.
 \label{eq:one-shot-tail-tilted-reference}
\end{equation}
Then the same family satisfies
\begin{align}
 &D_f(P_{\varphi_S(X)YS}\|U_Z\times Q_{YS}^\circ)\\
 &\quad\le f(\overline Q_\tau)
 +\omega_f(2^{-\tau/4})+f(0)2^{-\tau/4}.
 \label{eq:light-achievability-down}
\end{align}
\end{lemma}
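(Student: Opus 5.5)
Fix $y$ and $s$ and write $p_x=P_{X|Y}(x|y)$. Let $r_z:=\sum_{x:\varphi_s(x)=z}p_x$ be the actual bin masses and $\lambda_z:=\sum_{x:\varphi_s(x)=z}\ell_y(x)\le r_z$ the light bin masses. Because $f$ is nonincreasing,
\[
 D_f(P_{\varphi_s(X)|Y=y}\|U_Z)=\frac1M\sum_z f(Mr_z)\le \frac1M\sum_z f(\min\{M\lambda_z,1\}).
\]
Here truncation at $1$ is allowed since $f$ is nonincreasing and we only need an upper bound. This removes the heavy symbols entirely: they can only push bin masses upward. It then remains to show that the light bins are close to uniform on average, $M\lambda_z\approx Q_\tau(y)$.

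Apply Lemma~\ref{lem:two-universal-second-moment} to the weights $\ell_y$. Since each weight is at most $2^{-\tau}/M$, we have $\sum_x\ell_y(x)^2\le 2^{-\tau}Q_\tau(y)/M\le 2^{-\tau}/M$. Hence
\[
 \E_S\,\frac1M\sum_z\bigl(M\lambda_z-Q_\tau(y)\bigr)^2\le 2^{-\tau}.
\]
With respect to the uniform weights $1/M$ on $z$ and the seed distribution, Chebyshev's inequality then gives
\[
 \Prb\{|M\lambda_Z-Q_\tau(y)|>2^{-\tau/4}\}\le 2^{-\tau/2}\le 2^{-\tau/4}.
\]
On the good event, $|\min\{M\lambda_z,1\}-Q_\tau(y)|\le 2^{-\tau/4}$, because $Q_\tau(y)\le1$ and truncation is $1$-Lipschitz. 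So the contribution is at most $f(Q_\tau(y))+\omega_f(2^{-\tau/4})$. On the bad event we bound $f$ by $f(0)$, contributing at most $f(0)2^{-\tau/4}$. Averaging over $Y$ gives \eqref{eq:light-achievability}.

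For the tilted reference, the divergence equals
\[
 \sum_{y,s}P_Y(y)P_S(s)\frac{Q_\tau(y)}{\overline Q_\tau}\cdot\frac1M\sum_z f\!\left(\frac{\overline Q_\tau\, Mr_z}{Q_\tau(y)}\right).
\]
Fibres with $Q_\tau(y)=0$ drop out by the reference-zero convention. By monotonicity we again replace $r_z$ by $\lambda_z$. On the good event, the argument is $\overline Q_\tau M\lambda_z/Q_\tau(y)$, which is roughly $\overline Q_\tau$, but the deviation is now divided by $Q_\tau(y)$. This is the main obstacle, since fibres with small light mass could amplify errors. To handle it, I would rework the Chebyshev step with a relative deviation. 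The normalized second moment is at most $2^{-\tau}Q_\tau(y)/M$ times $M$, so under the tilted weight $Q_\tau(y)$ the relative error $M\lambda_z/Q_\tau(y)-1$ has second moment at most
\[
 \frac{2^{-\tau}Q_\tau(y)}{Q_\tau(y)^2}\cdot Q_\tau(y)=2^{-\tau}.
\]
This is precisely the right scaling: the weight $Q_\tau(y)$ cancels one factor of $Q_\tau(y)^{-1}$. Chebyshev then gives that the relative error exceeds $2^{-\tau/4}$ with tilted probability at most $2^{-\tau/2}$. On the good event, after truncation at $1$, the argument lies within $\overline Q_\tau 2^{-\tau/4}\le 2^{-\tau/4}$ of $\overline Q_\tau$, so the cost is at most $f(\overline Q_\tau)+\omega_f(2^{-\tau/4})$. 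The bad event contributes at most $f(0)2^{-\tau/4}$. Since the tilted weights sum to one, this yields \eqref{eq:light-achievability-down}.
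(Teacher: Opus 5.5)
Your fixed-reference argument is correct and is essentially the paper's. The paper also discards the heavy symbols by monotonicity of $f$ and invokes Lemma~\ref{lem:two-universal-second-moment}. It then passes through the $\ell_1$ bound $\|L-R\|_1\le2^{-\tau/2}$ and Markov's inequality, whereas you apply Chebyshev directly, which even gives the smaller bad-event term $f(0)2^{-\tau/2}$.

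\textbf{The gap is in the tilted part, where the normalization is dropped.} The tilted probability weights are $P_Y(y)P_S(s)Q_\tau(y)/\overline Q_\tau$, not $P_Y(y)P_S(s)Q_\tau(y)$. With $W:=M\lambda_z/Q_\tau(y)-1$, your per-fibre bound $\E_S\frac1M\sum_zW^2\le2^{-\tau}/Q_\tau(y)$ therefore gives
\[
 \E_{\mathrm{tilt}}[W^2]\le\sum_{y:Q_\tau(y)>0}\frac{P_Y(y)Q_\tau(y)}{\overline Q_\tau}\cdot\frac{2^{-\tau}}{Q_\tau(y)}\le\frac{2^{-\tau}}{\overline Q_\tau},
\]
not $2^{-\tau}$. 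With your threshold $2^{-\tau/4}$ on $W$, the bad event can have tilted probability up to $2^{-\tau/2}/\overline Q_\tau$. This exceeds $2^{-\tau/4}$ whenever $\overline Q_\tau<2^{-\tau/4}$, so \eqref{eq:light-achievability-down} is not established for all parameters. The weaker bound would still suffice asymptotically in Theorem~\ref{thm:main-f}, since there $\overline Q_\tau\to\Phi(x)>0$, but the lemma is a one-shot statement.

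\textbf{The repair is to measure deviations on the scale of the actual argument of $f$.} The quantity $\overline Q_\tau W=\overline Q_\tau M\lambda_z/Q_\tau(y)-\overline Q_\tau$ has tilted second moment at most $\overline Q_\tau^2\cdot2^{-\tau}/\overline Q_\tau=\overline Q_\tau2^{-\tau}\le2^{-\tau}$. Chebyshev at threshold $2^{-\tau/4}$ then gives bad-event probability at most $2^{-\tau/2}$. On the good event, the argument lies within $2^{-\tau/4}$ of $\overline Q_\tau$, as required.

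The paper's formulation makes this automatic. Since $R=\overline Q_\tau\,(U_Z\times Q_{YS}^\circ)$, taking $T=U_Z\times Q_{YS}^\circ$ gives $\sum_\omega T(\omega)|u_\omega-\overline Q_\tau|=\|L-R\|_1\le2^{-\tau/2}$. This is the same $\ell_1$ budget as in the fixed-reference case, so no fibrewise reweighting computation is needed.
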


\begin{proof}
Fix $y$ and let
$L_{S,y}(z)=\sum_{x:\varphi_S(x)=z}\ell_y(x)$.  From
Lemma~\ref{lem:two-universal-second-moment} and the cap in
\eqref{eq:light-part},
\begin{equation}
 \E_S\sum_z\left(L_{S,y}(z)-\frac{Q_\tau(y)}M\right)^2
 \le\frac{2^{-\tau}}M Q_\tau(y).
 \label{eq:L2-light}
\end{equation}
Cauchy--Schwarz, followed by Jensen, yields
\begin{equation}
 \E_S\sum_z\left|L_{S,y}(z)-\frac{Q_\tau(y)}M\right|
 \le2^{-\tau/2}.
 \label{eq:L1-light}
\end{equation}
Let $L_{ZYS}$ be the joint subdistribution produced by hashing the light part,
including $P_S$ and $P_Y$, and let $R_{ZYS}$ have conditional output
$Q_\tau(y)U_Z$.  Averaging \eqref{eq:L1-light} over $Y$ gives
$\|L-R\|_1\le\varepsilon:=2^{-\tau/2}$.

We use the following observation twice.  Suppose that $T$ is a probability
law and $R(\omega)=v_\omega T(\omega)$ with $v_\omega\in[0,1]$.  All sums
below are over points with $T(\omega)>0$.  Since the
full distribution induced by the hash dominates $L$ coordinatewise and $f$
is nonincreasing,
\begin{equation}
 D_f(P_{\varphi_S(X)YS}\|T)
 \le\sum_{\omega:T(\omega)>0}T(\omega)
 f\!\left(\frac{L(\omega)}{T(\omega)}\right).
 \label{eq:monotone-light}
\end{equation}
Put $u_\omega=L(\omega)/T(\omega)$.  Only coordinates with
$u_\omega<v_\omega$ can increase the right side relative to
$\sum_\omega T(\omega)f(v_\omega)$.  For any $\eta>0$, uniform continuity
on $[0,1]$ and Markov's inequality give
\begin{align}
 \sum_\omega T(\omega)[f(u_\omega)-f(v_\omega)]
 &\le\omega_f(\eta)+\frac{f(0)}{\eta}
 \sum_\omega T(\omega)|u_\omega-v_\omega|\\
 &\le\omega_f(\eta)+\frac{f(0)\varepsilon}{\eta}.
\end{align}
For $T=U_Z\times P_Y\times P_S$, one has
$v_\omega=Q_\tau(y)$, which proves \eqref{eq:light-achievability} after
choosing $\eta=\sqrt\varepsilon=2^{-\tau/4}$.  If
$\overline Q_\tau>0$, then
$R=\overline Q_\tau(U_Z\times Q_{YS}^\circ)$ by
\eqref{eq:one-shot-tail-tilted-reference}.  Applying the same observation
with this probability law $T$ and the constant
$v_\omega=\overline Q_\tau$ proves \eqref{eq:light-achievability-down}.
\end{proof}

\section{Conditional Water Filling}
\label{sec:waterfilling}

This section solves the capped optimization underlying the one-shot bounds
and shows that its optimizer is universal over the admissible generators.
The term ``water filling'' refers to the form of the optimizer in
\eqref{eq:waterfill}: starting from $t=1$, one raises the common multiplier
$t$, so that each coordinate $tp_x$ grows until it reaches the ceiling $c$;
coordinates at the ceiling remain there while the others continue to fill.
The value of $t$ is the level at which the total mass becomes one.

\begin{lemma}
\label{lem:waterfilling}
Let $p$ be a probability vector on its support and let $a>0$.  If
$c|\supp p|\ge1$, a common optimizer of \eqref{eq:gdef} and
\eqref{eq:scaled-gdef} is
\begin{equation}
 q_x^*=\min\{c,tp_x\},
 \label{eq:waterfill}
\end{equation}
where $t\ge1$ is chosen so that
$\sum_x\min\{c,tp_x\}=1$.  If $c|\supp p|<1$, an optimizer is
$q_x^*=c$ on the support, with the remaining mass assigned to dummy symbols.
In either case the displayed optimizer is independent of both $f$ and $a$.
\end{lemma}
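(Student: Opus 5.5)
The plan is to prove optimality through first-order (directional-derivative) conditions of a convex program, after rewriting the cost in likelihood-ratio form. The optimizer in \eqref{eq:waterfill} depends only on $p$ and $c$. Multiplying $p$ by $a>0$ scales all likelihood ratios by the same factor and so preserves their ordering. It therefore suffices to fix an arbitrary admissible $f$ and an arbitrary $a>0$ and verify optimality of $q^*$ for that pair.

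\emph{Setup.} First I will check that $q^*$ is well defined and feasible. In the case $c|\supp p|\ge1$, the map $t\mapsto\sum_x\min\{c,tp_x\}$ is continuous and nondecreasing. It equals $\sum_x\min\{c,p_x\}\le1$ at $t=1$ and $\min\{1,c|\supp p|\}\cdot$-type values that reach $c|\supp p|\ge1$ for large $t$. Hence a level $t\ge1$ with total mass one exists, and the dummy mass of $q^*$ is zero.

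Next I will view the objective in \eqref{eq:scaled-gdef} as a sum of perspective terms $\Psi(u,v):=vf(u/v)$. There is one term per support point, with $(u,v)=(ap_x,q_x)$, and one dummy term with $(u,v)=(0,1-\sum_xq_x)$; the dummy term equals $(1-\sum_x q_x)f(0)$. Because $\Psi$ is jointly convex, the objective is convex in $q$. The feasible set $\mathcal Q_c$ is convex. So $q^*$ is a global minimizer as soon as the one-sided directional derivative of the objective at $q^*$ is nonnegative in every feasible direction $q-q^*$.

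\emph{Key monotone quantity.} For fixed $u$, the derivative of $v\mapsto vf(u/v)$ in $v$ is $\psi(\lambda):=f(\lambda)-\lambda f'(\lambda)$, where $\lambda=u/v$. Formally $\psi'(\lambda)=-\lambda f''(\lambda)\le0$, so $\psi$ is nonincreasing. At the dummy, $\lambda=0$ and $\psi(0)=f(0)$.

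Since $f$ need not be differentiable, I will use one-sided derivatives instead. Increasing $v$ corresponds to decreasing $\lambda$, which uses the left derivative $f'_-(\lambda)$. Decreasing $v$ uses the right derivative $f'_+(\lambda)$. For a convex $f$, the one-sided versions $\psi_\pm(\lambda)=f(\lambda)-\lambda f'_\mp(\lambda)$ are nonincreasing. Moreover, for $\lambda<\lambda'$ one has $\psi_\pm(\lambda)\ge\psi_\mp(\lambda')$, by monotonicity of one-sided derivatives together with the chord inequality.

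\emph{Optimality check.} Any feasible direction decomposes into mass increments $\Delta_x$ with total sum zero once the dummy coordinate is included. Increments are allowed only where the coordinate is below its cap (or at the dummy). Decrements are allowed only where the coordinate is positive.

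In the case $c|\supp p|\ge1$, the ratios at $q^*$ are as follows:
\begin{itemize}
\item every uncapped support point has ratio $\lambda=a/t$;
\item every capped point has ratio $ap_x/c\ge a/t$, since $tp_x\ge c$;
\item the dummy has mass zero, so it can only receive mass, at ratio $0$.
\end{itemize}
Each unit of mass added goes to a point of ratio at most $a/t$: an uncapped point, or the dummy. Each unit removed comes from a point of ratio at least $a/t$. By the comparison inequality for $\psi_\pm$, the marginal gain from adding mass is at least the marginal saving from removing it. Pairing additions with removals (the total change is zero) shows the directional derivative is $\ge0$.

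In the case $c|\supp p|<1$, all support points sit at the cap and the dummy carries positive mass at ratio $0$. The only feasible moves are removing mass from support points and adding it to the dummy. The gain is $\psi(0)=f(0)$, which dominates $\psi$ at any positive ratio, so again no feasible direction decreases the objective.

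\emph{Main obstacle.} The step needing the most care is making the directional-derivative argument rigorous without differentiability. This involves three things:
\begin{itemize}
\item stating precisely the one-sided derivative of the perspective term in $v$;
\item handling $\lambda=0$ at the dummy, where $\psi_\pm(0)=f(0)$ because $f$ is bounded near $0$;
\item handling ties where the added-to and removed-from ratios are equal, where the inequality $\psi_+(\lambda)\ge\psi_-(\lambda)$ is what is needed.
\end{itemize}
I would first try the clean form: for a convex function $F$, $F(q)-F(q^*)\ge F'(q^*;q-q^*)$, with the right derivative computed termwise. The termwise computation is justified because each term is a one-dimensional convex function of its own coordinate.
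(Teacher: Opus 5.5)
Your proposal is correct and follows essentially the same route as the paper. Both check first-order optimality of the convex program using the nonincreasing quantity $f(u)-uf'(u)$ (the paper's $\gamma$), which ranks capped coordinates against uncapped ones, together with $f(u)-u\xi\le f(0)$, which excludes moving mass to the dummy symbol. The only difference is presentation: the paper exhibits explicit KKT multipliers $-\gamma(a/t)$ and $\gamma(a/t)-\gamma(ap_x/c)\ge0$, whereas you phrase the same condition as an exchange bound on one-sided directional derivatives, which handles nondifferentiable $f$ slightly more explicitly.
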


\begin{proof}
The perspective in \eqref{eq:df-perspective} is convex in $q$.  Moving mass
from a dummy symbol to a support coordinate that is below the cap cannot
increase the objective.  Indeed, for $u>0$ and any subgradient $\xi\in\partial
f(u)$, convexity at zero gives
\begin{equation}
 f(u)-u\xi\le f(0),
 \label{eq:dummy-transfer}
\end{equation}
and $f(u)-u\xi$ is a subgradient of $q\mapsto qf(p/q)$.  Thus all support
coordinates are capped if $c|\supp p|<1$, while the mass constraint is active
if $c|\supp p|\ge1$.

In the latter case, $t\mapsto\sum_x\min\{c,tp_x\}$ is continuous and
nondecreasing, is at most one at $t=1$, and equals $c|\supp p|\ge1$ for all
sufficiently large $t$.  Hence a finite normalizing $t\ge1$ exists.

Consider the latter case and the scaled objective.  Let $u_0:=a/t$, and use
the right derivative of $f$ at nondifferentiability points.  The function
\begin{equation}
 \gamma(u):=f(u)-uf'_+(u)
\end{equation}
is nonincreasing.  Indeed, for $0<u<v$, convexity gives
\begin{align}
 \gamma(u)-\gamma(v)
 &\ge-(v-u)f'_+(v)-uf'_+(u)+vf'_+(v)\\
 &=u[f'_+(v)-f'_+(u)]\ge0.
 \label{eq:gamma-monotonicity}
\end{align}
On every uncapped coordinate in \eqref{eq:waterfill},
$ap_x/q_x^*=u_0$; on every capped coordinate,
$ap_x/q_x^*\ge u_0$.  Hence the Karush--Kuhn--Tucker conditions hold with the
mass multiplier $-\gamma(u_0)$ and cap multipliers
$\gamma(u_0)-\gamma(ap_x/c)\ge0$.  Convexity makes these conditions
sufficient.  The scalar $t$ is determined only by normalization, proving
optimality and independence from $f$ and $a$.  Taking $a=1$ recovers
\eqref{eq:gdef}.
\end{proof}

The water-filling value admits an exact representation in terms of the
surprisal $J_p=-\log p_X(X)$ under $X\sim p$.

\begin{lemma}
\label{lem:prob-rep}
Let $c=2^{-h}$ and $c|\supp p|>1$.  Let $t$ be the scalar in
Lemma~\ref{lem:waterfilling}, put $b=h+\log t$, and define
\begin{equation}
 R_p(b):=\Prb\{J_p>b\}
 +\E[2^{J_p-b}\ind\{J_p\le b\}].
 \label{eq:Rpb}
\end{equation}
Then
\begin{align}
 1&=2^{b-h}R_p(b),\label{eq:normalization-b}\\
 g_{f,2^{-h}}(p)
 &=t\Prb\{J_p>b\}f(1/t)\\
 &\quad+\E[2^{J_p-h}f(2^{h-J_p})\ind\{J_p\le b\}].
 \label{eq:g-prob-rep}
 \\
 g_{f,2^{-h}}^{(a)}(p)
 &=t\Prb\{J_p>b\}f(a/t)\\
 &\quad+\E[2^{J_p-h}f(a2^{h-J_p})\ind\{J_p\le b\}].
 \label{eq:scaled-g-prob-rep}
\end{align}
\end{lemma}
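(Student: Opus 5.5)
The plan is to substitute the water-filling optimizer of Lemma~\ref{lem:waterfilling} directly into the objectives \eqref{eq:df-perspective} and \eqref{eq:scaled-gdef}. Then I split the support of $p$ into capped and uncapped coordinates and rewrite each sum as an expectation over $X\sim p$. Since $c|\supp p|>1\ge 1$, we are in the first case of that lemma. The optimizer is therefore $q_x^*=\min\{c,tp_x\}$ with $\sum_x q_x^*=1$, so the dummy term $(1-\sum_x q_x^*)f(0)$ vanishes. Moreover, that lemma says the same $q^*$ is optimal for every scale $a$, so the scaled identity \eqref{eq:scaled-g-prob-rep} follows from the same computation. The unscaled identity \eqref{eq:g-prob-rep} is then the case $a=1$.

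The first step is to translate the cap condition into the surprisal threshold. Coordinate $x$ is uncapped, i.e.\ $tp_x< c$, exactly when $-\log p_x>h+\log t=b$, i.e.\ $J_p>b$. It is capped, $q_x^*=c$, when $J_p\le b$; on the boundary $tp_x=c$ both descriptions agree. With this dictionary the normalization identity is a rewriting of $\sum_x q_x^*=1$:
\begin{equation*}
1=\sum_{x:J>b}tp_x+\sum_{x:J\le b}2^{-h}
=t\Prb\{J_p>b\}+\E\bigl[2^{-h}p_X^{-1}\ind\{J_p\le b\}\bigr].
\end{equation*}
Using $p_X^{-1}=2^{J_p}$ and $t=2^{b-h}$, this equals $2^{b-h}\bigl(\Prb\{J_p>b\}+\E[2^{J_p-b}\ind\{J_p\le b\}]\bigr)=2^{b-h}R_p(b)$, which is \eqref{eq:normalization-b}.

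The second step computes the value of the scaled objective. On uncapped coordinates the ratio is $ap_x/q_x^*=a/t$, so these coordinates contribute $\sum_{J>b}tp_xf(a/t)=t\Prb\{J_p>b\}f(a/t)$. On capped coordinates the contribution is $2^{-h}f(a2^hp_x)$. Writing $\sum_x 2^{-h}(\cdot)=\E_p[2^{-h}p_X^{-1}(\cdot)]=\E[2^{J_p-h}(\cdot)]$ and $p_x=2^{-J_p}$ yields $\E[2^{J_p-h}f(a2^{h-J_p})\ind\{J_p\le b\}]$. Together these give \eqref{eq:scaled-g-prob-rep}, and setting $a=1$ gives \eqref{eq:g-prob-rep}.

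The argument is essentially bookkeeping, and the only delicate point is the boundary $tp_x=c$, equivalently $J_p=b$. I would check that such points are assigned consistently: both formulas give $q_x^*=c$ there, so counting them as capped in $\{J_p\le b\}$ is harmless. I would also confirm that $t\ge1$ is finite, which guarantees that $b$ is well defined. Both facts are supplied by Lemma~\ref{lem:waterfilling}.
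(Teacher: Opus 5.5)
Your proposal is correct and follows the paper's proof: both substitute the water-filling optimizer $q_x^*=\min\{c,tp_x\}$, identify the capped set with $\{J_p\le b\}$, and obtain the normalization and both value formulas from the likelihood ratios $1/t$ (uncapped) and $2^{h-J_p}$ (capped), scaled by $a$. Your check of the boundary $J_p=b$, where both formulas give $q_x^*=c$ and the two ratios coincide, is consistent; the paper leaves this point implicit.
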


\begin{proof}
A coordinate is capped exactly when $tp_x\ge2^{-h}$, equivalently
$J_p(x)\le b$.  On the uncapped set the total $q$-mass is
$t\Prb\{J_p>b\}$; on the capped set it is
$t\E[2^{J_p-b}\ind\{J_p\le b\}]$.  Their sum is one, proving
\eqref{eq:normalization-b}.  The likelihood ratio $p_x/q_x^*$ equals $1/t$
on the uncapped set and $2^{h-J_p(x)}$ on the capped set.  Scaling it by $a$
and substituting into \eqref{eq:df-perspective} and
\eqref{eq:scaled-gdef} proves \eqref{eq:g-prob-rep} and
\eqref{eq:scaled-g-prob-rep}.
\end{proof}

The next lemma is the asymptotic link between capped divergences and
transformed surprisal-tail probabilities.  Its local uniformity in the scale
will be used when the reference marginal is optimized.

\begin{lemma}
\label{lem:f-tail}
\label{lem:scaled-f-tail}
Let $p_n$ be probability vectors, let $J_n=-\log p_n(X_n)$ under
$X_n\sim p_n$, and let $h_n$ be thresholds.  Suppose
\begin{align}
 \Prb\{J_n\ge h_n\}&\to q\in(0,1),\label{eq:tail-assumption}\\
 \sup_{|u|\le K}\Prb\{|J_n-h_n-u|\le B\}&\to0
 \quad\text{for every fixed }K,B>0,
 \label{eq:anti-concentration}
\end{align}
and $2^{-h_n}|\supp p_n|>1$ eventually.  Then, for every
$0<a_-<a_+<\infty$,
\begin{equation}
 \sup_{a\in[a_-,a_+]}
 \left|g_{f,2^{-h_n}}^{(a)}(p_n)-f(aq)\right|\to0.
 \label{eq:scaled-f-tail-limit}
\end{equation}
In particular,
\begin{equation}
 g_{f,2^{-h_n}}(p_n)\to f(q).
 \label{eq:f-tail-limit}
\end{equation}
For every $a\ge0$, one also has the bounds
\begin{equation}
 f(a)\le g_{f,c}^{(a)}(p)\le f(0).
 \label{eq:small-a-bounds}
\end{equation}
\end{lemma}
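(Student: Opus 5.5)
We argue from the exact representation of Lemma~\ref{lem:prob-rep}. With $c_n=2^{-h_n}$, let $t_n\ge1$ be the water-filling scalar from Lemma~\ref{lem:waterfilling} and set $b_n=h_n+\log t_n$, so that $t_n=2^{b_n-h_n}$. The plan has three steps: locate $b_n$, identify $t_n$, and then pass to the limit in \eqref{eq:scaled-g-prob-rep} uniformly in $a$.

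\emph{Step 1: $b_n-h_n$ is bounded.} The normalization \eqref{eq:normalization-b} reads $2^{-(b_n-h_n)}=R_{p_n}(b_n)$. Since $R_{p_n}(b)\ge\Prb\{J_n>b\}$, and $b\mapsto\Prb\{J_n>b\}$ is nonincreasing, $b_n-h_n$ cannot drift to $+\infty$. Otherwise, for every fixed $K$,
\[
2^{-(b_n-h_n)}\ge\Prb\{J_n>h_n+K\}\to q,
\]
where the limit follows from \eqref{eq:tail-assumption} and \eqref{eq:anti-concentration}, since the mass in $[h_n,h_n+K]$ vanishes. This is a contradiction. We also have $b_n\ge h_n$ because $t_n\ge1$. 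So $u_n:=b_n-h_n\in[0,K_0]$ eventually.

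\emph{Step 2: $t_n\to 1/q$.} Write
\[
R_{p_n}(b_n)=\Prb\{J_n>b_n\}+\E[2^{J_n-b_n}\ind\{J_n\le b_n\}].
\]
Split the second term into the part with $J_n\in(b_n-B,b_n]$ and the part with $J_n\le b_n-B$. The first part is at most $\Prb\{|J_n-h_n-u_n|\le B\}\to0$, uniformly because $u_n$ is bounded. The second part is at most $2^{-B}$. Letting $B\to\infty$ after $n\to\infty$, the second term vanishes. Likewise $\Prb\{J_n>b_n\}-\Prb\{J_n\ge h_n\}\to0$, because the intervening mass lies in a window of bounded width around $h_n+u$ with $|u|\le K_0$. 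Hence $2^{-u_n}\to q$, that is, $t_n\to1/q$, and $t_n\Prb\{J_n>b_n\}\to1$.

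\emph{Step 3: limit in \eqref{eq:scaled-g-prob-rep}.} The first term is $t_n\Prb\{J_n>b_n\}f(a/t_n)$. By Step~2 its prefactor tends to one, and $f(a/t_n)\to f(aq)$ uniformly for $a\in[a_-,a_+]$, by uniform continuity of $f$ on compacts of $[0,\infty)$ (convex, hence continuous).

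The second term is where care is needed, because $f$ may be unbounded below for large arguments. On $\{J_n\le b_n\}$ the argument is $v:=a2^{h_n-J_n}\ge a2^{-u_n}$, which is bounded below. The weight is $2^{J_n-h_n}=a/v$, so the integrand equals $a\,f(v)/v$. Since $f(v)/v\to0$ as $v\to\infty$ and $f$ is continuous, $|f(v)|/v$ is bounded on $[a_-2^{-K_0},\infty)$ by some constant $C$. The integrand is therefore at most $a_+C$ in absolute value. Split again at $J_n\le b_n-B$:
\begin{itemize}
\item on the window $J_n\in(b_n-B,b_n]$, the probability vanishes by \eqref{eq:anti-concentration};
\item on $J_n\le b_n-B$, we have $v\ge a2^{B-u_n}$ large, so $|f(v)|/v\le\sup_{w\ge a_-2^{B-K_0}}|f(w)|/w$, which tends to $0$ as $B\to\infty$.
\end{itemize}
All bounds are uniform in $a\in[a_-,a_+]$, which gives \eqref{eq:scaled-f-tail-limit}. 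Taking $a=1$ gives \eqref{eq:f-tail-limit}. The main technical point is exactly this control of possibly negative $f$ at large arguments, which is handled by the sublinearity assumption \eqref{eq:f-assumptions}.

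\emph{Bounds \eqref{eq:small-a-bounds}.} For the upper bound, choose $q=0$ in \eqref{eq:scaled-gdef}. All mass then sits on the dummy symbols, and by the reference-zero convention the value is $f(0)$.

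For the lower bound, fix any feasible $q$ and put $\sigma=\sum_x q_x$. Apply Jensen's inequality to $f$ with weights $q_x$ and $1-\sigma$, at points $ap_x/q_x$ and $0$. Treating $q_x=0$ terms via the perspective convention, this gives
\[
\text{objective}\ge f\Bigl(\sum_x q_x\frac{ap_x}{q_x}\Bigr)=f\Bigl(a\sum_{x:q_x>0}p_x\Bigr)\ge f(a),
\]
where the last inequality uses that $f$ is nonincreasing. If some $p_x$ has $q_x=0$, one uses joint convexity of the perspective, equivalently the recession value $0$, to obtain the same inequality, with the mass deficit only helping since $f$ is nonincreasing.
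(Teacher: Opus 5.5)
Your route is the paper's: bound the water level $b_n-h_n$, deduce $R_{p_n}(b_n)\to q$ and $t_n\to1/q$, then split the capped term at $b_n-B$ and use $f(v)/v\to0$. Steps~2 and~3 and both bounds in \eqref{eq:small-a-bounds} are fine. Your upper bound via the zero vector is a slightly cleaner variant of the paper's.

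\textbf{The gap is in Step~1.} You want to refute $b_n\ge h_n+K$. Under that hypothesis, the two facts you cite give
\[
2^{-(b_n-h_n)}=R_{p_n}(b_n)\ge\Prb\{J_n>b_n\}
\quad\text{and}\quad
\Prb\{J_n>b_n\}\le\Prb\{J_n>h_n+K\}.
\]
These do not chain into $2^{-(b_n-h_n)}\ge\Prb\{J_n>h_n+K\}$. The monotonicity of $b\mapsto\Prb\{J_n>b\}$ points the wrong way, and so does that of $R_{p_n}(b)=\E[\min\{1,2^{J_n-b}\}]$, which is also nonincreasing. The claimed inequality is false in general. Mass in $(h_n+K,b_n]$ counts fully in $\Prb\{J_n>h_n+K\}$. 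In $R_{p_n}(b_n)$, however, it enters only with weight $2^{J_n-b_n}$, which can be tiny when $b_n$ is far above $h_n+K$. All later steps use $u_n\in[0,K_0]$ to invoke \eqref{eq:anti-concentration}, so the proof is incomplete as it stands.

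\textbf{Fix.} Keep the factor $2^{b-h_n}$ from \eqref{eq:normalization-b}. The function
\[
F_n(b):=2^{b-h_n}R_{p_n}(b)=\E[\min\{2^{b-h_n},2^{J_n-h_n}\}]
\]
is nondecreasing in $b$, and $F_n(b_n)=1$. Choose $D>0$ with $2^Dq>1$. Then
\[
F_n(h_n+D)\ge2^D\Prb\{J_n>h_n+D\}\to2^Dq>1,
\]
where the limit uses \eqref{eq:tail-assumption} and \eqref{eq:anti-concentration}. Hence $b_n<h_n+D$ eventually. This is exactly the paper's argument. It supplies the explicit bound $K_0=D$, after which your Steps~2 and~3 go through unchanged.
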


\begin{proof}
Let $b_n$ and $t_n$ be the water level and multiplier from
Lemma~\ref{lem:prob-rep}.  Since $t_n\ge1$, $b_n\ge h_n$.  Define
$F_n(b)=2^{b-h_n}R_{p_n}(b)$, so $F_n(b_n)=1$.  Choose $D>0$ with
$2^Dq>1$.  The anti-concentration condition gives
$\Prb\{J_n>h_n+D\}\to q$, and hence $F_n(h_n+D)>1$ eventually.  Since $F_n$
is nondecreasing,
\begin{equation}
 0\le b_n-h_n\le D
 \label{eq:water-level-bound}
\end{equation}
eventually, and uniform anti-concentration gives
$\Prb\{J_n>b_n\}\to q$.

As in \eqref{eq:Rpb}, split the sub-threshold term at $b_n-B$.  The far part
is at most $2^{-B}$ and the near part is $o(1)$, uniformly under
\eqref{eq:water-level-bound}.  First letting $n\to\infty$ and then
$B\to\infty$ yields $R_{p_n}(b_n)\to q$.  Therefore
$t_n=1/R_{p_n}(b_n)\to1/q$, and
$t_n\Prb\{J_n>b_n\}\to1$.  Uniform continuity of $f$ on the resulting
compact set of arguments shows that the uncapped term in
\eqref{eq:scaled-g-prob-rep} converges uniformly on $[a_-,a_+]$ to $f(aq)$.

For the capped term, put $u_n=a2^{h_n-J_n}$.  Its integrand is
$a f(u_n)/u_n$.  On $J_n\le b_n-B$, it is bounded in absolute value by
\begin{equation}
 a_+\sup_{u\ge a_-2^{B-D}}\frac{|f(u)|}{u},
\end{equation}
which tends to zero as $B\to\infty$ by the second condition in
\eqref{eq:f-assumptions}.  On the
remaining fixed-width strip, $u_n$ stays in a compact subset of
$(0,\infty)$ uniformly in $a$, while anti-concentration makes the probability
of the strip vanish.  This proves \eqref{eq:scaled-f-tail-limit}; taking an
interval containing $a=1$ gives \eqref{eq:f-tail-limit}.

Finally, apply Jensen's inequality to the optimizer from
Lemma~\ref{lem:waterfilling}, including its dummy mass.  The weighted average
of the likelihood-ratio arguments is $a$, so the objective is at least
$f(a)$.  Since $f$ is nonincreasing, every term is at most $f(0)$, proving
\eqref{eq:small-a-bounds}.
\end{proof}

\section{Conditional Product Asymptotics}
\label{sec:conditional-asymptotics}

This section derives the conditional and unconditional Gaussian limits used
to evaluate the one-shot bounds.

We begin with the required normal approximation.  The displayed form is the
classical Berry--Esseen inequality for independent, not necessarily
identically distributed summands; see, e.g., \cite{chen_shao_2001}.

\begin{fact}[Berry--Esseen]
\label{fact:BE}
If $W_1,\ldots,W_n$ are independent, centered, finite-valued random variables
and $B_n^2=\sum_i\E W_i^2>0$, then for a universal constant
$C_{\mathrm{BE}}$,
\begin{equation}
 \sup_t\left|\Prb\!\left\{\frac{\sum_iW_i}{B_n}\le t\right\}
 -\Phi(t)\right|
 \le C_{\mathrm{BE}}\frac{\sum_i\E|W_i|^3}{B_n^3}.
 \label{eq:BE}
\end{equation}
\end{fact}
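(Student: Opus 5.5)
The statement is the classical Berry--Esseen inequality for independent, non-identically distributed summands, which the paper imports from \cite{chen_shao_2001}. I would prove it by the Esseen smoothing route; finite-valuedness is not needed beyond guaranteeing that all moments exist. Put $S_n:=\sum_iW_i/B_n$, $\sigma_i^2:=\E W_i^2$, and $L_n:=\sum_i\E|W_i|^3/B_n^3$.

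If $L_n\ge 1/4$, the claim is trivial with $C_{\mathrm{BE}}\ge 4$, because the left side of \eqref{eq:BE} is at most one. So assume $L_n<1/4$. Lyapunov's inequality gives $\sigma_i^3\le\E|W_i|^3$, hence $\max_i\sigma_i/B_n\le L_n^{1/3}$, so every individual summand is small on the scale $B_n$.

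The first tool is Esseen's smoothing inequality. For any $T>0$,
\begin{equation*}
 \sup_x\bigl|\Prb\{S_n\le x\}-\Phi(x)\bigr|
 \le\frac1\pi\int_{-T}^{T}\frac{|\psi_n(t)-e^{-t^2/2}|}{|t|}\,dt
 +\frac{24}{\pi\sqrt{2\pi}\,T},
\end{equation*}
where $\psi_n(t)=\prod_i\psi_i(t/B_n)$ and $\psi_i$ is the characteristic function of $W_i$. This inequality is proved by convolving the difference of distribution functions with a Fej\'er-type kernel whose Fourier transform is supported in $[-T,T]$, and then using the bounded density of $\Phi$. I choose $T=1/(4L_n)$, which makes the boundary term $O(L_n)$.

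The core step is a characteristic-function estimate: for $|t|\le 1/(4L_n)$,
\begin{equation*}
 |\psi_n(t)-e^{-t^2/2}|\le 16L_n|t|^3e^{-t^2/3}.
\end{equation*}
Inserting this into the smoothing integral gives a bound $CL_n$. I would prove the estimate in two ranges.

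\emph{Small range}, $|t|\le L_n^{-1/3}/2$. Here Taylor's theorem gives $\psi_i(u)=1-\sigma_i^2u^2/2+\theta_i\E|W_i|^3|u|^3/6$ with $|\theta_i|\le1$. Each $|1-\psi_i(t/B_n)|\le 1/2$, since $\sigma_i|t|/B_n\le L_n^{1/3}|t|$. Using $|\log(1+z)-z|\le|z|^2$ for $|z|\le1/2$, summing over $i$, and controlling $\sum_i\sigma_i^4t^4/B_n^4\le L_n|t|^3\max_i\sigma_i|t|/B_n$, I would get
\begin{equation*}
 \log\psi_n(t)=-\tfrac{t^2}{2}+\rho,\qquad |\rho|\le CL_n|t|^3.
\end{equation*}
Then $|e^{\rho}-1|\le|\rho|e^{|\rho|}$ together with $|\rho|\le t^2/6$ yields the bound with a factor $e^{-t^2/3}$.

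\emph{Large range}, $L_n^{-1/3}/2<|t|\le1/(4L_n)$. Here I would use symmetrization. If $W_i'$ is an independent copy of $W_i$, then $|\psi_i(u)|^2$ is the characteristic function of $W_i-W_i'$, whose variance is $2\sigma_i^2$ and whose absolute third moment is at most $8\E|W_i|^3$. This gives
\begin{equation*}
 |\psi_i(u)|^2\le\exp\!\bigl(-\sigma_i^2u^2+\tfrac43\E|W_i|^3|u|^3\bigr),
\end{equation*}
and therefore $|\psi_n(t)|\le e^{-t^2/3}$ whenever $|t|\le 1/(4L_n)$. In this range $L_n|t|^3\ge1/8$, so $|\psi_n(t)|+e^{-t^2/2}\le 2e^{-t^2/3}\le16L_n|t|^3e^{-t^2/3}$.

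I expect the main obstacle to be this second range, because the product of many factors that are individually close to one must be controlled without a logarithm expansion; the symmetrization trick is exactly what handles it. The remaining work is careful bookkeeping of constants. Since the paper only needs some universal $C_{\mathrm{BE}}$, I would not optimize them.
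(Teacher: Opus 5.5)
Your proposal is correct, and it takes a different route from the proof the paper relies on. It is the classical Fourier-analytic proof: Esseen's smoothing inequality with $T=1/(4L_n)$, combined with the textbook characteristic-function lemma $|\psi_n(t)-e^{-t^2/2}|\le16L_n|t|^3e^{-t^2/3}$ for $|t|\le1/(4L_n)$. Your two-range argument for that lemma checks out. In the small range one gets $|1-\psi_i(t/B_n)|\le1/8$ and $|\rho|\le\tfrac{7}{24}L_n|t|^3\le t^2/6$. In the large range symmetrization gives $|\psi_n(t)|^2\le e^{-t^2+\frac43L_n|t|^3}\le e^{-2t^2/3}$.

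The paper does not prove Fact~\ref{fact:BE} itself. It imports it from Chen and Shao, whose argument is Stein's method (bounds on the solution of the Stein equation plus a concentration inequality). It is also the single statement on which the Lean development remains conditional.

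Your route is self-contained apart from the smoothing lemma, and it gives an explicit, if unoptimized, constant. Its cost is the Fourier machinery: characteristic functions and the Fej\'er-kernel smoothing argument, which you only sketch and which carries much of the remaining technical weight. The Stein route avoids characteristic functions and smoothing altogether, gives a small explicit constant, and yields non-uniform bounds in $t$ as a by-product. The paper only needs the existence of some universal $C_{\mathrm{BE}}$, through \eqref{eq:conditional-BE-rate} and the anti-concentration estimate \eqref{eq:uniform-anticoncentration}, so either route suffices.
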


For $y$ with $P_Y(y)>0$, abbreviate
\begin{align}
 h(y)&:=\E_{X|Y=y}[\imath_{X|Y}(X|y)]=H(X|Y=y),\label{eq:hy}\\
 v(y)&:=\Var_{X|Y=y}[\imath_{X|Y}(X|y)],\label{eq:ve}\\
 \rho(y)&:=\E_{X|Y=y}|\imath_{X|Y}(X|y)-h(y)|^3.
 \label{eq:rhoe}
\end{align}
Finite alphabets imply $\rho_*:=\max_y\rho(y)<\infty$.  Define
the block conditional surprisal and the threshold parametrization by
\begin{align}
 \imath_n(X^n|y^n)&:=-\log P_{X^n|Y^n}(X^n|y^n),
 \label{eq:block-conditional-surprisal}\\
 h_n(x)&:=nH-\sqrt{nV}\,x.
 \label{eq:hn}
\end{align}
Conditional on $Y^n=y^n$, the mean and variance of the block surprisal are
\begin{align}
 \mu_n(y^n)&:=\sum_{i=1}^nh(y_i),
 \label{eq:conditional-mean}\\
 \sigma_n^2(y^n)&:=\sum_{i=1}^nv(y_i).
 \label{eq:conditional-variance}
\end{align}
Finally, define
\begin{equation}
 Q_n(Y^n;x):=\Prb\{\imath_n(X^n|Y^n)\ge h_n(x)|Y^n\}.
 \label{eq:Qn}
\end{equation}

\begin{lemma}
\label{lem:conditional-CLT}
Let $Z_n(Y^n) := \frac{\mu_n(Y^n) - nH}{\sqrt{n}}$.
If $V_2>0$, then for every fixed $x\in\mathbb R$,
\begin{align}
 Q_n(Y^n;x)-
 \Phi\!\left(\frac{x\sqrt V+Z_n(Y^n)}{\sqrt{V_2}}\right)
 &\longrightarrow0
 \quad\text{in probability},
 \label{eq:conditional-CLT}\\
 Z_n(Y^n)&\Longrightarrow Z\sim N(0,V_1).
 \label{eq:center-CLT}
\end{align}
\end{lemma}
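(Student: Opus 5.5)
The plan has two parts. The statement \eqref{eq:center-CLT} is the ordinary central limit theorem. Under $P_Y^{\times n}$ the variables $h(Y_i)$ are i.i.d. with mean $\E_Y h(Y)=H$ and variance $\Var_Y h(Y)=V_1$, so $Z_n(Y^n)=n^{-1/2}\sum_i(h(Y_i)-H)\Rightarrow N(0,V_1)$. If $V_1=0$, then $Z_n\equiv0$ and the limit is the point mass, which is still $N(0,V_1)$.

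For \eqref{eq:conditional-CLT}, condition on $Y^n=y^n$. The block surprisal is then $\sum_i W_i+\mu_n(y^n)$, where the $W_i=\imath_{X|Y}(X_i|y_i)-h(y_i)$ are independent, centered and finite-valued, and $B_n^2=\sigma_n^2(y^n)$. The event $\imath_n\ge h_n(x)$ is the event
\[
\sum_iW_i\ge nH-\sqrt{nV}\,x-\mu_n(y^n)=-\sqrt n\,\bigl(x\sqrt V+Z_n(y^n)\bigr).
\]
Fact~\ref{fact:BE}, applied to the complementary event (the strict versus non-strict inequality only costs another Berry--Esseen term, since the bound is uniform in $t$ and holds also for left limits), gives
\[
\Bigl|Q_n(y^n;x)-\Phi\Bigl(\tfrac{\sqrt n\,(x\sqrt V+Z_n(y^n))}{\sigma_n(y^n)}\Bigr)\Bigr|\le 2C_{\rm BE}\frac{n\rho_*}{\sigma_n(y^n)^3},
\]
whenever $\sigma_n(y^n)>0$.

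The remaining work is to control $\sigma_n^2(Y^n)/n$. It equals $\frac1n\sum_i v(Y_i)$, which converges almost surely, and hence in probability, to $\E_Y v(Y)=V_2>0$ by the weak law of large numbers. So on an event $E_n$ with $\Prb(E_n)\to1$ we have $\sigma_n^2\ge nV_2/2$, and there the Berry--Esseen error is $O(n^{-1/2})$ uniformly. It then remains to compare
\[
\Phi\bigl(A_n/s_n\bigr)\quad\text{with}\quad\Phi\bigl(A_n/\sqrt{V_2}\bigr),\qquad A_n:=x\sqrt V+Z_n,\quad s_n:=\sigma_n/\sqrt n\to\sqrt{V_2}\ \text{in probability}.
\]
Because $Z_n$ is tight by \eqref{eq:center-CLT}, $A_n$ is tight. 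On $\{|A_n|\le K\}\cap E_n$, the map $(a,s)\mapsto\Phi(a/s)$ is uniformly continuous on $[-K,K]\times[\sqrt{V_2/2},\infty)$, so the difference tends to zero in probability. Choosing $K$ large makes $\Prb\{|A_n|>K\}$ small uniformly in $n$, and $|\Phi|\le1$ bounds the contribution off these events.

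The main obstacle is this last step: there is no uniform control of $\Phi(A_n/s_n)-\Phi(A_n/\sqrt{V_2})$ for unbounded $A_n$, and the tightness-plus-truncation argument is what handles it. Alternatively, one can use the elementary bound $\sup_a|\Phi(a/s)-\Phi(a/s')|\le C|s/s'-1|$ for $s,s'$ bounded away from zero, which holds because $\Phi(\lambda u)-\Phi(u)$ is uniformly Lipschitz in $\lambda$ near $1$. That bound gives the conclusion directly, without needing tightness.
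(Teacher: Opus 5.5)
Your proposal is correct and follows the paper's proof. Both arguments use Berry--Esseen conditionally on $Y^n$, on the event $\sigma_n^2(Y^n)\ge nV_2/2$ (whose probability tends to one by the law of large numbers); both rewrite the threshold as $-\sqrt n\,(x\sqrt V+Z_n)$ and apply the ordinary CLT to $Z_n$. The only addition is that you spell out the replacement of $\sigma_n/\sqrt n$ by $\sqrt{V_2}$ inside $\Phi$, via tightness or via the scale-Lipschitz bound (either works), which the paper compresses into ``combining these observations.''
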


\begin{proof}
Conditional on $Y^n=y^n$, the $n$ surprisal summands are independent, with
mean sum $\mu_n(y^n)$, variance
$\sigma_n^2(y^n)=\sum_i v(y_i)$, and third absolute centered moments
$\rho(y_i)$.  The law of large numbers gives
$\sigma_n^2(Y^n)/n\to V_2$ in probability.  On
$\sigma_n^2(y^n)\ge nV_2/2$, Fact~\ref{fact:BE} gives a uniform error at most
\begin{equation}
 \frac{2^{3/2}C_{\mathrm{BE}}\rho_*}{V_2^{3/2}\sqrt n}.
 \label{eq:conditional-BE-rate}
\end{equation}
Furthermore,
\begin{equation}
 \frac{\mu_n(y^n)-h_n(x)}{\sigma_n(y^n)}
 =\frac{\sqrt n\,[x\sqrt V+Z_n(y^n)]}{\sigma_n(y^n)}.
\end{equation}
Combining these observations proves \eqref{eq:conditional-CLT}.  Finally,
$Z_n=n^{-1/2}\sum_i[h(Y_i)-H]$; the ordinary central limit theorem gives
\eqref{eq:center-CLT}, with the degenerate interpretation $Z_n=0$ if $V_1=0$.
\end{proof}

The next lemma transfers the conditional tail approximation to capped
divergences, locally uniformly in the scale used for reference-marginal
optimization.  Let
\begin{equation}
 d(y):=\log|\supp P_{X|Y=y}|-h(y)\ge0,
 \label{eq:entropy-defect}
\end{equation}
and put $D_0:=\E d(Y)$.  Equality in \eqref{eq:entropy-defect} holds exactly
for a uniform conditional fibre, so $V_2>0$ implies $D_0>0$.  For $K>0$ let
$\mathcal T_{n,K}$ be the event on which
\begin{align}
 |Z_n|&\le K,\qquad
 \left|\frac{\sigma_n^2}{n}-V_2\right|\le\frac{V_2}{2},\\
 \left|\frac1n\sum_{i=1}^nd(Y_i)-D_0\right|&\le\frac{D_0}{2}.
 \label{eq:typical-event}
\end{align}

\begin{lemma}
\label{lem:conditional-capped}
\label{lem:uniform-conditional-scaled}
Assume $V_2>0$ and fix $x\in\mathbb R$ and $K>0$.  For every
$0<a_-<a_+<\infty$,
\begin{equation}
 \sup_{y^n\in\mathcal T_{n,K}}\sup_{a\in[a_-,a_+]}
 \left|g_{f,2^{-h_n(x)}}^{(a)}(P_{X^n|Y^n=y^n})
 -f(aQ_n(y^n;x))\right|\longrightarrow0.
 \label{eq:uniform-conditional-scaled}
\end{equation}
Consequently,
\begin{equation}
 \E_{Y^n}\left|
 g_{f,2^{-h_n(x)}}(P_{X^n|Y^n})-f(Q_n(Y^n;x))
 \right|\to0.
 \label{eq:conditional-capped-limit}
\end{equation}
\end{lemma}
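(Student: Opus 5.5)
The plan is to treat this as a uniform version of Lemma~\ref{lem:scaled-f-tail} and to argue by contradiction through subsequences. Suppose \eqref{eq:uniform-conditional-scaled} fails. Then there are $\epsilon>0$, a subsequence of $n$, points $y^n\in\mathcal T_{n,K}$ and scales $a_n\in[a_-,a_+]$ with
\[
 \bigl|g^{(a_n)}_{f,2^{-h_n(x)}}(P_{X^n|Y^n=y^n})-f(a_nQ_n(y^n;x))\bigr|\ge\epsilon .
\]
On $\mathcal T_{n,K}$ we have $|Z_n|\le K$, so $x\sqrt V+Z_n(y^n)$ lies in a compact set and $\sigma_n^2/n\in[V_2/2,3V_2/2]$. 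By passing to a further subsequence, $Z_n(y^n)\to z$, $\sigma_n^2(y^n)/n\to s\in[V_2/2,3V_2/2]$ and $a_n\to a$.

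With the chosen sequence $p_n=P_{X^n|Y^n=y^n}$ and $h_n=h_n(x)$, I would verify the three hypotheses of Lemma~\ref{lem:scaled-f-tail}.

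\emph{Tail limit.} Fact~\ref{fact:BE}, applied exactly as in the proof of Lemma~\ref{lem:conditional-CLT} with the uniform rate \eqref{eq:conditional-BE-rate}, shows that $(J_n-\mu_n)/\sigma_n$ is within $O(n^{-1/2})$ of Gaussian in Kolmogorov distance, uniformly on $\mathcal T_{n,K}$. Hence
\[
 Q_n(y^n;x)=\Phi\bigl((x\sqrt V+Z_n)\sqrt n/\sigma_n\bigr)+O(n^{-1/2})\to\Phi\bigl((x\sqrt V+z)/\sqrt s\bigr)=:q\in(0,1).
\]
This gives \eqref{eq:tail-assumption}.

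\emph{Anti-concentration.} For $|u|\le K'$, the probability $\Prb\{|J_n-h_n-u|\le B\}$ is bounded by the Gaussian mass of an interval of width $2B/\sigma_n=O(n^{-1/2})$ plus twice the Berry--Esseen error. Both vanish uniformly, which gives \eqref{eq:anti-concentration}.

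\emph{Support size.} The support of $P_{X^n|Y^n=y^n}$ has logarithmic size $\sum_i\log|\supp P_{X|Y=y_i}|=\mu_n+\sum_id(y_i)$. On $\mathcal T_{n,K}$ this is at least $nH-K\sqrt n+nD_0/2$, which exceeds $h_n(x)=nH-\sqrt{nV}x$ for large $n$, because $D_0>0$ when $V_2>0$. So $2^{-h_n}|\supp p_n|>1$ eventually.

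Lemma~\ref{lem:scaled-f-tail} then gives $\sup_{a\in[a_-,a_+]}|g^{(a)}-f(aq)|\to0$ along the subsequence. Since $Q_n(y^n;x)\to q$ and $f$ is uniformly continuous on $[0,a_+]$ (it is continuous and convex there), also $f(a_nQ_n)\to f(aq)$ and $f(a_nq)\to f(aq)$. This contradicts the choice of $\epsilon$ and proves \eqref{eq:uniform-conditional-scaled}. The fact that the lemma is stated for arbitrary sequences is exactly what makes this compactness argument work. The step I expect to need the most care is checking that the Berry--Esseen bounds, and hence the anti-concentration, are uniform over $\mathcal T_{n,K}$; the uniform bound $\rho_*$ together with the variance window in $\mathcal T_{n,K}$ supplies this.

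For \eqref{eq:conditional-capped-limit}, take $a_-<1<a_+$ and split the expectation over $\mathcal T_{n,K}$ and its complement. On $\mathcal T_{n,K}$ the integrand tends to zero uniformly by the first part. On the complement, \eqref{eq:small-a-bounds} shows $g\in[0,f(0)]$ and $f(Q_n)\in[0,f(0)]$, so the integrand is at most $f(0)$. By the law of large numbers and the tightness of $Z_n$ from \eqref{eq:center-CLT}, $\limsup_n\Prb\{(\mathcal T_{n,K})^c\}\le\Prb\{|Z|\ge K\}$, which tends to $0$ as $K\to\infty$. Letting $n\to\infty$ first and then $K\to\infty$ finishes the proof.
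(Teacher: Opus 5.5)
Your proposal is correct and follows essentially the same route as the paper. It argues by contradiction along subsequences, checking the tail, uniform Berry--Esseen anti-concentration, and support-gap hypotheses of Lemma~\ref{lem:scaled-f-tail} uniformly on $\mathcal T_{n,K}$, and it handles the averaged statement by the same split over $\mathcal T_{n,K}$ and its complement. The only cosmetic difference is in how $q$ is obtained: you identify it by extracting convergent subsequences of $Z_n$ and $\sigma_n^2/n$, whereas the paper first proves the uniform bound $a_K\le Q_n(y^n;x)\le 1-a_K$ on $\mathcal T_{n,K}$ and then extracts a convergent subsequence of $Q_n$ directly.
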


\begin{proof}
On this event the Gaussian argument in the Berry--Esseen approximation used
in the proof of \eqref{eq:conditional-CLT} remains in a compact interval.
Thus, by \eqref{eq:conditional-BE-rate}, for large $n$, there is $a_K>0$ such that
\begin{equation}
 a_K\le Q_n(y^n;x)\le1-a_K
 \quad\text{for every }y^n\in\mathcal T_{n,K}.
 \label{eq:q-bounded-away}
\end{equation}
Fact~\ref{fact:BE}, applied at the two endpoints of any fixed-width interval,
also gives, for fixed $B,T>0$,
\begin{equation}
 \sup_{y^n\in\mathcal T_{n,K}}\sup_{|u|\le T}
 \Prb\{|\imath_n-h_n(x)-u|\le B|y^n\}=O(n^{-1/2}).
 \label{eq:uniform-anticoncentration}
\end{equation}

The conditional support obeys
 \begin{align}
 &\log|\supp P_{X^n|Y^n=y^n}|-h_n(x)\\
 &\quad=\sum_{i=1}^nd(y_i)+\mu_n(y^n)-h_n(x).
 \label{eq:support-gap}
\end{align}
On $\mathcal T_{n,K}$, the first term is at least $nD_0/2$ and the second is
$O_K(\sqrt n)$.  Hence the support condition of Lemma~\ref{lem:f-tail} holds
uniformly on $\mathcal T_{n,K}$.  

Suppose that \eqref{eq:uniform-conditional-scaled} were false.  There would
then exist $\epsilon_0>0$, a subsequence $n_j$, points
$y^{n_j}\in\mathcal T_{n_j,K}$, and scales $a_j\in[a_-,a_+]$ for which the
displayed absolute difference is at least $\epsilon_0$.  Passing to a further
subsequence, compactness and \eqref{eq:q-bounded-away} give
\begin{equation}
 a_j\to a\in[a_-,a_+],
 \qquad Q_{n_j}(y^{n_j};x)\to q\in[a_K,1-a_K].
\end{equation}
For the sequence of conditional probability vectors
$P_{X^{n_j}|Y^{n_j}=y^{n_j}}$, the tail assumption of
Lemma~\ref{lem:scaled-f-tail} holds with limit $q$, the anti-concentration
assumption follows from \eqref{eq:uniform-anticoncentration}, and the support
condition follows from \eqref{eq:support-gap}.  The same lemma therefore gives,
uniformly for $a'\in[a_-,a_+]$,
\begin{equation}
 g_{f,2^{-h_{n_j}(x)}}^{(a')}
 (P_{X^{n_j}|Y^{n_j}=y^{n_j}})\longrightarrow f(a'q).
\end{equation}
Evaluating at $a'=a_j$ and using continuity of $f$ on the compact interval
$[a_-a_K,a_+]$ makes both terms in
\eqref{eq:uniform-conditional-scaled} converge to $f(aq)$, a contradiction.

Taking any scale interval containing $a=1$ proves uniform convergence of the
unscaled difference on $\mathcal T_{n,K}$.  Both unscaled terms lie in
$[0,f(0)]$ by \eqref{eq:small-a-bounds} and monotonicity.
The laws of large numbers and \eqref{eq:center-CLT} imply
$\lim_{K\to\infty}\limsup_n\Prb\{\mathcal T_{n,K}^c\}=0$.  For fixed $K$,
the expectation in \eqref{eq:conditional-capped-limit} is bounded by the
uniform error on $\mathcal T_{n,K}$ plus
$2f(0)\Prb\{\mathcal T_{n,K}^c\}$.  First taking $n\to\infty$ and then
$K\to\infty$ proves the claim.
\end{proof}

The following lemma isolates the perspective minimization that turns
fibrewise scaled limits into an optimized-marginal limit.

\begin{lemma}
\label{lem:perspective-aggregation}
For each $n$, let $\pi_n$ be a probability distribution on a finite set
$\mathcal E_n$, let $q_n:\mathcal E_n\to[0,1]$, and let
$G_n:\mathcal E_n\times[0,\infty)\to\mathbb R$.  Suppose that
\begin{equation}
 \bar q_n:=\sum_e\pi_n(e)q_n(e)\longrightarrow q_0\in(0,1),
 \label{eq:aggregation-average}
\end{equation}
and, for every $e$ and $a\ge0$,
\begin{equation}
 f(a)\le G_n(e,a)\le f(0).
 \label{eq:aggregation-bounds}
\end{equation}
Assume that there are sets $\mathcal T_{n,K}\subseteq\mathcal E_n$ such that
\begin{equation}
 \lim_{K\to\infty}\limsup_{n\to\infty}
 \pi_n(\mathcal T_{n,K}^c)=0,
 \label{eq:aggregation-tightness}
\end{equation}
and, for each fixed $K$, some $\eta_K\in(0,1/2)$ satisfies
\begin{equation}
 \eta_K\le q_n(e)\le1-\eta_K
 \quad(e\in\mathcal T_{n,K})
 \label{eq:aggregation-tail-bounds}
\end{equation}
eventually.  Finally, suppose that for every fixed $K$ and
$0<a_-<a_+<\infty$,
\begin{equation}
 \sup_{e\in\mathcal T_{n,K}}\sup_{a\in[a_-,a_+]}
 |G_n(e,a)-f(aq_n(e))|\longrightarrow0.
 \label{eq:aggregation-local-limit}
\end{equation}
Then
\begin{equation}
 \inf_{R_n}\sum_{e:R_n(e)>0}R_n(e)
 G_n\!\left(e,\frac{\pi_n(e)}{R_n(e)}\right)
 \longrightarrow f(q_0),
 \label{eq:perspective-aggregation-limit}
\end{equation}
where the infimum is over probability distributions on $\mathcal E_n$ and
reference-zero terms use the same convention as \eqref{eq:f-divergence}.
\end{lemma}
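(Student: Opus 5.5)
We prove the limit \eqref{eq:perspective-aggregation-limit} by separate upper and lower bounds. Write $\Psi_n(R):=\sum_{e:R(e)>0}R(e)G_n(e,\pi_n(e)/R(e))$.

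\emph{Achievability.} We use the tilted reference $R_n(e):=\pi_n(e)q_n(e)/\bar q_n$, which mirrors \eqref{eq:one-shot-tail-tilted-reference}. It is a probability distribution, and $\bar q_n>0$ eventually because $q_0>0$. On $\mathcal T_{n,K}$ the scale equals $\pi_n(e)/R_n(e)=\bar q_n/q_n(e)$. By \eqref{eq:aggregation-tail-bounds}, and since $\bar q_n\to q_0\in(0,1)$, this scale lies eventually in a fixed compact interval $[a_-,a_+]\subset(0,\infty)$. Then \eqref{eq:aggregation-local-limit} gives
\[
G_n(e,\bar q_n/q_n(e))=f(\bar q_n)+o(1)
\]
uniformly on $\mathcal T_{n,K}$. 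The contribution of $\mathcal T_{n,K}$ is therefore at most $R_n(\mathcal T_{n,K})f(\bar q_n)+o(1)$.

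Off $\mathcal T_{n,K}$ each term is at most $f(0)R_n(e)$. Since $q_n\le1$, we have $R_n(e)\le\pi_n(e)/\bar q_n$, so the total is at most $f(0)\pi_n(\mathcal T_{n,K}^c)/\bar q_n$. Points with $q_n(e)=0$ have $R_n(e)=0$ and contribute nothing. Letting $n\to\infty$ and then $K\to\infty$, and using \eqref{eq:aggregation-tightness} and continuity of $f$, gives $\limsup_n\inf_R\Psi_n(R)\le f(q_0)$.

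\emph{Converse.} Fix an arbitrary $R$ and $K$. On $\mathcal T_{n,K}$ we want $G_n(e,a)\ge f(aq_n(e))-o(1)$ for \emph{all} $a\ge0$, not only for $a$ in a compact interval; this is the main obstacle. We handle three ranges of $a$.
\begin{itemize}
\item \emph{Large $a$.} Since $f$ is nonincreasing, for $a\ge a_+$ we get $G_n(e,a)\ge f(a)\ge f(a_+)$ from \eqref{eq:aggregation-bounds}. Also $f(aq_n)\le f(a_+\eta_K)$, and both quantities tend to $\inf f$ as $a_+\to\infty$. Hence the error is at most some $\epsilon(a_+)\to0$; note that $\inf f=\lim_{t\to\infty}f(t)$ is finite because $f\ge f(t)$ with $f$ convex and $f(t)/t\to0$.
\item \emph{Small $a$.} For $a\le a_-$, monotonicity of $f$ in the scale is the key. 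We have $G_n(e,a)\ge G_n(e,a_-)$ as the hypothesis-free analogue, or more safely $G_n(e,a)\ge f(a)\ge f(a_-)$, which combined with uniform continuity on $[0,1]$ gives $f(aq_n)\le f(0)\le f(a_-)+\omega_f(a_-)$. So the error is at most $\omega_f(a_-)$.
\item \emph{Middle range.} For $a\in[a_-,a_+]$ we invoke \eqref{eq:aggregation-local-limit} directly.
\end{itemize}
Thus, uniformly in $a$ and in $e\in\mathcal T_{n,K}$,
\[
G_n(e,a)\ge f(aq_n(e))-\epsilon(a_-,a_+)-o(1).
\]

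Outside $\mathcal T_{n,K}$ we use only $G_n\ge f(a)$. Terms with $R(e)=0$ contribute nothing, and this is consistent because $f(t)/t\to0$.

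We now combine the two regions by joint convexity of the perspective $(u,v)\mapsto vf(u/v)$. This gives
\[
\Psi_n(R)\ge\sum_{e\in\mathcal T}R(e)f\bigl(\pi_n(e)q_n(e)/R(e)\bigr)+\sum_{e\notin\mathcal T}R(e)f\bigl(\pi_n(e)/R(e)\bigr)-\epsilon-o(1).
\]
Applying Jensen's inequality over all $e$ with weights $R$, the argument averages to $\sum_{e\in\mathcal T}\pi_nq_n+\pi_n(\mathcal T^c)\le\bar q_n+\pi_n(\mathcal T^c)$. Since $f$ is nonincreasing and $\sum R\le 1$, we conclude
\[
\Psi_n(R)\ge f\bigl(\bar q_n+\pi_n(\mathcal T_{n,K}^c)\bigr)-\epsilon-o(1).
\]
Here $f$ is extended beyond $1$ by convexity, and the argument lies in $[0,2]$. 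The dummy-mass convention handles the case $\sum_{e}R(e)<1$ on the support, via \eqref{eq:dummy-transfer}-type reasoning.

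Finally, take the infimum over $R$, let $n\to\infty$, then $K\to\infty$ so that $\pi_n(\mathcal T^c_{n,K})$ becomes negligible by \eqref{eq:aggregation-tightness}, and then send $a_-\to0$ and $a_+\to\infty$. Continuity of $f$ then yields $\liminf_n\inf_R\Psi_n(R)\ge f(q_0)$, which together with the achievability bound proves \eqref{eq:perspective-aggregation-limit}.
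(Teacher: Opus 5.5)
\textbf{There is a genuine gap in the large-scale range of your converse.} Your achievability half is the paper's argument: the tilted reference $\pi_n q_n/\bar q_n$, the local limit on $\mathcal T_{n,K}$, and the crude bound $f(0)\pi_n(\mathcal T_{n,K}^c)/\bar q_n$ off it. The final Jensen step of your converse is also the paper's.

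For $a\ge a_+$ you claim the pointwise error $f(aq_n(e))-f(a)$ is at most some $\epsilon(a_+)\to0$, because $\inf f$ is finite. That is false for admissible generators. Convexity, monotonicity and $f(t)/t\to0$ do not prevent $f(t)\to-\infty$; the paper remarks that $f$ need not be nonnegative for $t>1$. The power generators $f_\alpha(t)=(1-t^\alpha)/(1-\alpha)$ behind the R\'enyi corollary are examples. For them, with $q\le1-\eta_K$,
\[
f(aq)-f(a)=\frac{a^\alpha(1-q^\alpha)}{1-\alpha}\longrightarrow\infty .
\]
The hypotheses cannot rescue a uniform pointwise bound either. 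Take $G_n(e,a):=f(aq_n(e))$ for $a\le A_n$ and $G_n(e,a):=f(a)$ for $a>A_n$, with $A_n\to\infty$. This $G_n$ satisfies every assumption of the lemma, including the sandwich bounds and the local limit on compacts. Yet $\sup_a[f(aq_n(e))-G_n(e,a)]=\infty$. So your inequality $G_n(e,a)\ge f(aq_n(e))-\epsilon(a_-,a_+)-o(1)$ ``uniformly in $a$'' cannot be established.

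\textbf{The missing idea.} You need no pointwise bound, only one after weighting by $R(e)$. At the relevant scale $a(e)=\pi_n(e)/R(e)$ one has $R(e)a(e)=\pi_n(e)$, and $f$ grows sublinearly. Define
\[
\rho_f(A,\eta):=\sup_{a>A,\,q\in[\eta,1]}\frac{f(aq)-f(a)}{a}\le\sup_{u>A}\frac{|f(u)|}{u}+\sup_{u>A\eta}\frac{|f(u)|}{u}\longrightarrow0
\]
as $A\to\infty$. Then on $\{a(e)>A\}\cap\mathcal T_{n,K}$ the lower bound $G_n\ge f(a)$ gives $R(e)G_n(e,a(e))\ge R(e)f(a(e)q_n(e))-\pi_n(e)\rho_f(A,\eta_K)$. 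These errors sum to at most $\rho_f(A,\eta_K)$, even though the individual pointwise errors are unbounded. This is exactly how the paper treats that region.

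\textbf{Order of limits.} The error depends on $K$ through $\eta_K$; so would your $\epsilon(a_+)$ even when $\inf f$ is finite, via $f(a_+\eta_K)$. Hence $a_+\to\infty$ and $a_-\downarrow0$ must be taken before $K\to\infty$, not after as in your last paragraph.

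\textbf{Minor points.} $R$ is a probability distribution on $\mathcal E_n$, so no dummy mass arises. The Jensen argument $\sum_{e\in\mathcal T}\pi_nq_n+\pi_n(\mathcal T^c)$ already lies in $[0,1]$, and $f$ is defined on $[0,\infty)$ anyway, so no convex extension is needed.
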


\begin{proof}
Fix $K$ and abbreviate $\mathcal T_n:=\mathcal T_{n,K}$ and
$\eta:=\eta_K$.  For a reference $R_n$, write
$a(e):=\pi_n(e)/R_n(e)$ where $R_n(e)>0$, and define
\begin{equation}
 r_n(e):=q_n(e)\ind\{e\in\mathcal T_n\}
 +\ind\{e\notin\mathcal T_n\}.
 \label{eq:aggregation-modified-tail}
\end{equation}
Let $0<\varepsilon<1$ and choose $A>1/\eta$.  On $\mathcal T_n$, if
$a(e)<\varepsilon$, then \eqref{eq:aggregation-bounds} and monotonicity give
\begin{equation}
 G_n(e,a(e))\ge f(a(e))
 \ge f(a(e)q_n(e))-\omega_f(\varepsilon).
 \label{eq:aggregation-small-scale}
\end{equation}
For $\varepsilon\le a(e)\le A$, let the uniform error in
\eqref{eq:aggregation-local-limit} be
$\delta_{n,K}(\varepsilon,A)$.  For $a(e)>A$, put
\begin{equation}
 \rho_f(A,\eta):=
 \sup_{\substack{a>A\\q\in[\eta,1]}}
 \frac{f(aq)-f(a)}{a}.
 \label{eq:aggregation-large-scale-rho}
\end{equation}
Because $A\eta>1$, monotonicity and the second condition in
\eqref{eq:f-assumptions} give
\begin{equation}
 0\le\rho_f(A,\eta)
 \le\sup_{u>A}\frac{|f(u)|}{u}
 +\sup_{u>A\eta}\frac{|f(u)|}{u}
 \longrightarrow0.
 \label{eq:aggregation-large-scale-limit}
\end{equation}
Thus \eqref{eq:aggregation-bounds} implies
 \begin{align}
 &R_n(e)G_n(e,a(e))\\
 &\quad\ge R_n(e)f(a(e)q_n(e))
 -\pi_n(e)\rho_f(A,\eta)
 \label{eq:aggregation-large-scale}
\end{align}
on the large-scale region.  Outside $\mathcal T_n$,
\eqref{eq:aggregation-bounds} gives
$G_n(e,a(e))\ge f(a(e))=f(a(e)r_n(e))$.
Summing the four regions yields
 \begin{align}
 &\sum_{e:R_n(e)>0}R_n(e)G_n(e,a(e))\\
 &\quad\ge\sum_{e:R_n(e)>0}R_n(e)f(a(e)r_n(e))
 -\omega_f(\varepsilon)-\delta_{n,K}(\varepsilon,A)
 -\rho_f(A,\eta).
 \label{eq:aggregation-uniform-lower}
\end{align}
Jensen's inequality and monotonicity of $f$ give
\begin{align}
 \sum_{e:R_n(e)>0}R_n(e)f(a(e)r_n(e))
 &\ge f\!\left(\sum_{e:R_n(e)>0}\pi_n(e)r_n(e)\right)\\
 &\ge f\!\left(\sum_e\pi_n(e)r_n(e)\right).
 \label{eq:aggregation-jensen}
\end{align}
The last step also handles reference-zero points: omitting them decreases the
argument of the nonincreasing function $f$.  Moreover,
\begin{equation}
 0\le\sum_e\pi_n(e)r_n(e)-\bar q_n
 \le\pi_n(\mathcal T_n^c).
 \label{eq:aggregation-tail-error}
\end{equation}
Taking the infimum over $R_n$ and then sending successively
$n\to\infty$, $\varepsilon\downarrow0$, $A\to\infty$, and $K\to\infty$
proves the lower limit in \eqref{eq:perspective-aggregation-limit}.

For the upper limit, choose
\begin{equation}
 R_n^\circ(e):=\frac{\pi_n(e)q_n(e)}{\bar q_n}
 \label{eq:aggregation-tilted-reference}
\end{equation}
on points where $q_n(e)>0$.  On $\mathcal T_n$, its scale satisfies
$a(e)q_n(e)=\bar q_n$ and belongs to a compact subinterval of
$(0,\infty)$.  Hence \eqref{eq:aggregation-local-limit} and
\eqref{eq:aggregation-bounds} give
 \begin{align}
 &\sum_{e:R_n^\circ(e)>0}R_n^\circ(e)G_n(e,a(e))\\
 &\quad\le f(\bar q_n)+o(1)
 +\frac{f(0)}{\bar q_n}\pi_n(\mathcal T_n^c).
 \label{eq:aggregation-upper}
\end{align}
First letting $n\to\infty$ and then $K\to\infty$ proves the matching upper
limit.
\end{proof}

The endpoint $V_2=0$ needs a separate elementary argument.

\begin{lemma}
\label{lem:uniform-endpoint}
Assume $V_2=0$ and $V>0$.  Every nonzero $P_{X|Y=y}$ is uniform on its
support, and
\begin{equation}
 \E_{Y^n}g_{f,2^{-h_n(x)}}(P_{X^n|Y^n})
 \to f(0)\Phi(-x).
 \label{eq:uniform-g-limit}
\end{equation}
If $\tau_n\to\infty$ and $\tau_n=o(\sqrt n)$, then
\begin{equation}
 \E_{Y^n}f\!\left(
 \Prb\{\imath_n\ge h_n(x)+\tau_n|Y^n\}
 \right)
 \to f(0)\Phi(-x).
 \label{eq:uniform-tail-limit}
\end{equation}
\end{lemma}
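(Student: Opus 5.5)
The plan is to use the exact uniformity of the fibres to reduce both limits to the single Gaussian statistic $Z_n(Y^n)$ of Lemma~\ref{lem:conditional-CLT}.

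\textbf{Step 1: uniform fibres.} Since $V_2=\E_Y v(Y)=0$ and $P_Y(y)>0$ for every $y\in\cY$, every $v(y)=0$. So $\imath_{X|Y}(\cdot|y)$ is constant on the support of $P_{X|Y=y}$. Hence that fibre is uniform, and its support has size $2^{h(y)}$. Consequently, for each $y^n$, $P_{X^n|Y^n=y^n}$ is uniform on a support of size $N=2^{\mu_n(y^n)}$. The block surprisal equals $\mu_n(y^n)$ almost surely. Moreover $V_1=V>0$, so \eqref{eq:center-CLT}, which uses only the ordinary CLT, gives $Z_n\Rightarrow Z\sim N(0,V)$.

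\textbf{Step 2: closed form of the capped cost.} For uniform $p$ on $N$ points and cap $c=2^{-h}$, I would evaluate $g_{f,c}(p)$ directly via Lemma~\ref{lem:waterfilling}.
\begin{itemize}
\item If $cN\ge1$, then $t=1$ already normalizes, because every $p_x=1/N\le c$. So $q^*=p$ and the value is $f(1)=0$.
\item If $cN<1$, then $q^*=c$ on the support. The likelihood ratio is $2^{h-\mu}$, and the value is $\varphi(\mu-h)$, where
\begin{equation*}
\varphi(s):=2^{s}f(2^{-s})+(1-2^{s})f(0),\qquad s<0 .
\end{equation*}
\end{itemize}
Set $\varphi(s):=0$ for $s\ge0$. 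Then $0\le\varphi\le f(0)$, and $\varphi(s)\to f(0)$ as $s\to-\infty$, because $2^sf(2^{-s})=f(u)/u\to0$ by \eqref{eq:f-assumptions}. With this notation,
\begin{equation*}
g_{f,2^{-h_n(x)}}(P_{X^n|Y^n})=\varphi\bigl(\sqrt n\,[Z_n(Y^n)+x\sqrt V]\bigr).
\end{equation*}

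\textbf{Step 3: the two limits.} For \eqref{eq:uniform-g-limit}, fix $M>0$ and split according to whether $Z_n+x\sqrt V>0$, $Z_n+x\sqrt V<-M/\sqrt n$, or neither.
\begin{itemize}
\item On $\{Z_n+x\sqrt V>0\}$ the integrand is $0$.
\item On $\{Z_n+x\sqrt V<-M/\sqrt n\}$ it is within $\sup_{s\le -M}|\varphi(s)-f(0)|$ of $f(0)$.
\item The remaining strip $\{-M/\sqrt n\le Z_n+x\sqrt V\le0\}$ has probability tending to $\Prb\{Z=-x\sqrt V\}=0$ by weak convergence, since the limit law is continuous.
\end{itemize}
Hence the expectation tends to $f(0)\Prb\{Z<-x\sqrt V\}=f(0)\Phi(-x)$, after letting $M\to\infty$.

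For \eqref{eq:uniform-tail-limit}, the conditional probability is the indicator of $\{\mu_n\ge h_n(x)+\tau_n\}$. Since $f(1)=0$, the integrand equals $f(0)\ind\{Z_n+x\sqrt V<\tau_n/\sqrt n\}$. Because $\tau_n/\sqrt n\to0$ and the limit law has no atom at $-x\sqrt V$, the Portmanteau theorem gives the same limit $f(0)\Phi(-x)$.

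The only delicate point is the boundary strip near $Z_n=-x\sqrt V$, where $\varphi$ is at intermediate values. It is handled by continuity of the Gaussian limit, and no Berry--Esseen rate is required.
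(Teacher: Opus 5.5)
Your proposal is correct and follows essentially the paper's argument. Both proofs use the exact water-filled cost for uniform fibres: your $\varphi(s)$ is the paper's $a_nf(1/a_n)+(1-a_n)f(0)$ with $a_n=2^s$. Both then apply the ordinary CLT to $Z_n$ with $V_1=V$, sandwich the cost around the sign change of $\mu_n-h_n(x)$ with a boundary strip made negligible by the atomless Gaussian limit, and reduce the tail claim to the indicator $f(0)\ind\{\mu_n<h_n(x)+\tau_n\}$. The only cosmetic difference is that you cut at a fixed surprisal gap $M$ where the paper cuts at $\eta\sqrt n$, which is equally valid.
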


\begin{proof}
Since $v(Y)\ge0$ and $\E v(Y)=0$, $v(Y)=0$ almost surely.  Thus the
conditional surprisal is constant on every nonzero fibre.  If that fibre has
size $N_y$, its probabilities are $1/N_y$ and $h(y)=\log N_y$.
Conditional on $y^n$, the product law is uniform on
$N(y^n)=2^{\mu_n(y^n)}$ points.  Put
$a_n(y^n):=2^{\mu_n(y^n)-h_n(x)}$.  Directly from
\eqref{eq:df-perspective} and Lemma~\ref{lem:waterfilling},
\begin{equation}
 g_{f,2^{-h_n(x)}}(P_{X^n|Y^n=y^n})
 =\begin{cases}
 0,&a_n(y^n)\ge1,\\
 a_n(y^n)f(1/a_n(y^n))+(1-a_n(y^n))f(0),&a_n(y^n)<1.
 \end{cases}
 \label{eq:uniform-exact-g}
\end{equation}
Here $V=V_1>0$, and
\begin{equation}
 \frac{\mu_n(Y^n)-h_n(x)}{\sqrt n}
 \Longrightarrow Z+x\sqrt V,
 \qquad Z\sim N(0,V).
 \label{eq:uniform-center-clt}
\end{equation}
When the left side is bounded above by $-\eta$, the second line of
\eqref{eq:uniform-exact-g} converges uniformly to $f(0)$ because
$f(t)/t\to0$; when it is at least $\eta$, the value is zero.  The expression
is bounded by $f(0)$.  Sandwiching, then letting $n\to\infty$ and
$\eta\downarrow0$, proves \eqref{eq:uniform-g-limit}.

The conditional tail in \eqref{eq:uniform-tail-limit} is the zero-one
variable $\ind\{\mu_n(Y^n)\ge h_n(x)+\tau_n\}$.  Since $f(1)=0$ and
$\tau_n/\sqrt n\to0$, \eqref{eq:uniform-center-clt} proves the second claim.
\end{proof}

The preceding conditional limit results now identify the same
fixed-reference Gaussian profile in two forms: both the capped one-shot
quantity used in the converse and the conditional light-tail quantity used
in achievability converge to it.

\begin{proposition}
\label{prop:Gaussian-profile}
For every fixed $x\in\mathbb R$ and $V>0$,
\begin{align}
 \Gamma_{f,P_{XY}^{\times n}}^\uparrow(h_n(x))
 &\to\Fprof_{f,V_1/V}(x),
 \label{eq:Gamma-limit}\\
 \E_{Y^n}f\!\left(
 \Prb\{\imath_n\ge h_n(x)+\tau_n|Y^n\}
 \right)
 &\to\Fprof_{f,V_1/V}(x)
 \label{eq:tail-profile-limit}
\end{align}
for every $\tau_n\to\infty$ with $\tau_n=o(\sqrt n)$.
\end{proposition}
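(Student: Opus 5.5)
The proof splits into the case $V_2>0$ and the endpoint case $V_2=0$. In the endpoint case $V=V_1$, so $r=V_1/V=1$ and $\Fprof_{f,1}(x)=f(0)\Phi(-x)$. Both limits are then exactly Lemma~\ref{lem:uniform-endpoint}, applied through \eqref{eq:uniform-g-limit} and \eqref{eq:uniform-tail-limit}.

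Assume now $V_2>0$. By definition \eqref{eq:Gamma}, $\Gamma^\uparrow_{f,P^{\times n}}(h_n(x))=\E_{Y^n}g_{f,2^{-h_n(x)}}(P_{X^n|Y^n})$. By \eqref{eq:conditional-capped-limit} of Lemma~\ref{lem:conditional-capped}, this differs from $\E_{Y^n}f(Q_n(Y^n;x))$ by $o(1)$, so it suffices to show
\[
\E_{Y^n}f(Q_n(Y^n;x))\to\Fprof_{f,V_1/V}(x).
\]
For this we combine the two parts of Lemma~\ref{lem:conditional-CLT}. The first part, \eqref{eq:conditional-CLT}, says $Q_n(Y^n;x)=\Phi((x\sqrt V+Z_n)/\sqrt{V_2})+o_P(1)$. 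Since $f$ is uniformly continuous and bounded on $[0,1]$, $f(Q_n)$ differs from $f(\Phi((x\sqrt V+Z_n)/\sqrt{V_2}))$ by a quantity that tends to zero in probability and is bounded. Dominated (bounded) convergence therefore makes the difference of expectations vanish.

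The second part, \eqref{eq:center-CLT}, gives $Z_n\Rightarrow Z\sim N(0,V_1)$. The map $z\mapsto f(\Phi((x\sqrt V+z)/\sqrt{V_2}))$ is bounded and continuous, so weak convergence yields
\[
\E_{Y^n}f\bigl(\Phi((x\sqrt V+Z_n)/\sqrt{V_2})\bigr)\to\E f\bigl(\Phi((x\sqrt V+Z)/\sqrt{V_2})\bigr).
\]
To identify this limit, write $Z=\sqrt{V_1}G$ and set $r=V_1/V$, so that $V_2=V(1-r)$. The argument of $\Phi$ becomes $(x+\sqrt r\,G)/\sqrt{1-r}$, which is $Q_r(G;x)$. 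This also covers $V_1=0$ (where $r=0$ and $Z_n=0$). This proves \eqref{eq:Gamma-limit}.

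For \eqref{eq:tail-profile-limit} with $V_2>0$, note that $h_n(x)+\tau_n=h_n(x-\tau_n/\sqrt{nV})$ and $\tau_n/\sqrt n\to0$. The conditional tail $\Prb\{\imath_n\ge h_n(x)+\tau_n\mid Y^n\}$ must therefore be compared with the same Gaussian expression. This requires uniformity, since the shift $x_n:=x-\tau_n/\sqrt{nV}$ now depends on $n$. I expect this to be the main (if mild) obstacle. The Berry--Esseen bound \eqref{eq:conditional-BE-rate} is uniform in the threshold on the event $\sigma_n^2\ge nV_2/2$, so the conditional tail equals
\[
\Phi\bigl((x_n\sqrt V+Z_n)\sqrt n/\sigma_n\bigr)+O(n^{-1/2}).
\]
Since $x_n\to x$, $\sigma_n^2/n\to V_2$ in probability, and $\Phi$ is Lipschitz, this is again $\Phi((x\sqrt V+Z_n)/\sqrt{V_2})+o_P(1)$. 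The same two-step argument as above then finishes the proof.
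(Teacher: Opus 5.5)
Your proposal is correct and follows the paper's proof essentially step for step. You treat $V_2=0$ via Lemma~\ref{lem:uniform-endpoint}, and for $V_2>0$ you reduce $\Gamma^\uparrow$ to $\E f(Q_n)$ via \eqref{eq:conditional-capped-limit}, then apply the two limits of Lemma~\ref{lem:conditional-CLT} with bounded convergence; the shifted tail uses the threshold-uniform Berry--Esseen bound on $\{\sigma_n^2\ge nV_2/2\}$, and your reparametrization $h_n(x)+\tau_n=h_n(x-\tau_n/\sqrt{nV})$ with Lipschitz continuity of $\Phi$ (plus tightness of $Z_n$) is simply an explicit form of the paper's Slutsky step.
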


\begin{proof}
Suppose $V_2>0$.  Lemmas~\ref{lem:conditional-CLT} and
\ref{lem:conditional-capped}, boundedness of $f$ on $[0,1]$, and the
continuous-mapping theorem give
\begin{align}
 \Gamma_{f,P^{\times n}}^\uparrow(h_n(x))
 &\to\E_{Z\sim N(0,V_1)}
 f\!\left(\Phi\!\left(\frac{x\sqrt V+Z}{\sqrt{V_2}}\right)\right)\\
 &=\Fprof_{f,V_1/V}(x).
 \label{eq:profile-change-variables}
\end{align}
For the shifted tail, write
\begin{equation}
 Q_n^{(\tau)}(Y^n;x):=\Prb\{\imath_n\ge h_n(x)+\tau_n\mid Y^n\}.
\end{equation}
On $\{\sigma_n^2\ge nV_2/2\}$, Fact~\ref{fact:BE} and
\eqref{eq:conditional-BE-rate} give
\begin{equation}
 Q_n^{(\tau)}(Y^n;x)-
 \Phi\!\left(
 \frac{x\sqrt V+Z_n(Y^n)-\tau_n/\sqrt n}{\sigma_n(Y^n)/\sqrt n}
 \right)\longrightarrow0
 \quad\text{in probability}.
 \label{eq:shifted-conditional-CLT}
\end{equation}
Because $\tau_n/\sqrt n\to0$ and
$\sigma_n(Y^n)/\sqrt n\to\sqrt{V_2}$ in probability, Slutsky's theorem and
\eqref{eq:center-CLT} show that the argument of $f$ in
\eqref{eq:shifted-conditional-CLT} converges in distribution to the Gaussian
tail in \eqref{eq:profile-change-variables}.  The random variables
$f(Q_n^{(\tau)})$ are bounded by $f(0)$, so convergence in distribution plus
boundedness and continuity of $f$ gives convergence of their expectations,
proving \eqref{eq:tail-profile-limit}.  If $V_2=0$,
Lemma~\ref{lem:uniform-endpoint} gives both conclusions and
\eqref{eq:F-endpoints} identifies their common limit.
\end{proof}

Optimizing the reference marginal collapses this conditional profile to $f$
applied to a single unconditional Gaussian tail.  The next proposition gives
this optimized limit together with the unconditional tail limit used in the
achievability argument.

\begin{proposition}
\label{prop:Gaussian-profile-down}
For every fixed $x\in\mathbb R$ and $V>0$,
\begin{equation}
 \Gamma_{f,P_{XY}^{\times n}}^\downarrow(h_n(x))
 \to f(\Phi(x)).
 \label{eq:Gamma-down-limit}
\end{equation}
Moreover, if $\tau_n\to\infty$ and $\tau_n=o(\sqrt n)$, then
\begin{equation}
 \overline Q_n(x):=\Prb\{\imath_n(X^n|Y^n)
 \ge h_n(x)+\tau_n\}\to\Phi(x).
 \label{eq:unconditional-tail-limit}
\end{equation}
\end{proposition}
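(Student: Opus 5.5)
The plan is to prove the two claims separately. The unconditional tail limit \eqref{eq:unconditional-tail-limit} is the easy part. The optimized limit \eqref{eq:Gamma-down-limit} will be obtained by feeding the fibrewise scaled limits of Lemma~\ref{lem:uniform-conditional-scaled} into the perspective-aggregation Lemma~\ref{lem:perspective-aggregation}.

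For \eqref{eq:unconditional-tail-limit}, note that $\imath_n(X^n|Y^n)=\sum_i\imath_{X|Y}(X_i|Y_i)$ is a sum of i.i.d.\ finite-valued terms with mean $H$ and variance $V>0$. Fact~\ref{fact:BE} then gives
\[
\overline Q_n(x)=1-\Phi\!\left(\frac{h_n(x)+\tau_n-nH}{\sqrt{nV}}\right)+O(n^{-1/2})=\Phi\!\left(x-\frac{\tau_n}{\sqrt{nV}}\right)+O(n^{-1/2}).
\]
Since $\tau_n=o(\sqrt n)$, this tends to $\Phi(x)$.

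For \eqref{eq:Gamma-down-limit} with $V_2>0$, I would apply Lemma~\ref{lem:perspective-aggregation} with the following choices:
\begin{itemize}
\item $\mathcal E_n=\cY^n$ and $\pi_n=P_Y^{\times n}$;
\item $q_n(y^n)=Q_n(y^n;x)$;
\item $G_n(y^n,a)=g^{(a)}_{f,2^{-h_n(x)}}(P_{X^n|Y^n=y^n})$;
\item $\mathcal T_{n,K}$ the typical event \eqref{eq:typical-event}.
\end{itemize}
The infimum in \eqref{eq:perspective-aggregation-limit} is then exactly $\Gamma^\downarrow_{f,P^{\times n}}(h_n(x))$ by \eqref{eq:Gamma-down}. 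The hypotheses are checked in turn.
\begin{itemize}
\item The bounds \eqref{eq:aggregation-bounds} are \eqref{eq:small-a-bounds}.
\item Tightness \eqref{eq:aggregation-tightness} follows from the laws of large numbers and \eqref{eq:center-CLT}, as already noted in the proof of Lemma~\ref{lem:conditional-capped}.
\item The tail bounds \eqref{eq:aggregation-tail-bounds} are \eqref{eq:q-bounded-away}.
\item The local uniform limit \eqref{eq:aggregation-local-limit} is \eqref{eq:uniform-conditional-scaled}.
\end{itemize}
It remains to show $\bar q_n=\E_{Y^n}Q_n(Y^n;x)=\Prb\{\imath_n\ge h_n(x)\}\to\Phi(x)\in(0,1)$. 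This is the $\tau_n=0$ case of the previous paragraph; the weak versus strict inequality is irrelevant by Berry--Esseen. The lemma then yields the limit $f(\Phi(x))$.

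The endpoint $V_2=0$ is the main obstacle, because \eqref{eq:q-bounded-away} fails there. Every fibre is uniform and $Q_n(y^n;x)=\ind\{\mu_n(y^n)\ge h_n(x)\}$ is zero-one.
\begin{itemize}
\item \emph{Lower bound.} Fix a reference $R$ and put $a(y^n)=\pi_n(y^n)/R(y^n)$. By \eqref{eq:uniform-exact-g} with the scale inserted, fibres with $a_n(y^n)\ge1$ contribute $f(a)=f(a\cdot 1)$. The other fibres contribute at least $f(a\cdot 0)-o(1)$ once $a_n\le 2^{-\eta\sqrt n}$, using $f(t)/t\to0$ and monotonicity. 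Fibres with $a_n$ in between have vanishing $\pi_n$-mass, so a reference-weighted Jensen step as in \eqref{eq:aggregation-jensen} gives a lower bound of $f(\bar q_n)-o(1)$. For that in-between band, which has small $\pi_n$ but possibly large $R$, I would use the crude bound $G_n\ge f(a)\ge f(a r_n)$ with $r_n=1$, exactly as the lemma treats $\mathcal T^c$.
\item \emph{Upper bound.} Take $R^\circ\propto\pi_n\ind\{a_n\ge1\}$. Each such fibre has cost $0\cdot{}$-type terms bounded by $f(\bar q_n)$, since a uniform fibre with capacity at least one can be matched with scaled ratio $\bar q_n$ after the scaling. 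This yields $f(\bar q_n)$.
\end{itemize}
In both directions $\bar q_n\to\Phi(x)$ by \eqref{eq:uniform-center-clt}, completing the endpoint case.
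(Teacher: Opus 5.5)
Your route is the paper's. You use a normal approximation for the unconditional tail (Berry--Esseen where the paper uses the plain CLT). For $V_2>0$ you apply Lemma~\ref{lem:perspective-aggregation}, fed by \eqref{eq:small-a-bounds}, \eqref{eq:q-bounded-away} and \eqref{eq:uniform-conditional-scaled}. For $V_2=0$ you split into fibres where $G_n\ge f(a)$ and fibres with tiny $s_n:=2^{\mu_n(y^n)-h_n(x)}$ (your $a_n(y^n)$), then finish with Jensen and a tilted reference. Your upper-bound reference $R^\circ\propto\pi_n\ind\{s_n\ge1\}$ is slightly cleaner than the paper's $\mathcal C_n$-reference. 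Its scale is exactly $\bar q_n=\pi_n\{s_n\ge1\}$, every supported fibre costs exactly $f(\bar q_n)$ by the first line of \eqref{eq:uniform-scaled-g}, and no strip error appears.

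Two steps of your $V_2=0$ lower bound need repair.

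\emph{Uniformity in the reference.} You claim that fibres with $s_n\le2^{-\eta\sqrt n}$ contribute at least $f(0)-o(1)$. This cannot hold fibre by fibre uniformly in the reference. The deficit is $f(0)-G_n(y^n,a)=s_n[f(0)-f(a/s_n)]$, which is unbounded in the scale $a=\pi_n(y^n)/R(y^n)$ whenever $f$ is unbounded below (e.g.\ $f_\alpha$). Moreover, $a$ is large exactly where $R$ puts little mass. The estimate has to be made after weighting by $R$. With $u:=a/s_n$:
\begin{itemize}
\item Terms with $u\le T$ contribute at most $R(y^n)s_n[f(0)-f(T)]$, which sums to at most $2^{-\eta\sqrt n}[f(0)-f(T)]$.
\item For $u>T$, the identity $R(y^n)s_n=\pi_n(y^n)/u$ bounds the term by $\pi_n(y^n)\sup_{u'>T}[f(0)-f(u')]/u'$.
\end{itemize}
This is \eqref{eq:uniform-B-loss}--\eqref{eq:zeta-uniform-endpoint}, and it is where $f(t)/t\to0$ really enters.

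\emph{The intermediate band.} For fixed $\eta$, the band $\{2^{-\eta\sqrt n}<s_n<1\}$ does not have vanishing $\pi_n$-mass. By \eqref{eq:uniform-center-clt} its mass tends to $\Prb\{-\eta<Z+x\sqrt V<0\}>0$. Jensen therefore only gives $f(\bar q_n+\pi_n\{2^{-\eta\sqrt n}<s_n<1\})-o(1)$. You must let $\eta\downarrow0$ after $n\to\infty$, using continuity of $f$. Alternatively, replace $\eta\sqrt n$ by some $\kappa_n\to\infty$ with $\kappa_n=o(\sqrt n)$, as the paper does.

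With these two repairs the argument is complete.
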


\begin{proof}
Let $I_i:=\imath_{X|Y}(X_i|Y_i)$.  These random variables are i.i.d., with
mean $H$ and variance $V$.  Therefore
\begin{align}
 \Prb\!\left\{\sum_{i=1}^nI_i\ge h_n(x)+\tau_n\right\}
 &=\Prb\!\left\{
 \frac{\sum_iI_i-nH}{\sqrt{nV}}
 \ge-x+\frac{\tau_n}{\sqrt{nV}}\right\}\\
 &\longrightarrow\Phi(x),
 \label{eq:unconditional-clt-detail}
\end{align}
which proves \eqref{eq:unconditional-tail-limit}; taking $\tau_n=0$ gives
the corresponding unshifted limit.  In the remainder write
\begin{equation}
 q_n(y^n):=Q_n(y^n;x),\qquad
 \bar q_n:=\E q_n(Y^n)\longrightarrow q_0:=\Phi(x)\in(0,1).
 \label{eq:qn-bar-unshifted}
\end{equation}

When $V_2>0$, apply Lemma~\ref{lem:perspective-aggregation} with
\begin{equation}
 \mathcal E_n=\supp P_Y^{\times n},\quad
 \pi_n=P_Y^{\times n},\quad
 G_n(y^n,a)=g_{f,2^{-h_n(x)}}^{(a)}(P_{X^n|Y^n=y^n}).
 \label{eq:aggregation-specialization}
\end{equation}
The bounds \eqref{eq:aggregation-bounds} are
\eqref{eq:small-a-bounds}; the local limit and tail bounds are
\eqref{eq:uniform-conditional-scaled} and \eqref{eq:q-bounded-away}; and
\eqref{eq:aggregation-tightness} follows from the laws of large numbers and
\eqref{eq:center-CLT}.  Thus
\begin{equation}
 \Gamma_{f,P_{XY}^{\times n}}^\downarrow(h_n(x))\to f(q_0)
 \label{eq:optimized-profile-positive-v2}
\end{equation}
whenever $V_2>0$.

It remains to treat $V_2=0$.  Every conditional product distribution is then
uniform on $N(y^n)=2^{\mu_n(y^n)}$ points.  Put
\begin{equation}
 s_n(y^n):=2^{-h_n(x)}N(y^n)=2^{\mu_n(y^n)-h_n(x)}.
 \label{eq:uniform-scale-sn}
\end{equation}
The scaled water-filling value is exactly
\begin{equation}
 g_{f,2^{-h_n(x)}}^{(a)}(P_{X^n|Y^n=y^n})
 =\begin{cases}
 f(a),&s_n(y^n)\ge1,\\
 s_n(y^n)f\!\left(\dfrac{a}{s_n(y^n)}\right)
 +(1-s_n(y^n))f(0),&s_n(y^n)<1.
 \end{cases}
 \label{eq:uniform-scaled-g}
\end{equation}
Choose any $\kappa_n\to\infty$ with $\kappa_n=o(\sqrt n)$ and define
\begin{equation}
 \mathcal C_n:=\{y^n:s_n(y^n)\ge2^{-\kappa_n}\},
 \qquad \mathcal B_n:=\mathcal C_n^c.
 \label{eq:uniform-CB-sets}
\end{equation}
Since $V=V_1>0$, \eqref{eq:uniform-center-clt} gives
\begin{align}
 P_Y^{\times n}(\mathcal C_n)&\to q_0,
 \label{eq:uniform-C-probability}\\
 P_Y^{\times n}\{2^{-\kappa_n}\le s_n(Y^n)<1\}&\to0.
 \label{eq:uniform-strip-probability}
\end{align}

For an arbitrary reference $Q_{Y^n}$, put
$a(y^n):=P_Y^{\times n}(y^n)/Q_{Y^n}(y^n)$ whenever the denominator is
positive.  Equation \eqref{eq:small-a-bounds} yields
$g^{(a)}_{f,c}\ge f(a)$ on $\mathcal C_n$.  On $\mathcal B_n$, put
$R_n(y^n):=Q_{Y^n}(y^n)s_n(y^n)$.  The loss relative to $f(0)$ is
\begin{align}
 &\sum_{y^n\in\mathcal B_n}Q_{Y^n}(y^n)
 \left[f(0)-g_{f,2^{-h_n(x)}}^{(a(y^n))}
 (P_{X^n|Y^n=y^n})\right]\\
 &\quad=\sum_{\substack{y^n\in\mathcal B_n\\R_n(y^n)>0}}
 R_n(y^n)\left[f(0)-f\!\left(
 \frac{P_Y^{\times n}(y^n)}{R_n(y^n)}\right)\right].
 \label{eq:uniform-B-loss}
\end{align}
This loss tends to zero uniformly over $Q_{Y^n}$.  Indeed,
$\sum_{\mathcal B_n}R_n\le2^{-\kappa_n}$.  For any fixed $T>1$, the terms
whose displayed likelihood ratio is at most $T$ are bounded by
\begin{equation}
 2^{-\kappa_n}\max_{0\le u\le T}[f(0)-f(u)],
\end{equation}
whereas the remaining terms are at most
\begin{equation}
 \sup_{u>T}\frac{f(0)-f(u)}{u}
 \sum_{y^n\in\mathcal B_n}P_Y^{\times n}(y^n)
 \le\sup_{u>T}\frac{f(0)-f(u)}{u}.
\end{equation}
The latter supremum tends to zero as $T\to\infty$ by the second condition in
\eqref{eq:f-assumptions}.  More explicitly, define
\begin{equation}
 \zeta_n:=\inf_{T>1}\left\{
 2^{-\kappa_n}\max_{0\le u\le T}[f(0)-f(u)]
 +\sup_{u>T}\frac{f(0)-f(u)}{u}\right\}.
 \label{eq:zeta-uniform-endpoint}
\end{equation}
For every $\epsilon>0$, first choose a fixed $T$ making the second term less
than $\epsilon$, and then take $n$ large enough that the first term is less
than $\epsilon$.  Hence $\zeta_n\to0$, uniformly over the reference law, and
\begin{align}
 \Gamma_{f,P_{XY}^{\times n}}^\downarrow(h_n(x))
 &\ge\inf_{Q_{Y^n}}\left\{
 \sum_{y^n\in\mathcal C_n}Q_{Y^n}(y^n)f(a(y^n))
 +Q_{Y^n}(\mathcal B_n)f(0)\right\}-\zeta_n\\
 &\ge f(P_Y^{\times n}(\mathcal C_n))-\zeta_n.
 \label{eq:uniform-endpoint-lower-full}
\end{align}
The final line is Jensen's inequality applied to arguments $a(y^n)$ on
$\mathcal C_n$ and zero on $\mathcal B_n$; reference-zero points are handled
as in \eqref{eq:aggregation-jensen}.

For the upper bound, take
\begin{equation}
 Q_{Y^n}^{\circ}(y^n)
 :=\frac{P_Y^{\times n}(y^n)}{P_Y^{\times n}(\mathcal C_n)}
 \ind\{y^n\in\mathcal C_n\}.
 \label{eq:uniform-endpoint-reference}
\end{equation}
Its scale is the constant $P_Y^{\times n}(\mathcal C_n)$.  The first line of
\eqref{eq:uniform-scaled-g} therefore contributes exactly
$f(P_Y^{\times n}(\mathcal C_n))$ on $\{s_n\ge1\}$; on the shrinking strip
$\{2^{-\kappa_n}\le s_n<1\}$ the value is at most $f(0)$.  Hence
\begin{align}
 \Gamma_{f,P_{XY}^{\times n}}^\downarrow(h_n(x))
 &\le f(P_Y^{\times n}(\mathcal C_n))\\
 &\quad+\frac{f(0)}{P_Y^{\times n}(\mathcal C_n)}
 P_Y^{\times n}\{2^{-\kappa_n}\le s_n(Y^n)<1\}.
 \label{eq:uniform-endpoint-upper-full}
\end{align}
Equations \eqref{eq:uniform-C-probability} and
\eqref{eq:uniform-strip-probability}, together with continuity of $f$, make
the lower and upper bounds converge to $f(q_0)$.  This completes the proof in
both variance regimes.
\end{proof}

\section{Proof of the Main Results}
\label{sec:main-proof}

This section combines the one-shot bounds, water-filling formulas, and
Gaussian limits to prove the main theorem and its fixed-leakage consequence.

\begin{proof}[Proof of the fixed-marginal statement of
Theorem~\ref{thm:main-f}]
Set
\begin{equation}
 x_n:=\frac{nH-\log M_n}{\sqrt{nV}},
 \qquad x_n\to x:=-\frac{L}{\sqrt V}.
 \label{eq:xn}
\end{equation}
For every $\eta>0$, eventually $x-\eta\le x_n\le x+\eta$.  Since
$h_n(x)$ decreases with $x$, the feasible capped set expands with $x$ and
both $\Gamma_{f,P^{\times n}}^\uparrow(h_n(x))$ and the transformed conditional-tail
expectation in \eqref{eq:tail-profile-limit} are nonincreasing in $x$.
Consequently each quantity at $x_n$ lies between its values at $x-\eta$ and
$x+\eta$.  Proposition~\ref{prop:Gaussian-profile} and continuity of
$\Fprof_{f,V_1/V}$ therefore imply the same two limits with $x_n$ in place
of $x$ after $n\to\infty$ and then $\eta\downarrow0$.

\emph{Converse.}
The proof associates each output binning with a subprobability vector capped
by $1/M_n$.  The one-shot converse reduces the leakage to its optimal
$f$-divergence perspective, and conditional water filling followed by the
Gaussian-profile limit evaluates this quantity.  Formally, apply
Lemma~\ref{lem:oneshot-converse} to $P_{XY}^{\times n}$.  By
\eqref{eq:Gamma-limit}, every seeded extractor satisfies
\begin{equation}
 \liminf_{n\to\infty}
 D_f(P_{\varphi_{S_n}(X^n)Y^nS_n}\|
 U_{Z_n}\times P_Y^{\times n}\times P_{S_n})
 \ge\Fprof_{f,V_1/V}(x).
 \label{eq:main-converse}
\end{equation}
Taking the infimum gives the converse for $d_f^\uparrow$.

\emph{Achievability.}
The proof hashes only the light part of each conditional source with a
two-universal family; closeness to the corresponding scaled uniform law and
continuity of $f$ then reduce the leakage to the same conditional Gaussian
profile.  Choose $S_n$ to index a two-universal family and set
$\tau_n=n^{1/4}$.  The light condition in Lemma~\ref{lem:light-achievability}
is exactly
\begin{equation}
 \imath_n(X^n|Y^n)\ge\log M_n+\tau_n
 =h_n(x_n)+\tau_n.
\end{equation}
The modulus-of-continuity error in \eqref{eq:light-achievability} vanishes.
Equation \eqref{eq:tail-profile-limit} therefore gives
\begin{equation}
 \limsup_{n\to\infty}
 D_f(P_{\varphi_{S_n}(X^n)Y^nS_n}\|
 U_{Z_n}\times P_Y^{\times n}\times P_{S_n})
 \le\Fprof_{f,V_1/V}(x).
 \label{eq:main-achievability}
\end{equation}
The bounds coincide.
\end{proof}

\begin{proof}[Proof of the optimized-marginal statement of
Theorem~\ref{thm:main-f}]
Let $x_n$ be as in \eqref{eq:xn}, so $x_n\to x=-L/\sqrt V$.
The same cap monotonicity gives, for every fixed $\eta>0$ and all large $n$,
\begin{equation}
 \Gamma_{f,P^{\times n}}^\downarrow(h_n(x-\eta))
 \ge\Gamma_{f,P^{\times n}}^\downarrow(h_n(x_n))
 \ge\Gamma_{f,P^{\times n}}^\downarrow(h_n(x+\eta)).
 \label{eq:optimized-moving-sandwich}
\end{equation}

\emph{Converse.}
The optimized-reference converse applies water filling to each scaled
conditional source and then uses perspective aggregation to place $f$
outside the side-information average, leaving a single unconditional
Gaussian tail.  Lemma~\ref{lem:oneshot-converse-down},
Proposition~\ref{prop:Gaussian-profile-down}, and continuity of
$f\circ\Phi$, followed by $\eta\downarrow0$, give
\begin{equation}
 \liminf_{n\to\infty}d_f^\downarrow(M_n;P_{XY}^{\times n})
 \ge f(\Phi(x)).
 \label{eq:main-down-converse}
\end{equation}

\emph{Achievability.}
The proof tilts the reference marginal by the conditional light mass, so
two-universal hashing produces a constant multiple of the chosen ideal law
and the remaining quantity is the unconditional surprisal tail.  Choose a
two-universal family, set $\tau_n=n^{1/4}$, and
let
\begin{equation}
 q_n(y^n):=\Prb\{\imath_n(X^n|y^n)
 \ge\log M_n+\tau_n\mid y^n\},
 \qquad
 \bar q_n:=\E q_n(Y^n).
 \label{eq:direct-light-tail-down}
\end{equation}
When $\bar q_n>0$, denote the corresponding tail-tilted reference by
\begin{equation}
 Q_{Y^nS_n}^{\circ}(y^n,s)
 :=\frac{P_Y^{\times n}(y^n)P_{S_n}(s)q_n(y^n)}{\bar q_n}.
 \label{eq:tail-tilted-reference}
\end{equation}
For all sufficiently large $n$, $\bar q_n>0$.  Apply
\eqref{eq:light-achievability-down} to the block source.  Its reference
\eqref{eq:one-shot-tail-tilted-reference} is exactly the law in
\eqref{eq:tail-tilted-reference}, and therefore
 \begin{align}
 &D_f(P_{\varphi_{S_n}(X^n)Y^nS_n}\|
 U_{Z_n}\times Q_{Y^nS_n}^{\circ})\\
 &\qquad\le f(\bar q_n)
 +\omega_f(2^{-\tau_n/4})+f(0)2^{-\tau_n/4}.
 \label{eq:main-down-achievability-bound}
\end{align}
The unconditional tail sandwich used above, now with the shift $\tau_n$, and
\eqref{eq:unconditional-tail-limit} give $\bar q_n\to\Phi(x)$.  Taking the limsup in
\eqref{eq:main-down-achievability-bound} proves the matching upper bound.
\end{proof}

The remaining lemma records the analytic properties used for fixed leakage.

\begin{lemma}
\label{lem:profile-regularity}
For fixed $f$ and $r\in[0,1]$, the map
$x\mapsto\Fprof_{f,r}(x)$ is continuous and nonincreasing, with limits
$f(0)$ and zero at $-\infty$ and $+\infty$, respectively.  If $f$ is strictly
decreasing on $(0,1)$, the profile is strictly decreasing.  Moreover, the
value in \eqref{eq:F-endpoints} at $r=1$ equals the limit of
\eqref{eq:F-profile} as $r\uparrow1$.
\end{lemma}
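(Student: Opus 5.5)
The whole argument will rest on the representation $\Fprof_{f,r}(x)=\E[f(Q_r(G;x))]$ for $0\le r<1$, with the integrand bounded in $[0,f(0)]$. That bound holds because $Q_r\in[0,1]$ and $0\le f(q)\le f(0)$ on $[0,1]$.

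\emph{Continuity and limits in $x$.} For $0<r<1$, the map $x\mapsto Q_r(g;x)$ is continuous and nondecreasing for each fixed $g$. Since $f$ is continuous and nonincreasing on $[0,1]$, dominated convergence (with dominating constant $f(0)$) gives continuity of $\Fprof_{f,r}$, and monotonicity passes through the expectation. As $x\to+\infty$ we have $Q_r(g;x)\to1$ pointwise, so $f(Q_r)\to f(1)=0$. As $x\to-\infty$ we have $Q_r\to0$, so $f(Q_r)\to f(0)$. Dominated convergence again yields the stated limits. The endpoint cases need no averaging: at $r=0$ the profile is $f(\Phi(x))$, and at $r=1$ it is $f(0)\Phi(-x)$, and both are directly continuous, monotone, and have the same limits.

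\emph{Strict monotonicity.} Suppose $f$ is strictly decreasing on $(0,1)$, and take $x<x'$. For $0<r<1$ and every $g$, we have $0<Q_r(g;x)<Q_r(g;x')<1$, because $\Phi$ is strictly increasing with values in $(0,1)$. Hence $f(Q_r(g;x))>f(Q_r(g;x'))$ for every $g$. A strict pointwise inequality on a set of positive Gaussian measure (here all of $\mathbb R$) gives a strict inequality of expectations. The case $r=0$ is immediate. For $r=1$, strictness requires $f(0)>0$; this follows from $f(0)>f(1/2)\ge f(1)=0$.

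\emph{Continuity at $r=1$.} I will repeat the argument sketched before \eqref{eq:F-endpoints}. For $g\neq -x$, which excludes a Gaussian-null set, the argument $(x+\sqrt r\,g)/\sqrt{1-r}$ tends to $\pm\infty$ according to the sign of $x+g$. Therefore $Q_r(g;x)\to\ind\{g>-x\}$, and by continuity of $f$ at $0$ and $1$, $f(Q_r(g;x))\to f(0)\ind\{g<-x\}$. Bounded convergence then gives $\Fprof_{f,r}(x)\to f(0)\Phi(-x)$. The main point requiring care is exactly this null-set exclusion together with the use of continuity of $f$ at the endpoints of $[0,1]$. Everything else is routine dominated convergence.
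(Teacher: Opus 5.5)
Your proposal is correct and follows the paper's proof: dominated convergence with the bounded monotone integrand $f(Q_r(G;x))\in[0,f(0)]$ gives continuity, monotonicity and the limits; the pointwise strict inequality gives strict decrease; the endpoint formulas handle $r\in\{0,1\}$; and the pointwise limit of $Q_r(g;x)$ away from the null set $\{g=-x\}$ gives continuity as $r\uparrow1$. You also spell out that strictness at $r=1$ requires $f(0)>0$ and verify it, a step the paper leaves implicit.
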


\begin{proof}
For $r<1$, $Q_r(g;x)$ is continuous and strictly increasing in $x$ for every
$g\in\mathbb R$.  The assertions follow from continuity, monotonicity, and
boundedness of $f$ on $[0,1]$, together with dominated convergence.  Strict decrease follows
when $f$ is strictly decreasing.  The case $r=1$ follows from
\eqref{eq:F-endpoints}.  The calculation following
\eqref{eq:F-profile} proves the asserted continuity as $r\uparrow1$.
\end{proof}

\begin{proof}[Proof of Corollary~\ref{cor:fixed-f-leakage}]
Put $r=V_1/V$ and define, in parallel for the two marginal treatments,
\begin{align}
 \Psi^\uparrow(x)&:=\Fprof_{f,r}(x),
 &x_\uparrow&:=\Fprof_{f,r}^{-1}(\delta),
 \label{eq:fixed-leakage-profiles-up}\\
 \Psi^\downarrow(x)&:=f(\Phi(x)),
 &x_\downarrow&:=\Phi^{-1}(f^{-1}(\delta)).
 \label{eq:fixed-leakage-profiles-down}
\end{align}
Lemma~\ref{lem:profile-regularity} and the assumptions on $f$ show that both
profiles are continuous and strictly decreasing from $f(0)$ to zero, and
that $\Psi^\star(x_\star)=\delta$ for
$\star\in\{\downarrow,\uparrow\}$.

We will round the two comparison rates explicitly.  First observe that
$V>0$ implies $H>0$.  Indeed, conditional surprisal is nonnegative; if
$H=0$, then it vanishes almost surely.  Every conditional law on its support
is consequently a point mass, which forces both $V_1$ and $V_2$ to vanish,
contrary to $V>0$.

Fix $\eta>0$ and one of the arrows $\star$.  For each $n$, set
\begin{equation}
 k_{n,\pm}^\star
 :=\left\lfloor\max\left\{0,
 nH-\sqrt{nV}\,(x_\star\pm\eta)\right\}\right\rfloor.
 \label{eq:rounded-comparison-lengths}
\end{equation}
Since $H>0$, the maximum with zero is inactive for all sufficiently large
$n$.  The rounding error is less than one, and hence
\begin{equation}
 \frac{k_{n,\pm}^\star-nH}{\sqrt n}
 \longrightarrow-\sqrt V\,(x_\star\pm\eta).
 \label{eq:rounded-comparison-rates}
\end{equation}
Thus the two sequences $M_{n,\pm}^\star:=2^{k_{n,\pm}^\star}$ satisfy the
second-order-rate hypothesis of Theorem~\ref{thm:main-f}.  Applying that
theorem at the two fixed offsets gives
\begin{align}
 d_f^\star(M_{n,+}^\star;P_{XY}^{\times n})
 &\longrightarrow\Psi^\star(x_\star+\eta)<\delta,
 \label{eq:rounded-good-length}\\
 d_f^\star(M_{n,-}^\star;P_{XY}^{\times n})
 &\longrightarrow\Psi^\star(x_\star-\eta)>\delta.
 \label{eq:rounded-bad-length}
\end{align}
The first rounded length is therefore feasible and the second is infeasible
for all sufficiently large $n$.  By the output-length monotonicity established
above, every feasible length is strictly smaller than $k_{n,-}^\star$.
Together with feasibility of $k_{n,+}^\star$, this proves that the admissible
set is nonempty and bounded, that its supremum is attained, and that
\begin{equation}
 k_{n,+}^\star
 \le \ell_f^\star(\delta;P_{XY}^{\times n})
 <k_{n,-}^\star
 \label{eq:integer-length-bracket}
\end{equation}
eventually.  In particular, no convergence uniform in the output length is
being used: $\eta$ is fixed before Theorem~\ref{thm:main-f} is applied.

Divide \eqref{eq:integer-length-bracket} by $\sqrt n$ after subtracting
$nH$, and use \eqref{eq:rounded-comparison-rates}.  This yields
\begin{align}
 -\sqrt V\,(x_\star+\eta)
 &\le\liminf_{n\to\infty}
 \frac{\ell_f^\star(\delta;P_{XY}^{\times n})-nH}{\sqrt n},
 \label{eq:fixed-leakage-liminf}\\
 \limsup_{n\to\infty}
 \frac{\ell_f^\star(\delta;P_{XY}^{\times n})-nH}{\sqrt n}
 &\le-\sqrt V\,(x_\star-\eta).
 \label{eq:fixed-leakage-limsup}
\end{align}
Letting $\eta\downarrow0$ proves convergence to $-\sqrt V\,x_\star$.
Substitution of the two values in
\eqref{eq:fixed-leakage-profiles-up}--\eqref{eq:fixed-leakage-profiles-down}
gives \eqref{eq:f-quantile-expansion} and
\eqref{eq:f-quantile-down-expansion}, respectively.
\end{proof}

\section{Discussion}

The condition $\lim_{t\to\infty}f(t)/t=0$ separates the constant-leakage
second-order profile studied here from qualitatively different regimes.
Forward relative entropy, with generator $t\log t$, and superlinear
divergences such as $\chi^2$ violate this condition.  At the central-limit
rate their leakage need not approach a finite constant; in particular, relative-entropy
leakage has a $\sqrt n$ scaling \cite{hayashi_tan_2017}.  Generators with
$f(0)=\infty$, such as reverse relative entropy, also require a different
direct argument because empty output bins can have infinite cost.  The theorem
is pointwise for each fixed generator.  Approaching the forward
relative-entropy endpoint requires a separate uniform analysis.  The
divergences $D_{f_\alpha}(P\|Q)$ converge to forward relative entropy for
every fixed pair of finite-alphabet laws $P,Q$ as $\alpha\uparrow1$, but this
convergence is not uniform over laws with large likelihood ratios.  The
distinction between the two statements of Theorem~\ref{thm:main-f} is likewise
essential.

As shown in Corollary~\ref{cor:total-variation}, the chosen
total-variation generator is affine on $[0,1]$, so fixing or optimizing the
reference marginal leads to the same second-order profile.  Nonlinear
generators, including the power generators in the R\'enyi specialization,
generally retain the separate dependence on $V_1$ and $V_2$ under the fixed
criterion, but not under the optimized one.

Finite alphabets provide uniform third moments and, through
Fact~\ref{fact:BE}, the conditional anti-concentration needed for water
filling.  Extensions to countable alphabets appear possible under suitable
moment, support-growth, and uniform-integrability hypotheses.  A quantitative
version of Lemma~\ref{lem:f-tail} may also yield third-order control, but that
is separate from the exact second-order profile established here.

\paragraph*{Acknowledgments}
MB acknowledges support from the European Research Council (ERC Grant
Agreement No.~948139) and the Excellence Cluster Matter and Light for Quantum
Computing (ML4Q-2). This work was supported in part by a grant of
access to OpenAI models through the ChatGPT for Academic Researchers program.
HC acknowledges support from National Science and Technology Council (NSTC
115-2628-E-002-005, NSTC 114-2119-M-001-002, and NSTC 115-2124-M-002-014) and
Ministry of Education (NTU-115V2016-1, NTU-CC115L893705, and NTU-115L900702).
MT acknowledges support from the National Research Foundation Investigatorship
Award (NRF-NRFI10-2024-0006) and the National Research Foundation, Singapore,
through the National Quantum Office, hosted by A*STAR, under its Centre for
Quantum Technologies Funding Initiative (S24Q2d0009).

\appendices

\section{Why Fixing and Optimizing the Reference Marginal Differ}
\label{app:marginal-comparison}

The difference between the two criteria can be seen by first fixing a value
$Y^n=y^n$.  Put $x=-L/\sqrt V$ and, at the threshold $h_n(x)$ from
\eqref{eq:hn}, let
\begin{equation}
 q_n(y^n):=\Prb\{\imath_n(X^n|y^n)\ge h_n(x)\mid y^n\}.
 \label{eq:appendix-conditional-tail}
\end{equation}
This is the total probability of the sequences $x^n$ whose conditional
probabilities are at most $2^{-h_n(x)}$.  We refer to these as the
low-probability part of the conditional distribution.  Replacing $h_n(x)$
by the actual output length, or by the slightly shifted threshold used in
the achievability proof, does not change the limit.

It is useful to spell out why $f(q_n(y^n))$ appears.  Write
$p(x^n)=P_{X^n|Y^n}(x^n|y^n)$ and $c_n=2^{-h_n(x)}$.  The one-shot converse
reduces this conditional problem to approximating $p$ by a nonnegative
vector whose total mass is at most one and whose individual entries cannot
exceed $c_n$.  To see where the cap comes from, pull the ideal uniform mass
of each output bin back to the input sequences mapped into that bin.  A bin
has total reference mass $1/M_n$, so no individual sequence can receive more
than that amount.  Lemma~\ref{lem:waterfilling} says that the best vector has entries
$\min\{c_n,t_np(x^n)\}$, where $t_n\ge1$ is chosen to make the total mass
one.  Thus the largest entries of $p$ are flattened at the cap, while the
smaller entries are all multiplied by the same factor.  This is the
water-filling operation used in the proof.

Near the second-order threshold, the conditional surprisal has negligible
probability in any fixed-width interval around the water level.  Consequently,
the uncapped entries have total $p$-mass asymptotic to $q_n(y^n)$ and carry
asymptotically all of the approximating vector's mass.  Normalization then
forces $t_n$ to be close to $1/q_n(y^n)$.  On those entries, the likelihood
ratio between $p$ and the approximating vector is $1/t_n$, hence is close to
$q_n(y^n)$.  Their divergence contribution is therefore
$f(q_n(y^n))$.  The capped entries make a vanishing contribution; this is
where the assumption $f(t)/t\to0$ is used.  In the other direction,
two-universal hashing makes the same low-probability part nearly uniform and
therefore attains the same value.

For the fixed reference marginal, this argument is performed separately for
each $y^n$ and the resulting costs are then averaged with respect to
$P_Y^{\times n}$.  The limiting leakage is consequently determined by
$\E[f(q_n(Y^n))]$.  There are two sources of Gaussian fluctuation.  Given
$Y^n$, the surprisal still fluctuates around its conditional mean, with
asymptotic variance $nV_2$.  The conditional mean itself changes with
$Y^n$: with
$Z_n(Y^n)=(\mu_n(Y^n)-nH)/\sqrt n$, one has
\begin{align}
 q_n(Y^n)&\approx
 \Phi\!\left(\frac{x\sqrt V+Z_n(Y^n)}{\sqrt{V_2}}\right),
 \label{eq:nested-clt}\\
 Z_n(Y^n)&\Longrightarrow G\sim N(0,V_1).
 \label{eq:outer-clt}
\end{align}
The variance $V_2$ thus controls the Gaussian transition for a fixed
conditional distribution, while $V_1$ describes how the location of that
transition changes with the side information.  It follows that
\begin{equation}
 \E[f(q_n(Y^n))]\longrightarrow
 \E_{G\sim N(0,V_1)}
 f\!\left(\Phi\!\left(\frac{x\sqrt V+G}{\sqrt{V_2}}\right)\right)
 =\Fprof_{f,V_1/V}(x).
 \label{eq:appendix-fixed-profile}
\end{equation}
Because $f$ is applied before the average over $Y^n$, this expression can
depend separately on $V_1$ and $V_2$.

Optimizing the reference marginal changes precisely this last step.  Suppose
that the chosen reference assigns probability $Q_{Y^nS_n}(y^n,s)$ to the
public view $(y^n,s)$, whose actual probability is
$P_Y^{\times n}(y^n)P_{S_n}(s)$.  Changing the total reference mass within
this value of $(Y^n,S_n)$ multiplies every likelihood ratio by
$P_Y^{\times n}(y^n)P_{S_n}(s)/Q_{Y^nS_n}(y^n,s)$.  The same water-filling
argument therefore gives the corresponding conditional cost with this
factor, weighted by $Q_{Y^nS_n}(y^n,s)$.  Convexity of $f$ gives
\begin{equation}
 \sum_{y^n,s}Q_{Y^nS_n}(y^n,s)
 f\!\left(\frac{P_Y^{\times n}(y^n)P_{S_n}(s)q_n(y^n)}
 {Q_{Y^nS_n}(y^n,s)}\right)
 \ge f\!\left(\sum_{y^n}P_Y^{\times n}(y^n)q_n(y^n)\right).
 \label{eq:optimized-perspective-jensen}
\end{equation}
The sum on the right is $\E[q_n(Y^n)]$.  Equality is obtained by choosing
$Q_{Y^nS_n}$ as in \eqref{eq:tail-tilted-reference}.  In words, the optimized
reference gives more weight to values of $Y^n$ for which the conditional
distribution has more low-probability mass.  This makes the argument of $f$
the same for every $(y^n,s)$ and aggregates the conditional tail probabilities
before $f$ is applied.

Finally, $\E[q_n(Y^n)]$ is simply the unconditional probability that the
block surprisal exceeds the threshold.  Its variance is the total variance
$V=V_1+V_2$, so the ordinary central limit theorem gives
\begin{equation}
 f(\E[q_n(Y^n)])\longrightarrow f(\Phi(x)).
 \label{eq:appendix-optimized-profile}
\end{equation}
The optimized profile therefore depends only on $V$, even though the two
sources of fluctuation are still present in the underlying distribution.

The endpoint cases make the distinction especially transparent.  If
$V_1=0$, the conditional tail probability has no random displacement, so
averaging before or after applying $f$ gives the same value $f(\Phi(x))$.
If $V_2=0$, the conditional tail probabilities approach zero or one.  The
fixed criterion first applies $f$ to these two values and then averages,
giving $f(0)\Phi(-x)$, whereas optimization first averages the zeros and
ones and gives $f(\Phi(x))$.  The two criteria also agree for every variance
split when $f$ is affine on $[0,1]$, because in that case applying $f$ and
averaging commute; total variation is the relevant example.

\section{Near-Boundary Compatibility with Fixed-Rate Exponents}
\label{app:strong-converse-compatibility}

This appendix records a concise consistency check with the fixed-marginal
strong-converse exponent of Li, Li, and Yu \cite{li_li_yu_2025}.  This is a
comparison of asymptotic regimes, not a new uniform moderate-deviation
statement.  Consistently with the rest of the paper, let
$\alpha\in(0,1)$ be the order of the R\'enyi-divergence security criterion,
and write the auxiliary optimization parameter in their formula as
$\beta\in[\alpha,1]$.  In our notation, their two-parameter conditional
entropy is
\begin{equation}
 \widetilde H_{\beta,\alpha}(X|Y)
 :=
 \begin{dcases}
 \frac{\beta}{\alpha(1-\beta)}\log\sum_yP_Y(y)
 \left(\sum_xP_{X|Y}(x|y)^\beta\right)^{\alpha/\beta}, & \beta \in [\alpha,1),
 \\
 H(X|Y) & \beta = 1.
 \end{dcases}
 \label{eq:LLY-two-parameter-entropy}
\end{equation}
Their Theorem~19 gives, for the criterion with reference marginal $P_Y$,
\begin{equation}
 E_{\mathrm{pa}}^{(\alpha)}(R)
 =\max_{\beta\in[\alpha,1]}
 \frac{\alpha(1-\beta)}{\beta(1-\alpha)}
 \left\{R-\widetilde H_{\beta,\alpha}(X|Y)\right\}.
 \label{eq:LLY-strong-converse-exponent}
\end{equation}
Public randomization does not change this optimum: one may choose the best
deterministic hash among the seeded family.

A direct Taylor expansion of \eqref{eq:LLY-two-parameter-entropy} at
$\beta=1$, with $\beta=1-s$, gives
\begin{equation}
 \widetilde H_{1-s,\alpha}(X|Y)
 =H+\frac{\ln 2}{2}(V_2+\alpha V_1)s+O(s^2),
 \qquad s\downarrow0.
 \label{eq:LLY-entropy-near-one}
\end{equation}
By the monotonicity of $\widetilde H_{\beta,\alpha}(X|Y)$ in $\beta$
\cite[Proposition~4]{li_li_yu_2025} and \eqref{eq:LLY-entropy-near-one},
$\widetilde H_{\beta,\alpha}(X|Y)>H$ for every $\beta<1$ when $V>0$;
hence any maximizer in \eqref{eq:LLY-strong-converse-exponent} approaches
$\beta=1$ as $R\downarrow H$.
Consequently, writing $R=H+\varepsilon$ and $s=1-\beta$, the maximizing
$s$ is $O(\varepsilon)$, and maximizing the quadratic approximation in
\eqref{eq:LLY-strong-converse-exponent} yields
\begin{equation}
 E_{\mathrm{pa}}^{(\alpha)}(H+\varepsilon)
 =\frac{\alpha}{2(1-\alpha)\ln 2\,(V_2+\alpha V_1)}\,
 \varepsilon^2+o(\varepsilon^2),
 \qquad \varepsilon\downarrow0.
 \label{eq:LLY-exponent-near-boundary}
\end{equation}

The parameters $\varepsilon$ and $L$ describe the same excess output length
on different scales.  Indeed, a rate $R=H+\varepsilon$ gives an excess of
$n\varepsilon$ bits, while the second-order parametrization gives an excess
of $L\sqrt n$ bits.  Thus $\varepsilon=L/\sqrt n$, or equivalently
$n\varepsilon=L\sqrt n$.
A fixed $\varepsilon>0$ therefore corresponds to $L$ growing on the order of
$\sqrt n$.  The overlap between the two asymptotic regimes is obtained by
taking $L=L_n\to\infty$ but $L_n=o(\sqrt n)$, and hence
$\varepsilon_n=L_n/\sqrt n\downarrow0$.  Evaluating
\eqref{eq:LLY-exponent-near-boundary} on this common scale gives
\begin{equation}
 nE_{\mathrm{pa}}^{(\alpha)}
 \!\left(H+\frac{L_n}{\sqrt n}\right)
 =\frac{\alpha}{2(1-\alpha)\ln 2\,(V_2+\alpha V_1)}\,
 L_n^2+o(L_n^2).
 \label{eq:LLY-exponent-common-scale}
\end{equation}

The same expression is the far-right tail of our fixed-marginal second-order
profile.  Indeed, the Gaussian tail estimate for the profile in
\eqref{eq:G-profile}, including its two endpoint cases, is
\begin{equation}
 \log\Gprof_{\alpha,V_1/V}\!\left(-\frac{L}{\sqrt V}\right)
 =-\frac{\alpha L^2}{2\ln 2\,(V_2+\alpha V_1)}+o(L^2),
 \qquad L\to\infty.
 \label{eq:G-profile-large-L}
\end{equation}
Corollary~\ref{cor:renyi} therefore implies
\begin{equation}
 -\frac{1}{1-\alpha}
 \log\Gprof_{\alpha,V_1/V}\!\left(-\frac{L}{\sqrt V}\right)
 =\frac{\alpha}{2(1-\alpha)\ln 2\,(V_2+\alpha V_1)}\,
 L^2+o(L^2).
 \label{eq:second-order-profile-large-L}
\end{equation}
Equations~\eqref{eq:LLY-exponent-common-scale} and
\eqref{eq:second-order-profile-large-L} now have the same parameter and the
same coefficient.  This is the claimed compatibility.  It remains an
iterated-limit comparison: Li, Li, and Yu establish their operational result
at each fixed $R>H$, whereas Corollary~\ref{cor:renyi} establishes ours at
each fixed $L$.  Neither theorem alone supplies the uniformity needed to turn
the common-scale calculation into a moderate-deviation theorem, and the
fixed-rate exponent does not determine our full Gaussian profile.  Their
result uses the actual side-information marginal, so it does not address our
optimized-marginal criterion.


\begin{thebibliography}{99}
\enlargethispage{2\baselineskip}

\bibitem{csiszar_1967}
I. Csisz\'ar, ``Information-type measures of difference of probability
distributions and indirect observations,'' \emph{Studia Sci. Math. Hungar.},
vol. 2, pp. 299--318, 1967.

\bibitem{gallager_1968}
R. G. Gallager, \emph{Information Theory and Reliable Communication}.
New York, NY, USA: Wiley, 1968.

\bibitem{carter_wegman_1979}
J. L. Carter and M. N. Wegman, ``Universal classes of hash functions,''
\href{https://doi.org/10.1016/0022-0000(79)90044-8}{\emph{J. Comput. Syst.
Sci.}, vol. 18, no. 2, pp. 143--154}, Apr. 1979.

\bibitem{bennett_brassard_robert_1988}
C. H. Bennett, G. Brassard, and J.-M. Robert, ``Privacy amplification by
public discussion,''
\href{https://doi.org/10.1137/0217014}{\emph{SIAM J. Comput.}, vol. 17,
no. 2, pp. 210--229}, Apr. 1988.

\bibitem{impagliazzo_levin_luby_1989}
R. Impagliazzo, L. A. Levin, and M. Luby, ``Pseudo-random generation from
one-way functions,'' in
\href{https://doi.org/10.1145/73007.73009}{\emph{Proc. 21st Annu. ACM Symp.
Theory Comput. (STOC)}}, Seattle, WA, USA, May 1989, pp. 12--24.

\bibitem{vembu_verdu_1995}
S. Vembu and S. Verd\'u, ``Generating random bits from an arbitrary source:
Fundamental limits,''
\href{https://doi.org/10.1109/18.412679}{\emph{IEEE Trans. Inf. Theory},
vol. 41, no. 5, pp. 1322--1332}, Sep. 1995.

\bibitem{bennett_brassard_crepeau_maurer_1995}
C. H. Bennett, G. Brassard, C. Cr\'epeau, and U. M. Maurer, ``Generalized
privacy amplification,''
\href{https://doi.org/10.1109/18.476316}{\emph{IEEE Trans. Inf. Theory},
vol. 41, no. 6, pp. 1915--1923}, Nov. 1995.

\bibitem{hayashi_2008}
M. Hayashi, ``Second-order asymptotics in fixed-length source coding and
intrinsic randomness,''
\href{https://doi.org/10.1109/TIT.2008.928985}{\emph{IEEE Trans. Inf. Theory},
vol. 54, no. 10, pp. 4619--4637}, Oct. 2008.

\bibitem{watanabe_hayashi_2013}
S. Watanabe and M. Hayashi, ``Non-asymptotic analysis of privacy amplification
via R\'enyi entropy and inf-spectral entropy,'' in
\href{https://doi.org/10.1109/ISIT.2013.6620720}{\emph{Proc. IEEE Int. Symp.
Inf. Theory (ISIT)}}, Istanbul, Turkey, Jul. 2013, pp. 2715--2719.

\bibitem{tomamichel_hayashi_2013}
M. Tomamichel and M. Hayashi, ``A hierarchy of information quantities for
finite block length analysis of quantum tasks,''
\href{https://doi.org/10.1109/TIT.2013.2276628}{\emph{IEEE Trans. Inf. Theory},
vol. 59, no. 11, pp. 7693--7710}, Nov. 2013.

\bibitem{nomura_han_2013}
R. Nomura and T. S. Han, ``Second-order resolvability, intrinsic randomness,
and fixed-length source coding for mixed sources: Information spectrum
approach,''
\href{https://doi.org/10.1109/TIT.2012.2215836}{\emph{IEEE Trans. Inf. Theory},
vol. 59, no. 1, pp. 1--16}, Jan. 2013.

\bibitem{hayashi_tan_2017}
M. Hayashi and V. Y. F. Tan, ``Equivocations, exponents, and second-order coding
rates under various R\'enyi information measures,''
\href{https://doi.org/10.1109/TIT.2016.2636154}{\emph{IEEE Trans. Inf. Theory},
vol. 63, no. 2, pp. 975--1005}, Feb. 2017.

\bibitem{hayashi_tan_correction_2024}
M. Hayashi and V. Y. F. Tan, ``Corrections to `Equivocations, exponents, and
second-order coding rates under various R\'enyi\hspace{-1.2pt}\ information measures',''
\href{https://doi.org/10.1109/TIT.2024.3355506}{\emph{IEEE Trans. Inf. Theory},
vol. 70, no. 4, pp. 3046--3048}, Apr. 2024.

\bibitem{uyematsu_matsuta_2017}
T. Uyematsu and T. Matsuta, ``Second-order intrinsic randomness for correlated
non-mixed and mixed sources,''
\href{https://doi.org/10.1587/transfun.E100.A.2615}{\emph{IEICE Trans. Fundam.
Electron. Commun. Comput. Sci.}, vol. E100-A, no. 12, pp. 2615--2628}, Dec. 2017.

\bibitem{hayashi_watanabe_2016}
M. Hayashi and S. Watanabe, ``Uniform random number generation from Markov
chains: Non-asymptotic and asymptotic analyses,''
\href{https://doi.org/10.1109/TIT.2016.2530084}{\emph{IEEE Trans. Inf. Theory},
vol. 62, no. 4, pp. 1795--1822}, Apr. 2016.

\bibitem{hayashi_2016}
M. Hayashi, ``Security analysis of $\varepsilon$-almost dual universal$_2$
hash functions: Smoothing of min entropy versus smoothing of R\'enyi entropy
of order 2,''
\href{https://doi.org/10.1109/TIT.2016.2535174}{\emph{IEEE Trans. Inf. Theory},
vol. 62, no. 6, pp. 3451--3476}, Jun. 2016.

\bibitem{renes_2018}
J. M. Renes, ``On privacy amplification, lossy compression, and their duality
to channel coding,''
\href{https://doi.org/10.1109/TIT.2018.2865386}{\emph{IEEE Trans. Inf. Theory},
vol. 64, no. 12, pp. 7792--7801}, Dec. 2018.

\bibitem{yu_tan_2019}
L. Yu and V. Y. F. Tan, ``Simulation of random variables under R\'enyi
divergence measures of all orders,''
\href{https://doi.org/10.1109/TIT.2019.2891363}{\emph{IEEE Trans. Inf. Theory},
vol. 65, no. 6, pp. 3349--3383}, Jun. 2019.

\bibitem{anshu_berta_jain_tomamichel_2020}
A. Anshu, M. Berta, R. Jain, and M. Tomamichel, ``Partially smoothed
information measures,''
\href{https://doi.org/10.1109/TIT.2020.2981573}{\emph{IEEE Trans. Inf. Theory},
vol. 66, no. 8, pp. 5022--5036}, Aug. 2020.

\bibitem{abdelhadi_renes_2020}
M. Abdelhadi and J. M. Renes, ``On the second-order asymptotics of the
partially smoothed conditional min-entropy and application to quantum
compression,''
\href{https://doi.org/10.1109/JSAIT.2020.3016899}{\emph{IEEE J. Sel. Areas
Inf. Theory}, vol. 1, no. 2, pp. 416--423}, Aug. 2020.

\bibitem{nomura_2020}
R. Nomura, ``Source resolvability and intrinsic randomness: Two random number
generation problems with respect to a subclass of $f$-divergences,''
\href{https://doi.org/10.1109/TIT.2020.3009208}{\emph{IEEE Trans. Inf. Theory},
vol. 66, no. 12, pp. 7588--7601}, Dec. 2020.

\bibitem{nomura_yagi_2024}
R. Nomura and H. Yagi, ``Optimum achievable rates in two random number
generation problems with $f$-divergences using smooth R\'enyi entropy,''
\href{https://doi.org/10.3390/e26090766}{\emph{Entropy}, vol. 26, no. 9,
Art. no. 766}, Sep. 2024.

\bibitem{li_yao_2024}
K. Li and Y. Yao, ``Operational interpretation of the sandwiched R\'enyi
divergence of order $1/2$ to $1$ as strong converse exponents,''
\href{https://doi.org/10.1007/s00220-023-04890-8}{\emph{Commun. Math. Phys.},
vol. 405, no. 2, Art. no. 22}, Feb. 2024.

\bibitem{shen_gao_cheng_2022}
Y.-C. Shen, L. Gao, and H.-C. Cheng, ``Strong converse for privacy
amplification against quantum side information,'' in
\href{https://doi.org/10.1109/ISIT50566.2022.9834467}{\emph{Proc. IEEE Int.
Symp. Inf. Theory (ISIT)}}, Espoo, Finland, Jun. 2022, pp. 1880--1885.

\bibitem{shen_gao_cheng_2024}
Y.-C. Shen, L. Gao, and H.-C. Cheng, ``Optimal second-order rates for quantum
soft covering and privacy amplification,''
\href{https://doi.org/10.1109/TIT.2024.3351963}{\emph{IEEE Trans. Inf. Theory},
vol. 70, no. 7, pp. 5077--5091}, Jul. 2024.

\bibitem{li_li_yu_2025}
S.-B. Li, K. Li, and L. Yu, ``Two-parameter R\'enyi information quantities
with applications to privacy amplification and soft covering,''
\href{https://doi.org/10.1109/TIT.2026.3698938}{\emph{IEEE Trans. Inf. Theory},
vol. 72, no. 8, pp. 5370--5391}, Aug. 2026.

\bibitem{rubboli_goodarzi_tomamichel_2024}
R. Rubboli, M. M. Goodarzi, and M. Tomamichel, ``Quantum conditional
entropies from convex trace functionals,''
\href{https://doi.org/10.1007/s00220-026-05625-1}{\emph{Commun. Math. Phys.}},
vol. 407, no. 180, Aug. 2026.

\bibitem{berta_yao_2025}
M. Berta and Y. Yao, ``Strong converse exponents of partially smoothed
information measures,'' arXiv:2505.06050, 2025. [Online]. Available:
\url{https://arxiv.org/abs/2505.06050}

\bibitem{rubboli_haapasalo_tomamichel_2026}
R. Rubboli, E. Haapasalo, and M. Tomamichel, ``A complete characterisation of
conditional entropies,'' arXiv:2601.23213, 2026. [Online]. Available:
\url{https://arxiv.org/abs/2601.23213}

\bibitem{regula_tomamichel_2026}
B. Regula and M. Tomamichel, ``Rethinking quantum smooth entropies: Tight
one-shot analysis of quantum privacy amplification,'' arXiv:2603.04493, 2026.
[Online]. Available: \url{https://arxiv.org/abs/2603.04493}

\bibitem{chen_shao_2001}
L. H. Y. Chen and Q.-M. Shao, ``A non-uniform Berry--Esseen bound via Stein's
method,''
\href{https://doi.org/10.1007/PL00008782}{\emph{Probab. Theory Relat. Fields},
vol. 120, no. 2, pp. 236--254}, Jun. 2001.

\bibitem{de_moura_ullrich_2021}
L. de Moura and S. Ullrich, ``The Lean 4 theorem prover and programming
language,'' in \emph{Automated Deduction---CADE 28}, ser. Lecture Notes in
Computer Science, vol. 12699. Cham, Switzerland: Springer, 2021,
pp. 625--635.

\bibitem{mathlib_2020}
The mathlib Community, ``The Lean mathematical library,'' in \emph{Proc. 9th
ACM SIGPLAN Int. Conf. Certified Programs and Proofs (CPP)}, New Orleans, LA,
USA, Jan. 2020, pp. 367--381.

\bibitem{berta_cheng_tomamichel_lean_2026}
M. Berta, H.-C. Cheng, and M. Tomamichel, Lean formalization accompanying
``Second-Order Expansion of Privacy Amplification Under $f$-Divergence
Criteria,'' GitHub repository, archived on
\href{https://doi.org/10.5281/zenodo.22084018}{Zenodo} (2026).

\end{thebibliography}
\end{document}